\newcommand{\ket}[1]{\vert#1\rangle}
\newcommand{\dket}[1]{\vert#1\rangle\hspace{-.8mm}\rangle}
\newcommand{\bra}[1]{\langle#1\vert}
\newcommand{\dbra}[1]{\langle\hspace{-.8mm}\langle #1\vert}
\newcommand{\ketbra}[2]{\vert #1 \rangle \hspace{-.4mm} \langle #2 \vert}
\newcommand{\dketbra}[2]{\vert #1 \rangle \hspace{-.8mm} \rangle \hspace{-.4mm} \langle\hspace{-.8mm}\langle #2 \vert}
\newcommand{\id}{\mathds{1}}
\DeclareMathOperator{\tr}{tr}
\newcommand*\dif{\mathop{}\!\mathrm{d}}
\newcommand{\mean}[1]{\left\langle#1\right\rangle}
\newcommand{\ie}{\textit{i.e.}}
\newcommand{\eg}{\textit{e.g.}}
\renewcommand{\H}{\mathcal{H}}
\renewcommand{\L}{\mathcal{L}}
\newcommand{\SU}{\mathcal{SU}}
\DeclarePairedDelimiter{\ceil}{\lceil}{\rceil}
\DeclarePairedDelimiter{\floor}{\lfloor}{\rfloor}
\newcommand{\doublewidetilde}[1]{{%
  \mathpalette\double@widetilde{#1}%
}}
\newcommand{\double@widetilde}[2]{%
  \sbox\z@{$\m@th#1\widetilde{#2}$}%
  \ht\z@=.9\ht\z@
  \widetilde{\box\z@}%
}
\newcommand{\map}[1]{\widetilde{#1}}  
\newcommand{\set}[1]{\mathsf{#1}}
\newtheorem{definition}{Definition}
\newtheorem{theorem}{Theorem}
\newtheorem{lemma}{Lemma}
\newtheorem{proposition}{Proposition}
\newcommand{\univie}{Faculty of Physics, University of Vienna, Boltzmanngasse 5, 1090 Vienna, Austria}
\newcommand{\IQOQI}{Institute for Quantum Optics and Quantum Information (IQOQI), Austrian
Academy of Sciences, Boltzmanngasse 3, A-1090 Vienna, Austria}
\begin{document}
\title{Deterministic transformations between unitary operations: 
Exponential advantage with adaptive quantum circuits and the power of indefinite causality
}

\author{Marco T\'ulio Quintino}
\orcid{0000-0003-1332-3477}
\affiliation{\footnotesize\univie}
\affiliation{\IQOQI}
%\affiliation{\tiny \todai}
\author{Daniel Ebler}
\orcid{0000-0003-2696-8354}
\affiliation{Huawei Hong Kong Research Center, Hong Kong SAR, P. R. China}
\affiliation{Department of Computer Science, The University of Hong Kong, Pokfulam Road, Hong Kong}
\date{28th March 2022}

\begin{abstract}
	This work analyses the performance of quantum circuits and general processes to transform $k$ uses of an arbitrary unitary operation $U$ into another unitary operation $f(U)$. When the desired function $f$ a homomorphism, \ie, $f(UV)=f(U)f(V)$, it is known that optimal average fidelity is attainable by parallel circuits and indefinite causality does not provide any advantage. Here we show that the situation changes dramatically when considering anti-homomorphisms, \ie, $f(UV)=f(V)f(U)$. In particular, we prove that when $f$ is an anti-homomorphism, sequential circuits could exponentially outperform parallel ones and processes with indefinite causal order could outperform sequential ones. We presented explicit constructions on how to obtain such advantages for the unitary inversion task $f(U)=U^{-1}$ and the unitary transposition task $f(U)=U^T$. We also establish a one-to-one connection between three apparently different problems: unitary estimation, parallel unitary transposition, and parallel unitary inversion, allowing one to easily import results from one field to the other. Finally, we apply our results to several concrete problem instances and present a method based on computer-assisted proofs to show optimality.
\end{abstract}
\maketitle

\begin{figure}[h!] 
	\begin{center}
		\includegraphics[scale=0.45]{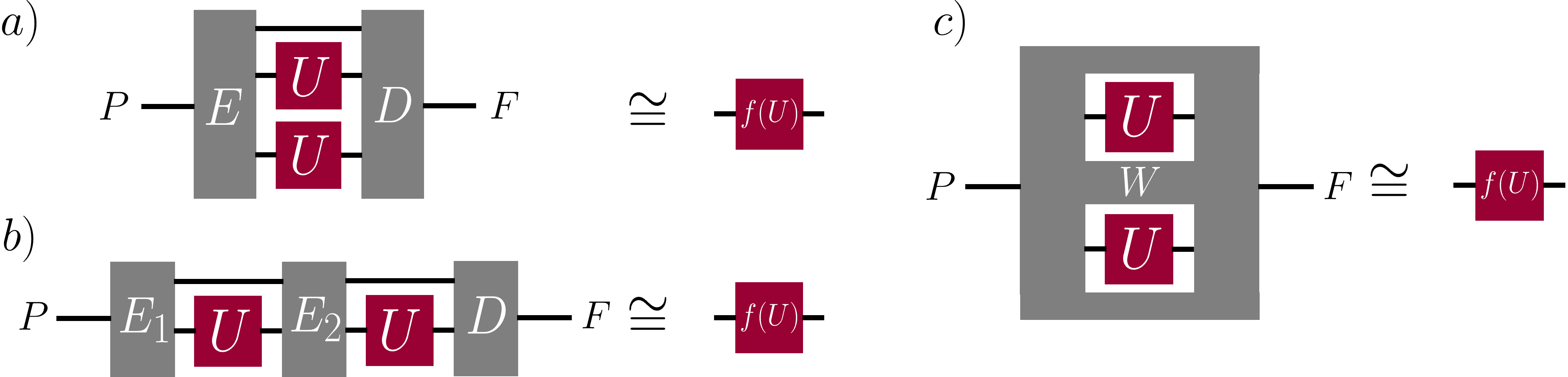} 
	\end{center}
%\begin{minipage}{\textwidth}
\caption{Three different strategies to transform $k=2$ uses of an unknown unitary operation $U$ into $f(U)$. $a)$ parallel circuit, $E$ and $D$ stand for fixed operations (encoder and decoder); $b)$  sequential circuit with multiple encoder operations; and $c)$ general processes acting on $U$ may not have a definite causal order. Here, we analyse the performance of these strategies for different functions $f$. } \label{fig:problem}
\end{figure}

\clearpage
\tableofcontents
\newpage

\section{Introduction}
    Classical computer science is based upon the concept of perfect control. The states, described by discrete bit values 0 or 1, can be prepared, copied and read off at will. Manipulations of input bit strings succeed by executing operations with functional descriptions $h$. The same holds true for the transformations thereof: it is natural to take a function $h$ -- which can be of arbitrary complexity -- and utilise it as a subroutine in a larger code $f$, leading to an assignment $f(h)$. Notably, the body $h$ of such code $f$ can be partially or even fully unknown~\cite{bird1988functional}. For instance, conditional statements, such as if-clauses, are usually defined independently of the body, leading to large modularity of implementations.

    These seemingly straightforward principles have proven highly problematic for computation in the quantum domain. Early studies discovered limitations on state manipulation such as the impossibility of copying quantum states~\cite{wootters82}, the impossibility of constructing universal not gates~\cite{buzek99}, and tradeoff relations between information gain and state disturbance~\cite{fuchs98information}. More recently, restrictions on manipulating operations have also been identified, such as the  the impossibility of gate replication \cite{chiribella08clone}, the impossibility of an if-clause for unknown operations \cite{araujo14control,bisio15control,dong19control,gavorova20control}, and the impossibility of iterating unknown gates \cite{soleimanifar16}.
    
    A natural path around no-go theorems of quantum information processing is to utilise multiple instances of the input resources. For instance, a quantum state can be cloned with a higher fidelity rate when multiple copies are available~\cite{werner98,bruss98clone} and positive but not completely positive operations such as the universal not gate can be better approximated by consuming multiple copies~\cite{buzek99,dong18}. When transforming an unknown unitary $U$ in accordance with a function $f$, multiple copies, or simply, multiple calls to the same unitary $U$, may be utilised to perform transformations $f(U)$ with higher fidelity than a single use of $U$. This introduces the freedom to arrange the multiple uses of $U$ in different configurations. One could, for instance, apply all uses in parallel or concatenate them in a sequence such that the output of one use serves as the input of the next. From a more general perspective, one could even consider a process where the operations are used without respecting a definite causal order~\cite{chiribella09_switch,oreshkov11}.
    
     A protocol realizing a function $f(U)$ with multiple uses of the operation $U$ is usually categorised as either deterministic or probabilistic exact: the former describes the situation when the output is always accepted, yielding usually approximations to $f(U)$, while the latter implements the transformation exactly with a certain probability of success, maintaining the option to discard failed attempts. The tasks of deterministic unitary cloning~\cite{chiribella08clone}, deterministic unitary complex conjugation~\cite{miyazaki17,ebler16}, and deterministic unitary inversion~\cite{ebler16} were analysed in detail for the particular scenario where a single use of the input operation is available. When considering multiple uses, optimal cloning of unitary operations~\cite{chiribella08clone}, estimating~\cite{chiribella05estimation,chiribella08memory_effects} and learning~\cite{bisio10} unitary operations, and transformations between different representations of the same group~\cite{bisio13} can always be performed in parallel -- suggesting that sequential strategies may not lead to a substantial improvement in performance. In the probabilistic scenario, sequential circuits are known to exponentially outperform parallel ones for unitary inversion~\cite{quintino19PRL,feng20inversion} and unitary transposition~\cite{quintino19PRA}. However, up to our work, the true potential of sequential deterministic strategies remained unexplored.

    In this work, we develop further insights regarding deterministic transformations of unknown unitary actions $U$ into operations $f(U)$. Qualitatively, our results include novel understandings about an exponential performance gap between parallel and sequential implementations for certain classes of operations. This complements previous investigations, which found that parallel implementations of $U$ are optimal for the restricted class of functions $f$ mapping representations of a compact group $G$ to other representations of the same group \cite{bisio13}. In particular, Ref.~\cite{bisio13} considers functions $f$ which are a homomorphism, \ie, $f(UV)=f(U)f(V)$ for every unitary operator $U$ and $V$. In order to overcome this assumption, we analyse functions which are anti-homomorphism,\ie, $f(UV)=f(V)f(U)$. Particularly, for unitary inversion ($f(U)= U^{-1}$) and unitary transposition ($f(U)=U^T$) we show that the hypothesis of Ref.~\cite{bisio13} do not apply, and sequential strategies provide an exponential improvement in performance. Finally, we investigate whether indefinite causal structures can further boost the performance of implementations. Despite the fact that the popular quantum switch \cite{chiribella09_switch} and its generalisations cannot outperform causally ordered strategies when $k$ uses of the same unitary are considered~\cite{bavaresco21b}, we show that other indefinite causal structures indeed yield a strict advantage over any causally ordered implementation.

%%%%%%%%%%%%%%%%%%%%%%%%%%%%%%%%%%%%%%%%%%%%%%%%%%%%%%%%%%%%%%%%%%%%%%%%%%%%%%%%%%%%%%%%%%%%%%%%%%%%%%%%%%%%%%%%%
%%%%%%%%%%%%%%%%%%%%%%%%%%%%%%%%%%%%%%%%%%%%%%%%%% NEW SECTION %%%%%%%%%%%%%%%%%%%%%%%%%%%%%%%%%%%%%%%%%%%%%%%%%%
%%%%%%%%%%%%%%%%%%%%%%%%%%%%%%%%%%%%%%%%%%%%%%%%%%%%%%%%%%%%%%%%%%%%%%%%%%%%%%%%%%%%%%%%%%%%%%%%%%%%%%%%%%%%%%%%%

\section{Notation and mathematical preliminaries}
			In this section, we revise the mathematical framework to analyse transformations between quantum operators. This formalism, also referred as higher-order quantum operations~\cite{perinotti16higher,bisio19higher}, was developed in various references with different perspectives and motivations~\cite{kretschmann05,chiribella07,gutoski07,zyczkowski08quartic,chiribella09networks,pollock15markovian}. Also, its mathematical formulation have proven to be rich enough to provide novel insights to the foundations of quantum theory and general probabilistic theories \cite{chiribella11informational,chiribella09purification},  causality~\cite{costa16causal,ried15causal,feix17causal,nery21CCDC}, indefinite causality~\cite{oreshkov11,araujo16}, quantum network design~\cite{chiribella2012optimal,ebler16}, and quantum stochastic processes~\cite{milz21markov}.
	
	\subsection{Basic quantum elements and the Choi isomorphism}
	We now list basic definitions and establish the notation which will be used along the paper.	
\begin{itemize}
\item $\H$ stands for complex linear (Hilbert) spaces of finite dimension, \ie, $\H\cong\mathbb{C}_d$
\item The \textit{Choi vector} $\dket{U}\in \H_\text{in}\otimes\H_\text{out}$ of a linear operator $U:\H_\text{in}\to\H_\text{out}$ is defined as
\begin{equation}
	\dket{U}:=\sum_i \ket{i}_\text{in}\otimes (U\ket{i})_\text{out} 
\end{equation}
where $\{\ket{i}\}_i$ is the computational basis
%\item The Choi vector of the identity operator $\id$, is $\dket{\id}=\sum_i \ket{ii}$. For any linear operator 
%${U:\H_\text{in}\to\H_\text{out}}$, it holds that
%\begin{equation}
%	\dket{U}=\id\otimes U \dket{\id}
%\end{equation}
%and 
%\begin{equation}
%	\id\otimes U \dket{\id} = U^T\otimes \id \dket{\id},
%\end{equation}
%where $^T$ stands for transposition on the computational basis
%where $\{\ket{i}\}_i$ is the computational basis.
\item $\mathcal{L}(\H)$ stands for the set of linear operators acting on $\H$.
Linear transformations between operators are referred to as \textit{linear maps} and marked with an upper tilde, \eg, ${\map{C}:\mathcal{L}(\H_\text{in})\to\mathcal{L}(\H_\text{out})}$
\item The \textit{Choi operator} $C\in \mathcal{L}(\H_\text{in}\otimes\H_\text{out})$ of $\map{C}$ of a linear map $\map{C}:\mathcal{L}(\H_\text{in})\to\mathcal{L}(\H_\text{out})$ is defined as
\begin{equation}
	C:=\sum_{ij}\ketbra{i}{j}\otimes \map{C}(\ketbra{i}{j}).
\end{equation}
\end{itemize}
	
	Quantum states represent our knowledge of quantum systems, and deterministic transformations between quantum states are described by quantum channels. 
	 
\begin{itemize}
\item A quantum state is a linear operators $\rho\in\L(\H)$ which is positive semidefinite, $\rho\geq0$, and has unity trace, $\tr(\rho)=1$
\item  A quantum channel is a linear map 
$\map{C}:\L(\H_\text{in}) \to\L(\H_\text{out})$ which are completely positive and trace preserving. In Choi operator notation these constraints correspond to
\begin{align}
C\geq0  &\iff \map{C} \text{ is completely positive} \nonumber \\
\tr_\text{out}(C)=\id_\text{in} &\iff \map{C} \text{ is trace preserving}
\end{align}
\item A quantum channel $\map{C}$ is unitary if it can be written as $\map{C}(\rho)=U\rho U^\dagger$ where $U:\H_\text{in}\to\H_\text{out}$ is a unitary operator, \eg, $UU^\dagger=U^\dagger U=\id$. The Choi of a unitary channel $\map{C}(\rho)=U\rho U^\dagger$ can be written as $C=\dketbra{U}{U}\in\L(\H_\text{in}\otimes\H_\text{out})$
\end{itemize}
	Unitary quantum channels form a very important class of transformations, they represent reversible quantum transformations, describe dynamics in closed quantum systems and quantum gates.

	When dealing with Choi operators, composition of linear maps can be conveniently expressed in terms of the \textit{link product} \cite{chiribella07}. In particular, with the link product, denoted by $*$, we can write
%\begin{align}
%\text{Map: } \map{A}:\L(\H_1)\to\L(\H_2), \; \map{B}:\L(\H_2)\to\L(\H_3); \quad \quad 
%\text{Choi operator: } A\in\mathcal{L}(\H_1\otimes\H_2), \; B\in\mathcal{L}(\H_2\otimes\H_3);
%\end{align}
\begin{alignat}{2}
&\textbf{Map: } \map{A}:\L(\H_1)\to\L(\H_2) \quad\quad\quad\quad &&\textbf{Choi: } A\in\mathcal{L}(\H_1\otimes\H_2)\nonumber \\ 
&\textbf{Map: } \map{B}:\L(\H_2)\to\L(\H_3) \quad\quad\quad\quad &&\textbf{Choi: } B\in\mathcal{L}(\H_2\otimes\H_3) \nonumber \\
&\textbf{Map: } \map{C}:=\map{B}\circ \map{A} \quad\quad\quad\quad && \textbf{Choi: } C=B*A, 
\end{alignat}		
where the link product $A*B$ is defined as 
\begin{align}
 %	A*B\in & \L(\H_1\otimes\H_3) \\
	A*B:=&\tr_2\left( [A^{T_2} \otimes \id_3] \,[ \id_1 \otimes B ]\right) %\\
	% \in & \L(\H_1\otimes\H_3) % \tr_2\left( [A^{T_2}\otimes \id] \,[ \id \otimes B ]\right)
\end{align}
with $A^{T_2}$ being the partial transposition on the linear space $\H_2$. If we keep track on the linear spaces where the operators act, the link product is commutative, $A*B=B*A$, and associative, $A*(B*C)=(A*B)*C$. As shown in the next subsections, these properties will be very useful to represent ``incomplete'' quantum circuits and quantum superchannels. 

\subsection{1-slot quantum superchannels}	\label{sec:1slot}
\begin{figure}[h!] 
	\begin{center}
		\includegraphics[scale=0.55]{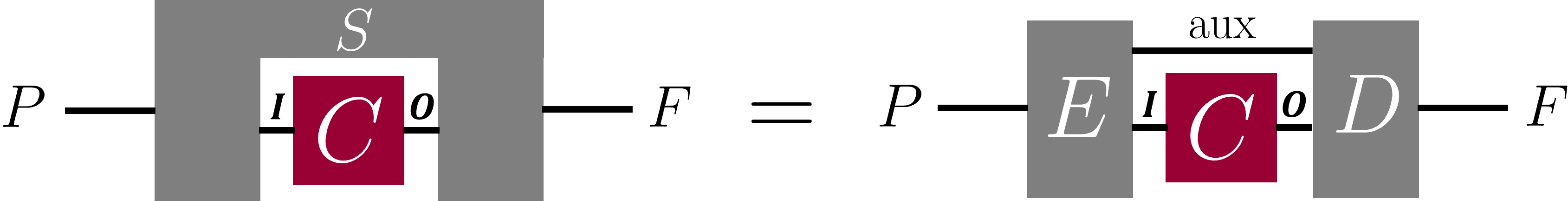} 
	\end{center}
\caption{Circuit illustration of a 1-slot quantum superchannel. Quantum operations from the input space $I$ to the output space $O$ with Choi operator $C\in\L(\H_{{I}}\otimes\H_{{O}})$ may be ``plugged'' into the slot to obtain another quantum operation from past space $P$ to future space $F$ channel described by $S*C\in\L(\H_P\otimes\H_F)$. Every quantum superchannel can be realised by composing an encoder quantum channel $E\in\L(\H_P\otimes\H_{{I}}\otimes\H_\text{aux})$ and a decoder quantum channel $D\in\L(\H_{O}\otimes\H_\text{aux}\otimes\H_F)$ such that $S=E*D$. } \label{fig:1_slot}
\end{figure}
	
	We now review the concept of a $1$-slot \textit{quantum superchannel}\footnote{One slot quantum superchannels are also known under the name of $1$-slot quantum combs \cite{chiribella07} and are equivalent to $2$-turn quantum strategies \cite{gutoski07} and a bipartite channel with memory \cite{kretschmann05}.}, a quantum process that transforms quantum channels into quantum channels \cite{chiribella08}. Before presenting formal definitions, let us illustrate the idea of a superchannel by considering a quantum circuit composed by concatenation of three quantum channels described as following (see Fig.~\ref{fig:1_slot}). 
\begin{enumerate}
\item We start a quantum circuit with a quantum channel $\map{E}$ which we refer as encoder. The encoder channel transforms states from a ``past'' system, denoted as $\H_P$ to a composed, $\H_{{I}}\otimes\H_\text{aux}$ where $\H_{{I}}$ stands for input space and $\H_\text{aux}$ auxiliary. 

At this stage, the circuit is described by $\map{E}$.
%%%%%%%%%%%%%%%%%%%%%%%
\item Secondly, we introduce an arbitrary quantum channel $\map{C}$ which transforms states from the input space $\H_{{I}}$ to an output space $\H_{{O}}$. At this point, the auxiliary space $\H_\text{aux}$ stays untouched. 

At this stage the circuit is described by $\left(\map{\id}_\text{aux} \otimes \map{C}\right) \circ \map{E}$
%%%%%%%%%%%%%%%%%%%%
\item Finally, we perform quantum channel $\map{D}$ which we refer as decoder. The decoder channel transforms states from the composed space $\H_{{O}}\otimes\H_\text{aux}$ to a ``future'' space $\H_F$.

At this stage the circuit is described by $\map{D}\circ\left(\map{\id}_\text{aux} \otimes \map{C}\right) \circ \map{E}$
\end{enumerate}
The mathematical description of the circuit elements can be summarised by:
\begin{alignat}{2}
&\text{\textbf{Map:} } \map{E}:\L(\H_P)\to\L(\H_{I}\otimes\H_\text{aux}) \quad\quad\quad\quad\quad &&\textbf{Choi: } E\in\mathcal{L}(\H_P\otimes\H_{I}\otimes\H_\text{aux}) \nonumber \\ 
&\textbf{Map: } \map{C}:\L(\H_{I})\to\L(\H_{O}) \quad\quad\quad\quad\quad &&\textbf{Choi: } C\in\mathcal{L}(\H_{I}\otimes\H_{O}) \nonumber \\
&\textbf{Map: } \map{D}:\L(\H_\text{aux}\otimes\H_{O})\to\L(\H_F) \quad\quad\quad\quad\quad &&\textbf{Choi: } D\in\mathcal{L}(\H_\text{aux}\otimes\H_{I}\otimes\H_F) \nonumber \\ 
&\textbf{Map: } \map{D}\circ \left(\map{\id}_\text{aux} \otimes \map{C}\right) \circ \map{E} \quad\quad\quad && \textbf{Choi: } D*C*E, 
\end{alignat}	
where $\map{\id}$ is the identity map, \textit{i.e.,} $\map{\id}(\rho)=\rho$ for any linear operator $\rho$. We now point out a convenient property of the link product. When composing quantum operations with the link product, the action of the identity map $\map{\id}$ can be neglected. This property follows from the mathematical identity, 
\begin{equation}
	D*\left(\dketbra{\id}{\id}_\text{aux,aux}\otimes C_{IO}\right) *E=D*C*E,
\end{equation}
where $\dketbra{\id}{\id}$ is the Choi operator of the identity map $\map{\id}$. Next, we are going to exploit the associativity and commutativity of the link product to conveniently represent ``incomplete quantum circuits''.

	Consider a situation where the quantum channel $\map{C}$ is not plugged into the circuit. We then have an ``incomplete circuit'' which only has an encoder and a decoder operation, with open input and output wires. Thanks to the associativity and commutativity of the link product, we can represent this incomplete circuit as $S:=E*D$. We will soon link $S$ to the concept of superchannels. In this way, for any quantum channel $C$ which we plug in this circuit, the output operation is described by, 
\begin{align}
	S*C=&E*D*C \nonumber \\
	 =&D*C*E, \label{eq:1slotdec}
\end{align}
which is a quantum channel from the past space $\H_P$ to the future space $\H_F$.

\begin{definition}[1-slot superchannel \label{def:1slot}]
	A linear operator $S\in \L(\H_P\otimes \H_{I} \otimes \H_{O} \otimes \H_F)$ is a (1-slot) superchannel if there exist a linear space \emph{$\H_\text{aux}$}, quantum channels \emph{$\map{E}:\L(\H_P) \to \L(H_\text{aux}\otimes \H_{I})$} and \emph{$\map{D}:\L(\H_\text{aux}\otimes \H_{O})\to\L(\H_F)$} such that $S=E*D$.
\end{definition}

 A 1-slot superchannel $S$ may be seen physically as a generalised circuit board with one input slot. For any channel $C$ inserted, $S$ will function the same way: it pre-processes an input state with $E$, forwards (part of) the output to $C$ and processes afterwards with $D$.  From a more abstract perspective, quantum superchannels may be seen as objects which transform quantum operations into quantum operations. Although we have motivated and presented the definition of quantum superchannels in terms of quantum circuits, 1-slot quantum superchannels can be equivalently defined as the most general ``quantum object'' which transforms arbitrary channels into arbitrary channels \cite{chiribella08}. This equivalence may be viewed as a higher-order generalisation of Stinespring dilation \cite{chiribella09networks}.

	Quantum superchannels also admit a characterisation in terms of linear and positive definite constraints. A linear operator $S\in \L(\H_P\otimes \H_{I} \otimes \H_{O} \otimes \H_F)$ is a 1-slot superchannel if and only if it respects \cite{chiribella07,chiribella08,chiribella09networks}
\begin{align}\label{eq:1slot}
	S &\geq 0 			\nonumber \\
  	\tr_F(S)  &= \tr_{{O}F}(S)\otimes \frac{\id_{O}}{d_{O}} 		\nonumber \\
  	\tr_{IOF}(S)&= \tr_{PIOF}(S)\otimes \frac{\id_P}{d_P} \nonumber \\
 	\tr(S) &= d_P d_{{O}}.
\end{align}	

\subsection{Multi-slot superchannels} \label{sec:k-slots}
	In this section, we extend the notion of superchannel from a single input channel ${\map{C}:\L(\H_{\bm{I}})\to\L(\H_{\bm{O}})}$ to transformations acting on a set of $k$ channels with $ \{ \map{C_i}\}_{i=1}^k$, $\map{C_i}:\L(\H_{I_i})\to\L(\H_{O_i})$. For the multi-slot case, the linear spaces associated to input and output are described by the tensor product of $i$ subspaces. In this work, we use bold letters to indicate this tensor product subsystem structure, that is, 
	$\H_{\bm{I}}:=\bigotimes_{i=1}^k \H_{I_i}$ and $\H_{\bm{O}}:=\bigotimes_{i=1}^k \H_{O_i}$. Differently from the single slot case,  there are multiple ways to position the $k$ input-channels.
	In particular, input channels may be placed in parallel, in sequence (as in an adaptive protocol), or in a general manner where order of input channels may be indefinite, \textit{e.g.} via the quantum switch \cite{chiribella09_switch} or general process matrices \cite{oreshkov11,araujo15}. We refer to Fig.~\ref{fig:superchannels} for a pictorial illustration of these different classes of superchannels for the case of  $k=2$ slots.
	
\begin{figure}[h!] 
	\begin{center}
		\includegraphics[scale=0.46]{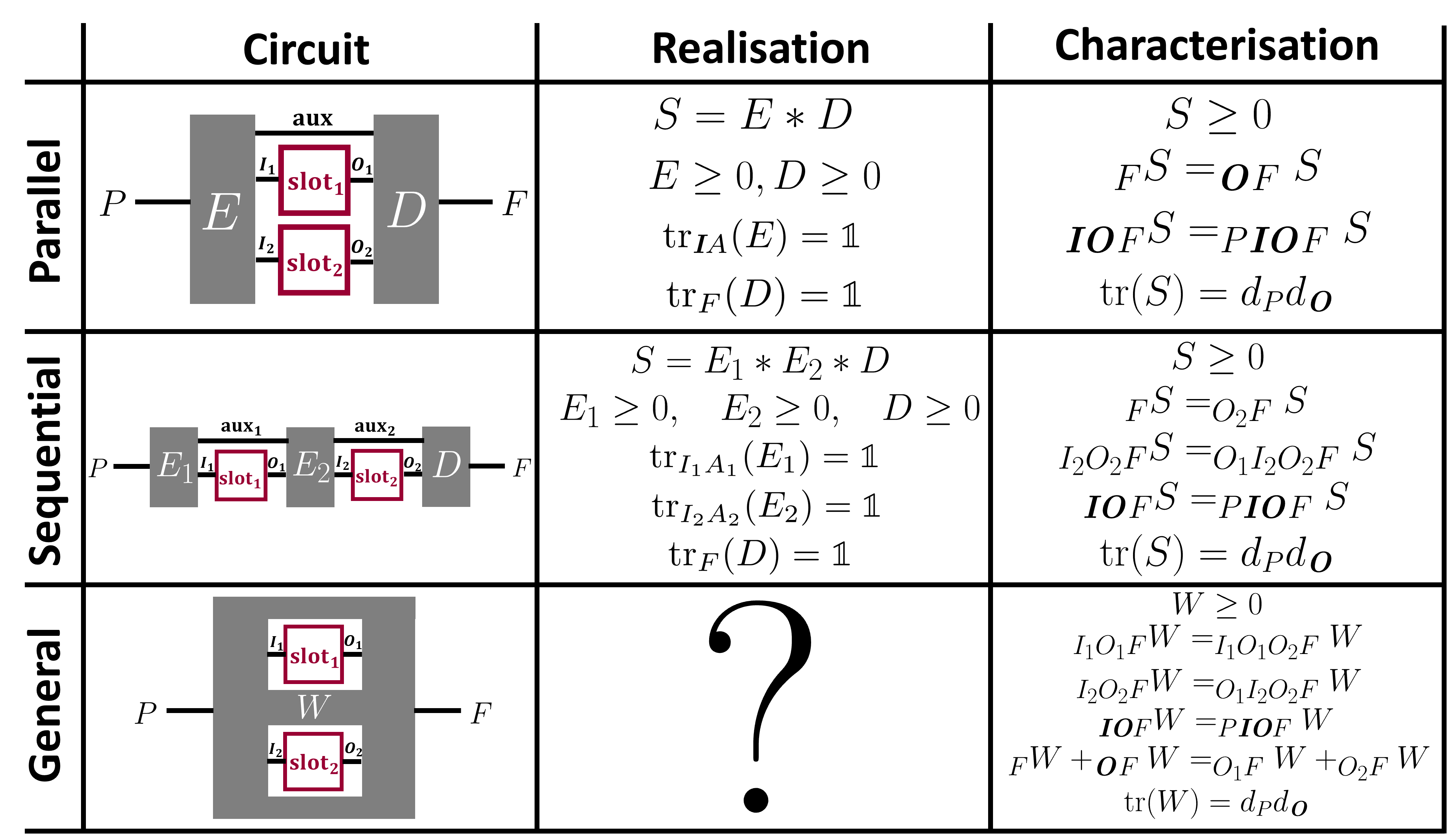} 
	\end{center}
\caption{Table summarising the definition and characterisation of parallel, sequential, and general $k=2$ slots superchannels. Some equations make use of the trace-and-replace notation $_iS:=\tr_i(S)\otimes\frac{\id_i}{d_i}$. Note that since the dimension of auxiliary spaces are not restricted, these three classes form a hierarchy of superchannels (see Fig.~\ref{fig:seq>par} for an illustration).}
\label{fig:superchannels}
\end{figure}

\subsubsection{Parallel superchannels}
	Parallel superchannels can be characterised by a single encoder and a single decoder channel. To make the analogy to the single-slot case, the input channel $C$ in Eq.~(\ref{eq:1slotdec}) is replaced by the tensor product of the $k$ input channels  $\{\map{C_i}\}_{i=1}^k$  as $\map{C} :=\bigotimes_{i=1}^k \map{C_i}$. This allows us to treat parallel superchannels in the single-slot setting described in definition \ref{def:1slot}. 

\begin{definition}[Parallel superchannel]
	%Let $\H_{\bm{I}}:=\bigotimes_{i=1}^k \H_{I_i}$ and $\H_{\bm{O}}:=\bigotimes_{i=1}^k \H_{O_i}$.
	A linear operator $S\in \L(\H_P\otimes \H_{\bm{I}} \otimes \H_{\bm{O}} \otimes \H_F)$ is a $k$-slot parallel superchannel if there exist a linear space \emph{$\H_\text{aux}$}, a quantum channel \emph{$\map{E}:\L(\H_P) \to \L(H_\text{aux}\otimes \H_{\bm{I}})$}, and \emph{$\map{D}:\L(\H_\text{aux}\otimes \H_{\bm{O}})\to\L(\H_F)$} with Choi operators $E$ and $D$ such that $S=E*D$.
\end{definition}
	As in the single slot case, it can be shown that a linear operator $S\in \L(\H_P\otimes \H_{\bm{I}} \otimes \H_{\bm{O}} \otimes \H_F)$ is a $k$-slot parallel superchannel if and only if
\begin{align} 
	S &\geq 0 			\nonumber \\
  	\tr_F(S)  &= \tr_{{\bm{O}}F}(S)\otimes \frac{\id_{\bm{O}}}{d_{\bm{O}}} 		\nonumber \\
  	\tr_{\bm{IO}F}(S)&= \tr_{P\bm{IO}F}(S)\otimes \frac{\id_P}{d_P} \nonumber \\
 	\tr(S) &= d_P d_{{\bm{O}}}. \label{eq:parcond}
\end{align}	

	When transforming quantum operations, parallel implementations are often desirable due to their simpler structure, they can be realised by a single encoder and a single decoder channel. Also, parallel superchannels can be realised by a quantum circuit with short depth (encoder, input-channels, decoder) while a sequential use of the input operations may result in a long depth, and consequently, in a longer time to finish the whole transformation. Especially for current generation hardware implementations prone to short coherence times, long computing times deem challenging.

\subsubsection{Sequential superchannels}
	Sequential superchannels%
\footnote{We remark that $k$-slots superchannels are also known in the literature as $k$-slot combs \cite{chiribella07,chiribella09networks} which are equivalent to $(k+1)$-turn quantum strategies \cite{gutoski07}, $(k+1)$-partite channel with memory \cite{kretschmann05}, and $k$-partite ordered process matrices with common past and common future \cite{araujo16}.} %
	represent general quantum circuits where different encoders operations are applied in between the use of the input channels $\map{C_i}$. For instance, in the case of $k=2$ slots, sequential superchannels consist of two encoding channels with Choi operators ${E_1}$, ${E_2}$ and one decoder channel with Choi operator ${D}$ (see Fig.~\ref{fig:superchannels} for an illustration of this case). If we plug in two input channels with Choi operators ${C_1}$ and ${C_2}$, the output channel $C_\text{out}$ is given by the composition $C_\text{out}=D*C_2*E_2*C_1*E_1$. Formally, we can define sequential superchannels as follows.

\begin{definition}[Sequential superchannel]
	%Let $\H_{\bm{I}}:=\bigotimes_{i=1}^k \H_{I_i}$ and $\H_{\bm{O}}:=\bigotimes_{i=1}^k \H_{O_i}$.
	A linear operator $S\in \L(\H_P\otimes \H_{\bm{I}} \otimes \H_{\bm{O}} \otimes \H_F)$ is a $k$-slot sequential superchannel if there exist a linear space \emph{$\H_\text{aux}$}, a quantum channel \emph{$\map{E_1}:\L(\H_P) \to \L(H_\text{aux}\otimes \H_{I_1})$},
a set of quantum channels 
$\map{E_i}: \L(H_\text{aux} \otimes \H_{O_{i-1}})$ $\to$ $\L(H_\text{aux}\otimes \H_{I_i})$
for $i\in \{2,\ldots,k\}$, and a quantum channel
${ \map{D}: \L(\H_\text{\emph{aux}} \otimes \H_{O_k}) \to\L(\H_F)}$ such that
\begin{equation}
	S=E_1*E_2*\ldots *E_k*D.
\end{equation}
\end{definition}
	Sequential superchannels can also be characterised in terms of linear and positive semidefinite constraints. A linear operator $S\in \L(\H_P\otimes \H_{\bm{I}} \otimes \H_{\bm{O}} \otimes \H_F)$ represents a sequential superchannel with $k$-slots if and only if \cite{chiribella07,chiribella09networks}
		 \begin{align}
		 	S &\geq 0 \nonumber \\
		 	\tr_{F} (S) & = \tr_{O_kF}(S) \otimes \frac{\id_{O_k}}{d_{O_k}} \nonumber \\ 
		 	\tr_{I_kO_kF}(S) & = \tr_{O_{k-1}I_kO_kF}(S) \otimes \frac{\id_{O_{k-1}}}{d_{O_{k-1}}} \nonumber \\ 
		 	&\;\, \vdots \nonumber \\
		 	\tr_{I_1O_1\ldots I_kO_kF}(S) & = \tr_{PI_1O_1\ldots I_kO_kF}(S)\otimes \frac{\id_{P}}{d_{P}}\nonumber \\ 
		 	\tr(S)& = d_P d_{\bm{O}}. \label{eq:seqcond}
		 \end{align}

	We remark that since the size of the auxiliary space $\H_\text{aux}$ is not restricted, parallel superchannels may be seen as a particular instance of sequential superchannels. That is, any parallel superchannel given by $S'=E*D_\text{par}$ can be written as a sequential superchannel $S_\text{seq}=E_1*E_2*\ldots*E_k*D_\text{seq}$ by setting $E_1:=E$, $D_\text{seq}:=D$ and the other encoder operations $E_2,\ldots,E_k$ as swap operations. This construction is illustrated in Fig.~\ref{fig:seq>par}.
	 
\begin{figure}[h!] 
	\begin{center}
		\includegraphics[scale=0.55]{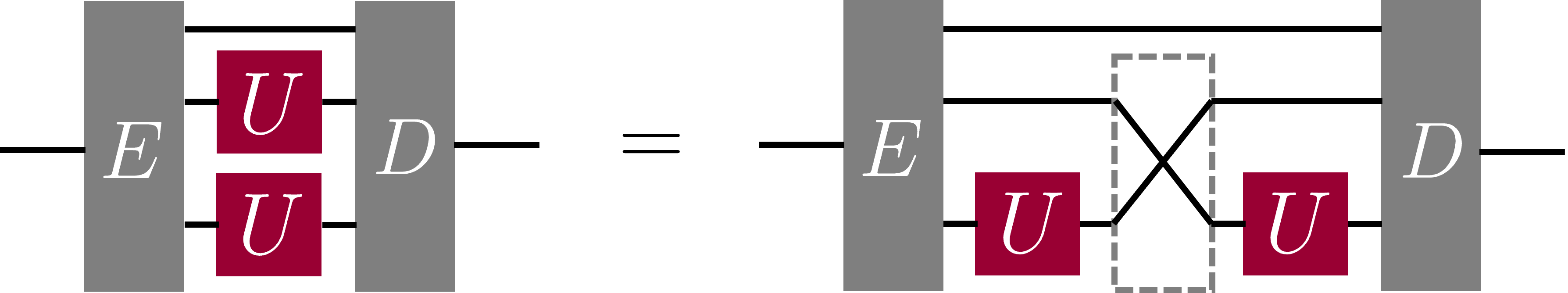} 
	\end{center}
\caption{Pictorial illustration on how one can write every $k=2$-slot parallel superchannel $S'=E*D$ as $k=2$-slot sequential one $S_\text{seq}=E_1*E_2*D$.}
\label{fig:seq>par}
\end{figure}

\subsubsection{General superchannels}
	We now describe the most general class of $k$-slot superchannels, encompassing any possible map compatible with quantum theory. The classes of parallel and sequential superchannels are a subset of this general class -- corresponding to definite orders of input channels -- complemented by arrangements of input slots for which the order cannot be specified any more. The latter class of superchannels describes the structural background of indefinite causal structures, with the quantum switch \cite{chiribella09_switch} being one prominent instance in which sequential circuits with different causal orders are  coherently superposed (for physical realisations, see for example \cite{rubino17,goswami18,goswami20,rubino21}).

	More concretely, a general superchannel is the most general linear function that transforms $k$ independent quantum channels into a quantum channel, even when this linear function is applied only on part of these input channels.  This is made precise in the following definition.

\begin{definition}[General superchannel] \label{def:general}
	%Let $\H_{\bm{I}}:=\bigotimes_{i=1}^k \H_{I_i}$ and $\H_{\bm{O}}:=\bigotimes_{i=1}^k \H_{O_i}$.
	A linear operator $S\in \L(\H_P\otimes \H_{\bm{I}} \otimes \H_{\bm{O}} \otimes \H_F)$ is a $k$-slot general superchannel if $S\geq0$ and, for every possible set of quantum channels $\{\map{C_i}\}_{i=1}^k$, $\map{C_i}:\H_{I_i}\to\H_{O_i}$ we have that $S* \left( C_1\otimes C_2 \otimes \ldots\otimes C_k\right)$ is a quantum channel.
\end{definition}
	Similarly to the parallel and sequential case, general superchannels also have a characterisation in terms of linear and positive semi-definite constraints. A general characterisation in terms of dual affine sets is presented in \cite{ebler16}. Trace conditions similar to Eqs. (\ref{eq:parcond}, \ref{eq:seqcond}) for parallel and sequential superchannels can be derived with  for any number of slots $k$ with methods given in \cite{araujo16}. 
	
	When considering $k=1$ slots, general quantum superchannels correspond to the most general ``quantum transformation'' which maps quantum channels into quantum channels. Interestingly, Ref.~\cite{chiribella08} shows that every $k=1$ slot superchannel can be realised by a single-slot superchannel as in Def.~\ref{def:1slot}. Hence, the constructive definition of single-slot superchannel based on an encoder and decoder channel presented in Section~\ref{sec:1slot} is equivalent to stating that single-slot superchannels are general single-slot superchannels.
	
	For the case $k=2$, an operator $S\in\L(\H_P\otimes\H_{\bm{I}}\otimes\H_{\bm{O}}\otimes\H_F)$ is a general superchannel if and only if \cite{araujo16}
\begin{align}
   S&\geq 0   \nonumber \\
    _{I_1O_1F}S &= _{I_1O_1O_2F}S \nonumber \\
    _{I_2O_2F}S &= _{O_1I_2O_2F}S  \nonumber \\
    _FS  &= _{O_1F}S + _{O_2F}S - _{O_1O_2F}S  \nonumber \\
    _{\bm{IO}F}S  &= _{P\bm{IO}F}S  \nonumber \\
    \tr S&=d_Pd_{O_1}d_{O_2}  .
\end{align}
Above, we utilised the trace-and-replace notation 
	\begin{equation} \label{eq:TR}
		_iS:=\tr_i(S)\otimes\frac{\id_i}{d_i}
	\end{equation} 
	to denote tracing the system $i$ and replacing it by the normalised identity map on the system $\H_i$. An explicit characterisation for general channels with $k=3$ slots is presented in \cite{quintino19PRA}. 
	
	Currently, it is not know whether there exists a physical procedure which is able to realise arbitrary general processes in a ``fair'' manner. We note however that relevant and non-trivial classes of general superchannels with indefinite causal order may be implemented by means of coherent control of sequential superchannels \cite{wechs21control}. Also, from a theoretical perspective, Ref.~\cite{araujo16} proposes a purification postulate in which only superchannels which preserve reversibility may have fair physical realisation. For the two-slot case, non-trivial reversibility preserving superchannels are proven to be switch-like superchannels~\cite{yokojima20,barrett20cyclic}, and as proven in Ref.~\cite{bavaresco21b} and discussed in Sec.~\ref{sec:advantage_indefinite}, switch-like superchannels cannot outperform sequential superchannels for transforming multiple copies of the same unitary operation.

\subsubsection{Parallel measure-and-prepare superchannels} \label{sec:measure-and-prep_and_delayed}
	Parallel measure-and-prepare strategies correspond to superchannels which measure the input operation and -- conditional on the measurement outcome $i$ -- prepare the channel $\map{R_i}$ (see Fig.~\ref{fig:prepare_and_measure_delayed}).  	Measure-and-prepare strategies are closely related to estimation of quantum channels (see Sec.~\ref{sec:trans=est} for more detailed discussion), since it consists of guessing the input operation and then preparing an output operation based on the guess. This class of strategies may be viewed as a ``most classical" manner to transform a quantum operation, which is of particular interest for experimental implementations.
	
	A parallel measure-and-prepare superchannels may be described as the following:
\begin{enumerate}
\item Prepare a quantum probe-state $\rho\in\L(\H_{\bm{I}}\otimes\H_\text{aux})$.
\item The input-channel $\map{C}:\L(\H_{\bm{I}})\to\L(\H_{\bm{O}})$  is performed in the input space $\H_{\bm{I}}$ of the probe-state $\rho\in\L(\H_{\bm{I}}\otimes\H_\text{aux})$  to obtain $\map{C}\otimes\map{\id}(\rho)=C*\rho$ .
\item A quantum measurement with POVM elements $M_i\in \L(\H_{\bm{O}}\otimes\H_\text{aux})$ is performed on the state $\map{C}\otimes\map{\id}(\rho)$ to obtain the outcome $i$ with probability 
	$\tr\Big((C*\rho)M_i \Big) = C*\rho*M_i^T$. When the result $i$ is obtained, we prepare the output channel $\map{R_i}:\L(\H_P)\to\L(\H_F)$. On average, the output channel is described by\footnote{In principle, we may also consider measurements with infinitely many outcomes, in this case the sum should be described as an integral. We remark however that the extremal  measurements of $d$-dimensional systems have at most $d^2$ outcomes~\cite{dariano05,chiribella07finitePOVM}. Due to that, in some optimisation problems, like the ones discussed in this paper, it is enough to consider measurements with finite outcomes.}
%$		\sum_i C_{\bm{IO}}*\rho_{I\text{aux}}*\left(M_i^T\right)_{O\text{aux}}\otimes \left(R_i\right)_{PF}$
%$		\sum_i \Big(C\,*\rho*M_i^T\Big) \otimes  R_i= \sum_i C\,*\rho*M_i^T * R_i $.
$\sum_i C\,*\rho*M_i^T * R_i $.
\end{enumerate}
	Note that an of operators $\{M_i^T\}_i$ is a valid POVM if and only if the set $\{M_i\}_i$ is a valid POVM. Thanks to that, with no loss of generality, we may define a general parallel measure-and-prepare superchannel without explicitly writing the transposition on the measurement operators.

\begin{definition}[Parallel measure-and-prepare superchannel]
%Let $\H_{\bm{I}}:=\bigotimes_{i=1}^k \H_{I_i}$ and $\H_{\bm{O}}:=\bigotimes_{i=1}^k \H_{O_i}$.
A linear operator $S\in \L(\H_P\otimes \H_{\bm{I}} \otimes \H_{\bm{O}} \otimes \H_F)$ is a $k$-slot measure-and-prepare superchannel if there exist a linear space \emph{$\H_\text{aux}$}, a quantum state \emph{$\rho\in L(\H_{\bm{I}}\otimes \H_\text{aux})$}, quantum measurement with POVM elements \emph{$M_i\in\L(\H_{\bm{O}}\otimes\H_\text{aux})$}, and a set of quantum channels \emph{$\map{R_i}:\L(\H_P)\to\L(\H_F)$} such that $S=\sum_i\rho*M_i*R_i$.
\end{definition}
\begin{figure}[h!] 
	\begin{center}
		 \includegraphics[width=0.95\columnwidth]{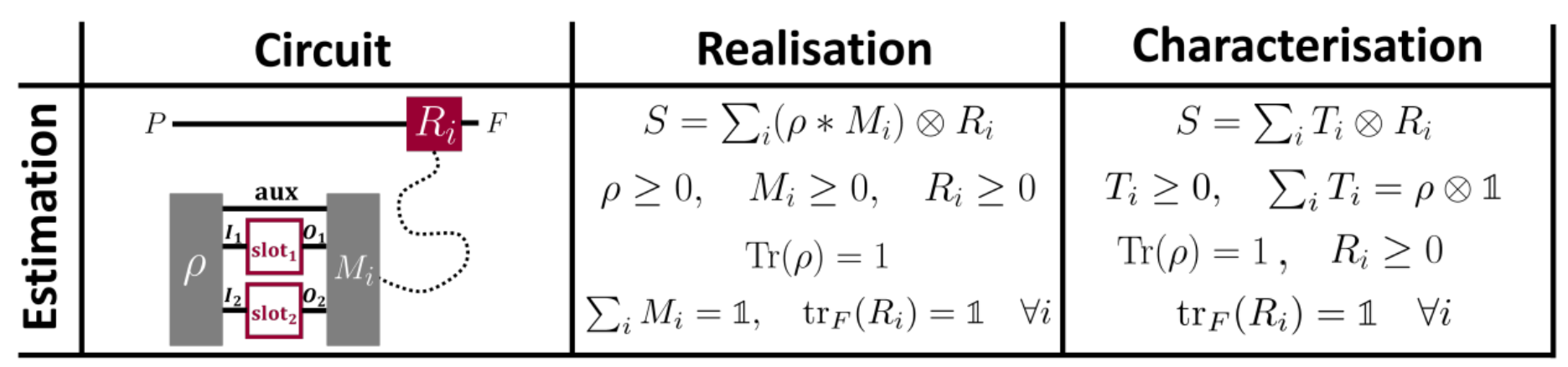} 
	\end{center}
\caption{Pictorial illustration of a measure and prepare and a delayed-input-state parallel superchannel.}
\label{fig:prepare_and_measure_delayed}
\end{figure}

	Parallel measure-and-prepare superchannels are strategies in which we measure a quantum channel, and prepare an output channel accordingly to the measurement outcome. As described earlier, we can measure a quantum channel by preparing a quantum state, sending part of it though the channel, and then performing a quantum measurement. We can nevertheless simplify this procedure by means of a (parallel) quantum tester \cite{chiribella09networks,ziman08PPOVM,bavaresco21}. A (parallel) quantum tester is a set of positive semidefinite operators $\{T_i\}_i$, $T_i\in\L(\H_{\bm{I}}\otimes\H_{\bm{O}})$ such that $\sum_i T_i=\sigma_{\bm{I}}\otimes\id_{\bm{O}}$ where $\sigma\in\L(\H_{\bm{I}})$ is a quantum state. It can be shown that $\{T_i\}_i$ is a valid quantum tester if and only if there exists another quantum state $\rho\in\L(\H_{\bm{I}}\otimes\H_\text{aux})$ and a POVM $\{M_i\}_i$, $M_i\in\L(\H_\text{aux}\otimes\H_{\bm{O}})$ such that $T_i=\rho*M_i$ for all $i$. The tester formalism then provides a nice interpretation and simpler characterisation to measure-and-prepare superchannels.

%%%%%%%%%%%%%%%%%%%%%%%%%%%%%%%%%%%%%%%%%%%%%%%%%%%%%%%%%%%%%%%%%%%%%%%%%%%%%%%%%%%%%%%%%%%%%%%%%%%%%%%%%%%%%%%%%
%%%%%%%%%%%%%%%%%%%%%%%%%%%%%%%%%%%%%%%%%%%%%%%%%% NEW SECTION %%%%%%%%%%%%%%%%%%%%%%%%%%%%%%%%%%%%%%%%%%%%%%%%%%
%%%%%%%%%%%%%%%%%%%%%%%%%%%%%%%%%%%%%%%%%%%%%%%%%%%%%%%%%%%%%%%%%%%%%%%%%%%%%%%%%%%%%%%%%%%%%%%%%%%%%%%%%%%%%%%%% 

\section{Transforming unitary quantum operations}
	We now present the main task analysed in this paper. Let $f:\SU(d)\to\SU(d')$ be a function which transforms unitary operators to unitary operators and $\SU(d)$ is the group of unitary $d$-dimensional operators with determinant one (special unitary group of dimension $d$). We consider a scenario where one has access to $k$ uses of an arbitrary $d$-dimensional unitary quantum operation described by an operator%
\footnote{We remark that, without loss of generality, unitary quantum operations are described by elements of $\SU(d)$, the group of unitary operators with determinant one (special unitary group of dimension $d$). This holds because for any $\phi\in\mathbb{R}$, the unitary operators $U$ and $e^{i\phi}U$ represent the same physical operation, \ie,  $U\rho U^\dagger= (e^{i\phi}U)\rho(e^{i\phi}U)^\dagger$ and $\dketbra{U}{U}=\dketbra{e^{i\phi}U}{e^{i\phi}U}$.}
 $U\in\SU(d)$. Our goal is to design a universal quantum circuit or a quantum process which approximates the transformation $U^{\otimes k}\mapsto f(U)$ for any $U\in\SU(d)$. 
	By exploiting the Choi isomorphism, the link product, and the concept of superchannels presented in the previous sections, the problem tackled in this paper can be phrased as:
	\begin{equation*}
	\boxed{
	\begin{aligned}
	&\text{Given a function } f:\SU(d)\to\SU(d'), \ \text{find the optimal (parallel/sequential/general) superchannel } \nonumber \\
	& S \text{ such that:} \qquad \qquad \qquad S*\dketbra{U}{U}^{\otimes k} \approx \dketbra{f(U)}{f(U)} \quad \forall U\in\SU(d) .
	\end{aligned}
		}
	\end{equation*}

	In the next section, we explore the concepts of robustness and channel fidelity to assign a precise meaning to $S*\dketbra{U}{U}^{\otimes k} \approx \dketbra{f(U)}{f(U)}$. Further, we establish the concept of optimality for unitary channel transformations.
	
\subsection{Quantifying the performance of unitary transformations} \label{sec:fidelity}
\subsubsection{Average fidelity}
	One way to quantify how similar two operations are is given by the channel fidelity. The fidelity between a unitary channel with Choi operator $\dketbra{U}{U} \in \L(\H_P\otimes\H_F)$ and an arbitrary channel with Choi operator $C\in \L(\H_P\otimes\H_F)$ is
\begin{align}
	F(C,\dketbra{U}{U}):=& \frac{1}{d_P^2} \dbra{U} C \dket{U}, \nonumber \\
	 =& \frac{1}{d_P^2}\tr\left( C\,\dketbra{U}{U}\right) 
\end{align}
where $d_P$ is the dimension of the linear space $\H_P$. The channel fidelity satisfies a list of operational properties that guarantees it to be a good quantifier~\cite{raginsky01}. Also, if $C$ is the Choi operator of a quantum channel $\map{C}$ and $\dketbra{U}{U}$ is the Choi operator of a unitary channel $\map{U}$, we can relate the concept of channel fidelity with the quantum state fidelity $f(\rho,\ketbra{\psi}{\psi}):=\bra{\psi}\rho\ket{\psi}$ via the identity~\cite{horodecki98}
\begin{equation} \label{eq:fidelities}
	\int_\text{Haar} f\Big(\map{C}(\ketbra{\psi}{\psi}),\, U\ketbra{\psi}{\psi}U^\dagger\Big)\dif{\ket{\psi}}
	=\frac{d}{d+1} F(C,\dketbra{U}{U}) + \frac{1}{d+1}.
\end{equation}

	Hence, a natural way to quantify the performance of a superchannel $S$ on transforming $\dketbra{U}{U}^{\otimes k}$ into $\dketbra{f(U)}{f(U)}$ is then given by its \textit{average fidelity,}
\begin{align}
		\mean{F}:=& \int_\text{Haar} F\Big(\left(S*\dketbra{U}{U}^{\otimes k}\right),\dketbra{f(U)}{f(U)}\Big) \;\dif U \nonumber \\
		=& \int_\text{Haar} \frac{1}{d^2}  \tr\Big(\left( S*\dketbra{U}{U}^{\otimes k} \right)\,\dketbra{f(U)}{f(U)}\Big) \;\dif U,
  \end{align}
where $d$ is the input dimension of $f(U)$. {{The integral is executed according to the Haar measure $\dif U$ with respect to $\SU(d)$. For the compact group $\SU(d)$, the Haar measure is 1) invariant: satisfying $\dif U=\dif (V U)=\dif (U V), \forall V\in \SU(d)$ (the unique left- and right-invariant measure on $\SU(d)$ coincide as the group is unimodular \cite{raczka1986theory}),  and 2) normalized: $\int_{\SU(d)} \dif U = 1$}}. 

This notion of optimal average fidelity is natural when we are interested in quantifying the performance on average, for any choice of $U$. It has been used for several related tasks such as quantum unitary estimation~\cite{acin00}, unitary cloning~\cite{chiribella08clone}, unitary learning~\cite{bisio10}, iteration of unitary gates~\cite{soleimanifar16}, and unitary transformations with a single use~\cite{ebler16}. The next subsections give further motivation to choose the average fidelity as a figure of merit.

\subsubsection{Worst-case fidelity}
	Given that realistic scenarios mostly take one specific $U$ as input, identifying the average performance of a superchannel might not tell much about the quality for one specific transformation $U$: the performance could be good for some choices of $U \in \SU(d)$ and bad for others.
	It is then interesting to analyse the \textit{worst-case fidelity} as lower bound of performance. Given a superchannel $S$ and a desired transformation $f(U)$, the worst-case fidelity is defined as 
\begin{align}
F_{\rm wc}:=\min_{U\in\SU(d)} F(S*\dketbra{U}{U}^{\otimes k},\dketbra{f(U)}{f(U)}).
\end{align}

	Intriguingly, for the main classes of functions $f$ considered in this work -- when $f$ is a homomorphism, \ie, $f(UV)=f(U)f(V)$ or when $f$ is an anti-homomorphism  \ie, $f(UV)=f(U)f(V)$ -- maximizing the average fidelity coincides with maximizing the worst-case fidelity. This is the content of the following.
%%%%%%%%%%%%%%%%%%%%%%%%%%%%%%%%%%%%%%%%%%%%%%%%%%%%%%%%%%%%%%%%%%%%%%%%%%%%%%%%%%%%%
%%%%%%%%%%%%%%%%%%%%%%%%%%%%%%%%%%%%%% THEOREM %%%%%%%%%%%%%%%%%%%%%%%%%%%%%%%%%%%%%%
%%%%%%%%%%%%%%%%%%%%%%%%%%%%%%%%%%%%%%%%%%%%%%%%%%%%%%%%%%%%%%%%%%%%%%%%%%%%%%%%%%%%%
\begin{theorem}\label{thm:WCF}
Let $S\in\L(\H_P\otimes\H_{\bm{I}}\otimes\H_{\bm{O}}\otimes\H_F)$ be a parallel/sequential/general superchannel that transforms $k$ uses of a unitary operator $U\in\SU(d)$ into $f(U)\in\SU(d')$ with average fidelity $\mean{F}$. If $f$ is a homomorphism, \ie, $f(UV)=f(U)f(V)$, or an anti-homomorphism, \ie, $f(UV)=f(V)f(U)$, there exists a parallel/sequential/general superchannel $S'\in\L(\H_P\otimes\H_{\bm{I}}\otimes\H_{\bm{O}}\otimes\H_F)$ that transforms $k$ uses of a unitary operator $U$ into $f(U)$ with worst-case fidelity $F_{\rm wc}=\mean{F}$.
\end{theorem} 
%%%%%%%%%%%%%%%%%%%%%%%%%%%%%%%%%%%%%%%%%%%%%%%%%%%%%%%%%%%%%%%%%%%%%%%%%%%%%%%%%%%%%
%%%%%%%%%%%%%%%%%%%%%%%%%%%%%%%%%%%%%% THEOREM %%%%%%%%%%%%%%%%%%%%%%%%%%%%%%%%%%%%%%
%%%%%%%%%%%%%%%%%%%%%%%%%%%%%%%%%%%%%%%%%%%%%%%%%%%%%%%%%%%%%%%%%%%%%%%%%%%%%%%%%%%%%
	The proof of this proposition is presented in App.~\ref{app:f_is_rep} and combines the covariant properties of superchannels discussed in Sec.~\ref{sec:f_isREP} Sec.~\ref{sec:f_T_isREP} with Holevo's argument for covariant averaging~\cite{holevoBook}.

\subsubsection{Optimal white noise visibility}
	Another pertinent figure of merit is the \textit{white noise visibility}, a quantifier which appears naturally in cloning quantum states \cite{werner98} and robustness of entanglement \cite{vidal98}. 	A superchannel $S$ is said to have a white noise visibility $\eta$ for transforming $k$ uses of $U$ into $f(U)$ if we can write
\begin{align} \label{eq:white_noise}
S*\dketbra{U}{U}^{\otimes k} = \eta \ketbra{f(U)}{f(U)} + (1-\eta) \id_\text{P}\otimes\frac{\id_\text{F}}{d}, \quad \quad \forall U \in\SU(d) .
\end{align}

	If $S$ is a superchannel that transforms $k$ uses of $U$ into $f(U)$ with white noise visibility $\eta$, its average fidelity $\mean{F}$ is given by
\begin{align} \label{eq:white_is_fidelity}
	\mean{F} &=\frac{1}{d^2} \int_\text{Haar}   \dbra{f(U)}\left( S*\dketbra{U}{U}^{\otimes k} \right) \dket{f(U)} \;\dif U \nonumber \\
		&= \frac{1}{d^2} \int  \dbra{f(U)}\left( \eta \dketbra{f(U)}{f(U)} + (1-\eta) \id_\text{P}\otimes\frac{\id_\text{F}}{d} \right) \dket{f(U)} \;\dif U \nonumber \\
			&= \frac{1}{d^2} \int   \eta
\left( \dbra{f(U)}\dketbra{f(U)}{f(U)}\dket{f(U)}\right) + 
(1-\eta) \left(\frac{1}{d}\dbra{f(U)}\dket{f(U)} \right)\;\dif U \nonumber  \\	
		&= \frac{1}{d^2} \int   \left(\eta d^2 + (1-\eta)  \right) \;\dif U \nonumber \\
		& = \eta + \frac{1-\eta}{d^2} .
\end{align}
	Note however, due to the restrictive structure of the white noise visibility in Eq.~\eqref{eq:white_noise}, in general, it is not possible to establish a one-to-one connection with the average fidelity. In order to have a well-defined white noise visibility $\eta$, the superchannel $S$ has to operate on $U$ in a very particular form. However, in a large class of functions $f$ considered in this work, the white noise fidelity coincides with the average fidelity -- meaning that the structure given in Eq.~\eqref{eq:white_noise} can be assumed for the action of $S$ without loss of generality.

%%%%%%%%%%%%%%%%%%%%%%%%%%%%%%%%%%%%%%%%%%%%%%%%%%%%%%%%%%%%%%%%%%%%%%%%%%%%%%%%%%%%%
%%%%%%%%%%%%%%%%%%%%%%%%%%%%%%%%%%%%%% THEOREM %%%%%%%%%%%%%%%%%%%%%%%%%%%%%%%%%%%%%%
%%%%%%%%%%%%%%%%%%%%%%%%%%%%%%%%%%%%%%%%%%%%%%%%%%%%%%%%%%%%%%%%%%%%%%%%%%%%%%%%%%%%%
\begin{theorem} \label{theo:white_noise_cov} 
	 Let $f:\mathcal{SU}(d)\to\mathcal{SU}(d)$ be a function respecting
\begin{itemize}
\item $f(UV)=f(U)f(V)$ for all $U,V\in \mathcal{SU}(d)$, or $f(UV)=f(V)f(U)$  $U,V\in \mathcal{SU}(d)$
\item $f(U^*)=f(U)^*$
\item For the Haar measure, the differential $\dif U$ is invariant under the substitution $\dif U\to \dif f(U)$
\end{itemize}	 
If $S$ is a parallel/sequential/general superchannel transforming $k$ uses of $U$ into $f(U)$ with average fidelity $\mean{F}$,
there exists a parallel/sequential/general superchannel $S'$ such that
\begin{equation}
	S'* \dketbra{U}{U}^{\otimes k}_{\bm{IO}} = \eta \dketbra{f(U)}{f(U)}_{PF} + (1-\eta) \id_P\otimes \frac{\id_F}{d},
\end{equation}	 
where $\mean{F}=\eta + \frac{1-\eta}{d^2}$.
\end{theorem}
%%%%%%%%%%%%%%%%%%%%%%%%%%%%%%%%%%%%%%%%%%%%%%%%%%%%%%%%%%%%%%%%%%%%%%%%%%%%%%%%%%%%%
%%%%%%%%%%%%%%%%%%%%%%%%%%%%%%%%%%%%%% THEOREM %%%%%%%%%%%%%%%%%%%%%%%%%%%%%%%%%%%%%%
%%%%%%%%%%%%%%%%%%%%%%%%%%%%%%%%%%%%%%%%%%%%%%%%%%%%%%%%%%%%%%%%%%%%%%%%%%%%%%%%%%%%%
	The proof of Thm.~\ref{theo:white_noise_cov} is presented in Appendix~\ref{app:wnc}.

\subsubsection{Probabilistic exact strategies can be converted to deterministic non-exact ones} \label{sec:prop}
	While in this work we propose to analyse on deterministic non-exact transformations, it is possible as well to consider probabilistic exact transformations. In the probabilistic exact setting, it is possible to reject the outcome, meaning that a certain probability of failure is accepted \cite{quintino19PRA,quintino19PRL,dong20}. Here, we consider the case where the output is rejected whenever the protocol fails to execute the desired transformation exactly (with unit fidelity). 
	
	To study probabilistic exact strategies more formally, it is necessary to introduce the concept of \textit{quantum superinstruments}. Superinstruments may be seen as a higher-order version of quantum instruments to networks. Concretely, a superinstrument is a set of positive semidefinite maps $\{S_i\}_i$ such that $S:=\sum_iS_i$ is a superchannel. In the binary case of ``succeed" or ``fail", we say that a superinstrument $\{S_s,S_f\}$ transforms $k$ uses of a unitary $U$ into $f(U)$ with probability of success probability $p_s$ if
\begin{align} \label{eq:probabilistic_exact}
	 S_s*\dketbra{U}{U}^{\otimes k} = p_s \ketbra{f(U)}{f(U)} \quad \quad \forall U \in\SU(d) .
\end{align}

	Probabilistic exact strategies that transform $k$ uses of $U$ into $f(U)$ with success probability $p_s$ provide us a simple method to construct deterministic exact superchannels with average fidelity $\mean{F}\geq p_s$. If $\{S_s,S_f\}$ is a parallel/sequential/general superinstrument, by definition $S:=S_s+S_f$ is a parallel/sequential/general superchannel which respects
\begin{align}
	\mean{F} &=\frac{1}{d^2} \int   \dbra{f(U)}\left( S*\dketbra{U}{U}^{\otimes k} \right) \dket{f(U)} \dif U \nonumber \\
	&=\frac{1}{d^2} \int  \Big( \dbra{f(U)} S_s*\dketbra{U}{U}^{\otimes k} \dket{f(U)} 
	+ \dbra{f(U)}  S_f*\dketbra{U}{U}^{\otimes k}   \dket{f(U)}\Big) \dif U \nonumber \\
	&=\frac{1}{d^2} \int  \Big( p_s  \dbra{f(U)} \dketbra{f(U)}{f(U)} \dket{f(U)} 
	+ \dbra{f(U)}  S_f*\dketbra{U}{U}^{\otimes k}   \dket{f(U)}\Big) \dif U \nonumber \\
&=\frac{1}{d^2} \int  \Big( p_s d^2
	+ \dbra{f(U)}  S_f*\dketbra{U}{U}^{\otimes k}   \dket{f(U)}\Big) \dif U \nonumber \\
	&=p_s + \frac{1}{d^2} \int  \Big( \dbra{f(U)}  S_f*\dketbra{U}{U}^{\otimes k}   \dket{f(U)}\Big) \dif U \nonumber \\			
		& \geq p_s . 
\end{align}

	Note however that deterministic non-exact protocols may not lead to probabilistic exact ones. Unitary transposition is one concrete example discussed later in Sec.~\ref{sec:f_T_isREP}: we show that optimal parallel strategies can always be implemented by a measure-and-prepare strategy. Yet,  in a probabilistic exact case, non-coherent strategies based on measure-and-prepare superchannels will necessarily lead to $p_s=0$.  Additionally, when a single use is considered ($k=1$), approximate cloning of unitary operations \cite{chiribella08clone} and unitary complex conjugation  for $d>2$ \cite{ebler16} are possible with non-zero average fidelity, but they cannot be done in a probabilistic exact way \cite{miyazaki17,quintino19PRA} -- meaning $p_s=0$.
	
\subsection{Performance operator and semidefinite programming} \label{sec:SDP}
	As first observed in Ref.~\cite{ebler16}, when seeking for the optimal superchannels to maximise the average fidelity for a desired transformation, $\dketbra{U}{U}^{\otimes k} \mapsto \dketbra{f(U)}{f(U)} $ it is convenient to define the \textit{performance operator}:
\begin{align}
		\Omega:=&\frac{1}{d^2} \int_\text{Haar} \dketbra{f(U)}{f(U)}_{PF}  \otimes \dketbra{U^*}{U^*}^{\otimes k}_{\bm{IO}} \;\dif U 
%=&\frac{1}{d^2} \int  \left( \id_P \otimes \id_{I} \otimes {U^{*\otimes k}_{\bm{O}}} \otimes f(U)_F \right)   \dketbra{\id}{\id}_{PF} \otimes \dketbra{\id}{\id}^{\otimes k}_{IO} \left( \id_P\otimes \id_{I} \otimes U^{* \otimes k}_{\bm{O}}\otimes f(U)_F \right)^\dagger\; \dif U, \label{omegadef}
\end{align}
	where $U^*$ is the complex conjugate of the operator $U$ written in the computational basis. The performance operator is useful to evaluate the average fidelity performance of a superchannel $S$ via the relation $\mean{F}=\tr(S\,\Omega)$ which holds true since
\begin{align}
		\tr (S \Omega) & =  \frac{1}{d^2} \int_\text{Haar}\tr\Big[ S\,\left( \dketbra{f(U)}{f(U)}_{PF}  \otimes \dketbra{U^*}{U^*}^{\otimes k}_{\bm{IO}}\right) \Big]\dif U  \nonumber \\
		&=  \frac{1}{d^2} \int_\text{Haar} \tr\Big[\left(S*\dketbra{U}{U}^{\otimes k}_{\bm{IO}}\right)\, \dketbra{f(U)}{f(U)}_{PF}\, \Big] \dif U \label{eq:UstarTrick} \nonumber \\
		&=\mean{F},
\end{align}
where in Eq.~\eqref{eq:UstarTrick} we have used the identity $\dketbra{U^*}{U^*}=\dketbra{U}{U}^T.$

	For any given performance operator $\Omega$, the problem of maximising the fidelity over a set $\set S$ of superchannels with $k$-slots can be phrased as 
\begin{align} \label{eq:SDP}
	&\max_{S {{\in \set S_\alpha}}} \tr(S\Omega) \ .
\end{align}
	The subscript $\alpha$ of the set $\set S_\alpha$ labels the desired set, \textit{i.e.,} $\alpha\in \{$parallel, sequential, general, prepare-and-measure$\}$ In Sec.~\ref{sec:k-slots} we see that -- apart from the prepare-and-measure case -- superchannels can be completely described by linear and positive semidefinite constraints. Hence, for such cases, Eq.\eqref{eq:SDP} is a semidefinite programming (SDP) problem.
	Reference \cite{ebler16} also showed that the dual problem of the SDP presented in Eq.~\ref{eq:SDP} is given by,
\begin{align}
	\min_{\overline{S}\in\overline{\set{S_\alpha}}} \, & \lambda \nonumber \\
	\text{such that: } \, & \Omega \leq \lambda \overline{S} \ .
\end{align}
Here, $\overline{\set{S_\alpha}}$ stands for the dual affine of the set of the desired $k$-slot superchannels $\set S_\alpha$. When $\set{S}$ is a set of linear operators, its dual affine set $\overline{\set{S}}$ is the set containing all operators $\overline{S}$ such that $\tr(\overline{S}S)=1\, \forall \, S\in \set{S}$. For a more detailed discussion on dual affine sets of superchannels and examples, we refer the reader to Ref.~\cite{ebler16} and Ref.~\cite{bavaresco21}.
	
%%%%%%%%%%%%%%%%%%%%%%%%%%%%%%%%%%%%%%%%%%%%%%%%%%%%%%%%%%%%%%%%%%%%%%%%%%%%%%%%%%%%%%%%%%%%%%%%%%%%%%%%%%%%%%%%%
%%%%%%%%%%%%%%%%%%%%%%%%%%%%%%%%%%%%%%%%%%%%%%%%%% NEW SECTION %%%%%%%%%%%%%%%%%%%%%%%%%%%%%%%%%%%%%%%%%%%%%%%%%%
%%%%%%%%%%%%%%%%%%%%%%%%%%%%%%%%%%%%%%%%%%%%%%%%%%%%%%%%%%%%%%%%%%%%%%%%%%%%%%%%%%%%%%%%%%%%%%%%%%%%%%%%%%%%%%%%%

\section{Revisiting the homomorphic case: $f(UV)=f(U)f(V)$ } \label{sec:f_isREP}
	In this section we recap known results for the case where $f:\SU(d)\to\SU(d')$ is a homomorphism, that is, it respects $f(UV)=f(U)f(V)$. This corresponds to the case where the function $f$ is a unitary representation of $\SU(d)$, class of transformations analysed in Ref.~\cite{bisio10}. {{ Here we remark that, when the dimension $d$ is equal to $d'$, up to a unitary equivalence, the only non-trivial homomorphism $f:\SU(d)\to\SU(d)$ is the unitary complex conjugation $f(U)=U^*$. This follow from the fact that there exists only three $d-$dimensional representations for the group $\SU(d)$ \cite{harrisBook}, the trivial representation $f_\text{trivial}(U)=\id$, the defining representation, $f_\text{def}(U)=U$, and the conjugate representation $f_\text{conj}(U)=U^*$.}} 

	When $f(UV)=f(U)f(V)$, the invariance of Haar measure ensures that $\Omega$ respects the invariant relation
\small
	\begin{align} \label{eq:omega_invariance}
&\Big(\id_P \otimes \id_{\bm{I}} \otimes V_{\bm{O}}^{*^{\otimes k}} \otimes f(V)_F\Big)
\Omega
\Big(\id_P \otimes \id_{\bm{I}} \otimes V_{\bm{O}}^{*^{\otimes k}} \otimes f(V)_F\Big)^\dagger \nonumber  \\
&=\frac{1}{d^2} \int \Big(\id_P \otimes \id_{\bm{I}} \otimes V_{\bm{O}}^{*^{\otimes k}} \otimes f(V)_F\Big)
\dketbra{f(U)}{f(U)}_{PF} \otimes \dketbra{U^*}{U^*}^{\otimes k}_{\bm{IO}}
\Big(\id_P \otimes \id_{\bm{I}} \otimes V_{\bm{O}}^{*^{\otimes k}} \otimes f(V)_F\Big)^\dagger \text{d}U \nonumber  \\
&=\frac{1}{d^2} \int 
\dketbra{f(V)f(U)}{f(V)f(U)}_{PF} \otimes \dketbra{V^*U^*}{V^*U^*}^{\otimes k}_{\bm{IO}} \; \text{d}U \nonumber \\
&=\frac{1}{d^2} \int 
\dketbra{f(VU)}{f(VU)}_{PF} \otimes \dketbra{(VU)^*}{(VU)^*}^{\otimes k}_{\bm{IO}}\; \text{d}U \nonumber \\
&=\Omega.
\end{align}
\normalsize
Hence, the performance operator respects the commutation relations
\begin{align} \label{eq:f(u)_commutation}
	[\Omega, \id_P \otimes \id_{\bm{I}} \otimes U_{\bm{O}}^{*^{\otimes k}} \otimes f(U)_F]&=0 \quad \forall U \in\SU(d),  \\
	[\Omega, f(U)_P \otimes U_{\bm{I}}^{*^{\otimes k}} \otimes \id_{\bm{O}} \otimes \id_F]&=0 \quad \forall U \in\SU(d), \label{eq:f(u)_commutation2}
\end{align}
where the second relation can be obtained with the aid of the identity $\id\otimes U\dket{\id}=U^T\otimes \id\dket{\id}$ and the fact that the relation $\forall U^T\in\SU(d)$ is equivalent to $\forall U\in\SU(d)$.
	In Appendix\,\ref{app:f_is_rep} we show to exploit these commutation relations to write
\begin{equation}
	\Omega = \frac{1}{d^2}\sum_i \frac{\left(P^i_{\bm{I}P}\right)^* \otimes P^i_{\bm{O}F}}{d_i}.
\end{equation}
where $\big\{P^i\big\}_i$ is an orthogonal basis\footnote{A set of operators $\big\{P^i\big\}_i$ is orthogonal if $\tr(P{^i}^\dagger P^j)=0$ when $i\neq j$.} for the linear space spanned by operators %$P\in\L(\H_{\bm{O}}\otimes\H_F)$
$P\in\L(\mathbb{C}_d^{\otimes k}\otimes\mathbb{C}_d)$   respecting %$[P, {U_{\bm{O}}^*}^{\otimes k}\otimes f(U)_F]=0$
$[P, {U^*}^{\otimes k}\otimes f(U)]=0$ for all $U\in\SU(d)$ and $d_i:=\tr(P^i {P^i}^\dagger)$. One way to obtain such basis is to find the isometries between the equivalent irreducible representations of the group given by $U^{*^{\otimes k}}\otimes f(U)$ \cite{harrisBook}. In Appendix~\ref{app:explicit} we present examples and explicit constructions for the case $f(U)=U^*$.

	In Ref.~\cite{bisio13} the authors show that when $f$ is a homomorphism, every superchannel admits a parallel implementation without decreasing the average fidelity. In other words, sequential strategies and even general indefinite causal order strategies cannot outperform parallel ones. We state this main result of Ref.~\cite{bisio13} here and present an alternative proof in Appendix~\ref{app:par_is_optimal}.
\begin{figure}[h!] \label{fig:par_is_optimal_for_rep}
\begin{center}
	\includegraphics[scale=0.42]{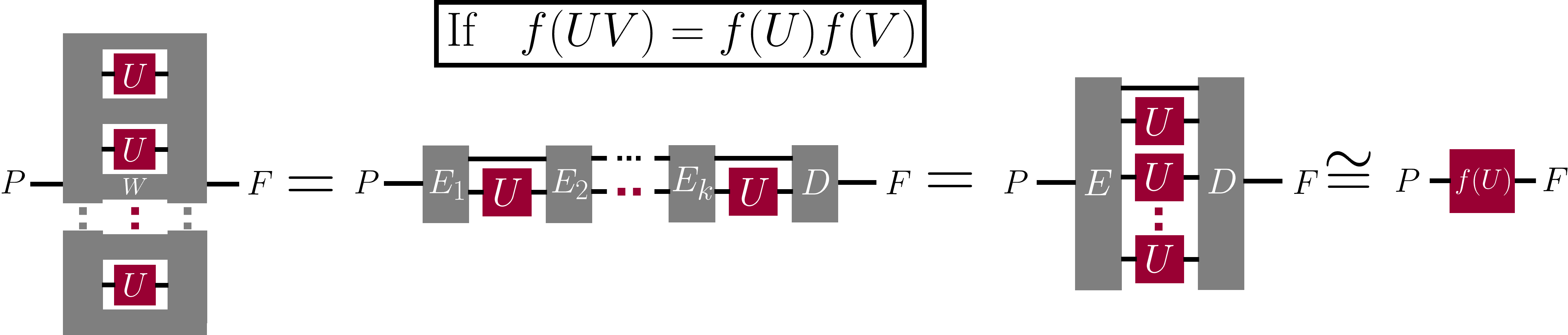} 
\end{center}
%\begin{minipage}{\textwidth}
\caption{If the function $f$ respects $f(UV)=f(V)f(U)$, every protocol that transform $k$ copies of $U$ into $f(U)$ can be made by a parallel circuit with the same average fidelity \cite{bisio13}. Examples of functions respecting this property are unitary complex conjugation $f(U)=U^*$ and unitary cloning $f(U)=U\otimes U$.}
\end{figure}

%%%%%%%%%%%%%%%%%%%%%%%%%%%%%%%%%%%%%%%%%%%%%%%%%%%%%%%%%%%%%%%%%%%%%%%%%%%%%%%%%%%%%
%%%%%%%%%%%%%%%%%%%%%%%%%%%%%%%%%%%%%% THEOREM %%%%%%%%%%%%%%%%%%%%%%%%%%%%%%%%%%%%%%
%%%%%%%%%%%%%%%%%%%%%%%%%%%%%%%%%%%%%%%%%%%%%%%%%%%%%%%%%%%%%%%%%%%%%%%%%%%%%%%%%%%%%
\begin{proposition} [Ref.~\cite{bisio13}] \label{prop:f_is_rep}
Let $S\in\L(\H_P\otimes\H_{\bm{I}}\otimes\H_{\bm{O}}\otimes\H_F)$ be a general superchannel that transforms $k$ uses of a unitary operator $U$ into $f(U)$ with average fidelity $\mean{F}$. If $f(UV)=f(U)f(V)$, there exists a parallel superchannel $S'\in\L(\H_P\otimes\H_{\bm{I}}\otimes\H_{\bm{O}}\otimes\H_F)$ that transforms $k$ uses of a unitary operator $U$ into $f(U)$ with average fidelity $\mean{F}$ and respects the commutation relation
\begin{equation}
	[S', f(A)_P \otimes A_{\bm{I}}^{*^{\otimes k}} \otimes B_{\bm{O}}^{*^{\otimes k}} \otimes f(B)_F]=0,
\end{equation}
for any unitary operators $A,B$.

	Additionally, the action of $S'$ on $\dketbra{U}{U}^{\otimes k}$ is described by its action on the identity channel $\dketbra{\id}{\id}^{\otimes k}$ via
\begin{equation} \label{eq:S_is_covariant}
S'*\dketbra{U}{U}_{\bm{IO}}^{\otimes k} = \Big(\id_P \otimes f(U)_F\Big)
\Big( S'*\dketbra{\id}{\id}^{\otimes k}_{\bm{IO}}\Big)
\Big(\id_P \otimes f(U)_F\Big)^\dagger .
\end{equation}
\end{proposition}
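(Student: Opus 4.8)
The plan is to prove the statement in two stages, broadly along the lines of~\cite{bisio13}: a \emph{covariantisation} step that twirls an arbitrary general superchannel $S$ into one obeying the stated commutation relation without changing its average fidelity, and a \emph{parallelisation} step that replaces the resulting covariant general superchannel by a covariant \emph{parallel} one at no cost. Throughout I use the identity $\mean{F}=\tr(S\,\Omega)$ together with the invariance of the performance operator recorded in~\eqref{eq:f(u)_commutation}--\eqref{eq:f(u)_commutation2}.

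For the twirl, introduce the family of unitaries
\begin{equation}
  W_{A,B}:=f(A)_P\otimes A_{\bm{I}}^{*^{\otimes k}}\otimes B_{\bm{O}}^{*^{\otimes k}}\otimes f(B)_F , \qquad A,B\in\SU(d),
\end{equation}
which -- precisely because $f$ is a homomorphism, so that $f(A_1)f(A_2)=f(A_1A_2)$ -- is a (reducible) representation of $\SU(d)\times\SU(d)$; then set $S':=\int_\text{Haar}\int_\text{Haar} W_{A,B}\,S\,W_{A,B}^\dagger\,\dif A\,\dif B$. Two things must be verified. First, $S'$ is again a general (respectively sequential, parallel) superchannel: conjugation by $f(A)_P$ and $f(B)_F$ only pre- and post-composes with unitary channels on the past and future wires, while conjugation by $A_{\bm{I}}^{*^{\otimes k}}$ and $B_{\bm{O}}^{*^{\otimes k}}$ amounts, via the link-product identities of Sec.~\ref{sec:k-slots}, to replacing every inserted channel by a fixed unitary conjugate of it; both operations send channels to channels and preserve the causal structure, and the set of superchannels of a fixed class is convex, so the Haar average remains inside it. Second, $\tr(S'\Omega)=\tr(S\Omega)$, by cyclicity of the trace and $W_{A,B}^\dagger\,\Omega\,W_{A,B}=\Omega$, which is exactly~\eqref{eq:f(u)_commutation}--\eqref{eq:f(u)_commutation2}. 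By construction $[S',W_{A,B}]=0$ for all $A,B$, which is the asserted commutation relation, and~\eqref{eq:S_is_covariant} then follows at once: writing $\dket{U}^{\otimes k}_{\bm{IO}}=\big(\id_{\bm{I}}\otimes U_{\bm{O}}^{\otimes k}\big)\dket{\id}^{\otimes k}$, feeding it into the link product, and using $[S',\id_{P\bm{I}}\otimes U_{\bm{O}}^{*^{\otimes k}}\otimes f(U)_F]=0$ together with the transpose trick $\id\otimes U\dket{\id}=U^T\otimes\id\dket{\id}$ to carry the factors of $U$ onto the future wire.

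The substantive step is to turn the covariant general superchannel $S'$ into a parallel one $S''$. Here I would decompose $\H_P\otimes\H_{\bm{I}}$ and $\H_{\bm{O}}\otimes\H_F$ into irreducible components of $A\mapsto f(A)_P\otimes A_{\bm{I}}^{*^{\otimes k}}$ and $B\mapsto B_{\bm{O}}^{*^{\otimes k}}\otimes f(B)_F$ respectively. Covariance forces $S'$ to be block-diagonal, supported only on pairs of equivalent irreducible components and determined by operators on the associated multiplicity spaces; moreover, with the explicit form $\Omega=\frac{1}{d^2}\sum_i\frac{(P^i_{\bm{I}P})^*\otimes P^i_{\bm{O}F}}{d_i}$ from Appendix~\ref{app:f_is_rep}, only a restricted ``invariant'' portion of those blocks contributes to $\tr(S'\Omega)$. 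Because $f$ is a homomorphism, the relevant irreducible content of the target $f(U)$ is already available, through the commutant of $U^{*^{\otimes k}}$, in the $k$-fold tensor representation that a parallel circuit carries on $\H_{\bm{I}}\otimes\H_{\bm{O}}$; one can therefore write down an explicit parallel superchannel $S''=E*D$ whose encoder $E$ prepares a covariant probe state on $\H_{\bm{I}}\otimes\H_\text{aux}$ and whose decoder $D$ is a covariant channel $\H_{\bm{O}}\otimes\H_\text{aux}\to\H_F$ realising exactly those invariant blocks of $S'$, whence $\tr(S''\Omega)=\tr(S'\Omega)=\mean{F}$ and $S''$ is the parallel covariant superchannel asserted by the statement.

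I expect the genuine obstacle to be exactly this last step: showing that the extra freedom of a sequential or indefinite-causal memory strung between the $k$ slots is invisible to $\tr(S\,\Omega)$, \ie\ that the invariant multiplicity-space data of \emph{any} covariant general superchannel can be reproduced simultaneously by a single-encoder/single-decoder circuit. This is precisely where the hypothesis $f(UV)=f(U)f(V)$ is indispensable: it makes the $\bm{O}$--$F$ side a genuine representation of the \emph{same} group, whose irreducible constituents already occur in the parallel $k$-fold resource, so nothing needs to be carried forward between uses. An alternative route -- matching $\max_{\text{general}}\tr(S\Omega)=\max_{\text{parallel}}\tr(S\Omega)$ through the SDP duality of Sec.~\ref{sec:SDP} by exhibiting a common optimal value -- would also establish optimality of parallel strategies, but it would not deliver the stronger per-$S$ conclusion stated in the proposition.
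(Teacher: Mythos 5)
Your first stage (the Haar twirl with $W_{A,B}=f(A)_P\otimes A_{\bm{I}}^{*^{\otimes k}}\otimes B_{\bm{O}}^{*^{\otimes k}}\otimes f(B)_F$, preservation of the superchannel class, invariance of $\tr(S\Omega)$ via $[\Omega,W_{A,B}]=0$, and the derivation of the covariance relation~\eqref{eq:S_is_covariant}) is correct and is essentially what the paper does in Appendix~\ref{app:par_is_optimal}. The problem is the second stage, and you have in effect diagnosed it yourself: the claim that the covariant general superchannel $S'$ can be replaced by a parallel $S''=E*D$ ``realising exactly those invariant blocks'' is asserted, not proved. Block-diagonality of $S'$ with respect to the irreducible decomposition of $f(A)_P\otimes A_{\bm{I}}^{*^{\otimes k}}$ and $B_{\bm{O}}^{*^{\otimes k}}\otimes f(B)_F$ tells you how $S'$ looks as an operator, but it does not by itself show that an operator with that block structure, which a priori encodes adaptive or causally indefinite memory between the $k$ slots, satisfies the parallel constraints of Eq.~\eqref{eq:parcond} or can be reproduced by a single encoder/decoder pair with the same fidelity. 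That implication is exactly the content of the proposition, so as written the argument is circular at its crucial point.

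What closes the gap in the paper is a concrete operational lemma (Lemma~\ref{lemma:bisio}, Proposition~4 of Ref.~\cite{bisio13}), which makes no use of multiplicity-space bookkeeping: if $H:=\tr_F(S')$ commutes with $\id_{P}\otimes\id_{\bm{I}}\otimes U_{\bm{O}}^{\otimes k}$ for all relevant $U$, then the parallel superchannel with encoder $E$ preparing the state $\bigl(\sqrt{H}\otimes\id\bigr)\dket{\id}\otimes\dket{\id}$ (a ``delayed-input-state''/learning probe) and decoder $D$ obtained by conjugating $S'$ with $\sqrt{H}^{-1}$ reproduces $S'*\dketbra{U^{\otimes k}}{U^{\otimes k}}$ \emph{exactly} for every such $U$; the commutation lets the unknown $U^{\otimes k}$ be transferred onto the auxiliary wire through the $\sqrt{H}$-modified maximally entangled state, and the $\sqrt{H}^{-1}$ in the decoder cancels the $\sqrt{H}$ in the probe. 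The twirled $S'$ satisfies this hypothesis because $[\tr_F(S'),\id_P\otimes\id_{\bm{I}}\otimes B_{\bm{O}}^{*^{\otimes k}}]=0$ follows from the twirl. If you want to keep your representation-theoretic route you would need to prove an equivalent statement—that every covariant general superchannel's restriction to the data seen by $U^{\otimes k}\mapsto f(U)$ admits a single-encoder/single-decoder realisation—which is precisely the missing lemma; alternatively, matching the parallel and general SDP optima, as you note, would only give optimality of parallel strategies, not the per-$S$ statement of the proposition.
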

%%%%%%%%%%%%%%%%%%%%%%%%%%%%%%%%%%%%%%%%%%%%%%%%%%%%%%%%%%%%%%%%%%%%%%%%%%%%%%%%%%%%%
%%%%%%%%%%%%%%%%%%%%%%%%%%%%%%%%%%%%%% THEOREM %%%%%%%%%%%%%%%%%%%%%%%%%%%%%%%%%%%%%%
%%%%%%%%%%%%%%%%%%%%%%%%%%%%%%%%%%%%%%%%%%%%%%%%%%%%%%%%%%%%%%%%%%%%%%%%%%%%%%%%%%%%%

\subsection{Unitary complex conjugation}

	This subsection addresses the case where $f$ stands for the unitary complex conjugation in the computational basis, that is, $f(U)=U^*$. We first point that $(UV)^*=U^*V^*$, hence, unitary complex conjugation function is a homomorphism. 
	
	The performance operator for unitary complex conjugation is given by
\begin{align}
		\Omega=&\frac{1}{d^2} \int_\text{Haar} \dketbra{U^*}{U^*}_{PF}  \otimes \dketbra{U^*}{U^*}^{\otimes k}_{\bm{IO}} \;\dif U  \nonumber  \\
		=&\frac{1}{d^2} \int_\text{Haar} \dketbra{U}{U}_{PF}  \otimes \dketbra{U}{U}^{\otimes k}_{\bm{IO}} \;\dif U,\label{eq:performance_conj}  
\end{align}
and respects the commutation relations
\begin{align}
	[\Omega, A_P \otimes A_{\bm{I}}^{^{\otimes k}} \otimes B_{\bm{O}}^{^{\otimes k}} \otimes B_F]&=0 \quad \forall A,B \in\SU(d)
	%[\Omega, U_P \otimes U_{\bm{I}}^{^{\otimes k}} \otimes \id_{\bm{O}} \otimes \id_F]&=0 \quad \forall U \in\SU(d) \ . 
\end{align}
	Following Eq.~\ref{eq:performance_conj}, the performance operator can be evaluated by finding a basis for operators commuting with $U^{\otimes(k+1)}$. For the particular case of $k=1$, an explicit form of the performance operator can be found at Ref.~\cite{ebler16}. In Appendix~\ref{app:explicit} we present an explicit form for the case $k=1$ and $k=2$.

	When only a single used is allowed ($k=1$), Ref.~\cite{ebler16} shows that the optimal average fidelity is given by
		\begin{equation}
		\mean{F} =\frac{2}{d(d-1)} \ . \label{eq:cjk1}
	\end{equation}
Also, for the case where $k=d-1$, Ref.~\cite{miyazaki17} presents an explicit parallel superchannel which performs the  transformation $U^{\otimes (d-1)}\mapsto U^*$ exactly, meaning that $\mean{F}=1$.

	We finish this subsection by summarising the main results about optimal unitary complex conjugation:
\begin{itemize}
\item (Ref.~\cite{ebler16}) For $k=1$ use, the optimal average fidelity for $U\mapsto U^*$ is $\mean{F} =\frac{2}{d(d-1)}$
\item (Ref.~\cite{miyazaki17}) When $k=d-1$, the transformation  $U^{\otimes k}\mapsto U^*$ can be obtained exactly
\item (Ref.~\cite{bisio10}) For any number of uses $k$, the optimal average fidelity is attainable by a parallel superchannel
\end{itemize}

%%%%%%%%%%%%%%%%%%%%%%%%%%%%%%%%%%%%%%%%%%%%%%%%%%%%%%%%%%%%%%%%%%%%%%%%%%%%%%%%%%%%%%%%%%%%%%%%%%%%%%%%%%%%%%%%%
%%%%%%%%%%%%%%%%%%%%%%%%%%%%%%%%%%%%%%%%%%%%%%%%%% NEW SECTION %%%%%%%%%%%%%%%%%%%%%%%%%%%%%%%%%%%%%%%%%%%%%%%%%%
%%%%%%%%%%%%%%%%%%%%%%%%%%%%%%%%%%%%%%%%%%%%%%%%%%%%%%%%%%%%%%%%%%%%%%%%%%%%%%%%%%%%%%%%%%%%%%%%%%%%%%%%%%%%%%%%%

\section{Analysing the anti-homomorphic case: $f(UV)=f(V)f(U)$  } \label{sec:f_T_isREP}
		We now consider the class of functions $f$ with ``inverts'' the operator composition. A function $f:\SU(d)\to\SU(d')$ is an anti-homomorphism, if $f$ respects
\begin{equation} \label{eq:not_group}
 f(UV)=f(V)f(U), \quad \forall U,V\in\SU(d),
\end{equation}
or, equivalently,
\begin{equation}
	f(UV)^T=f(U)^T f(V)^T, \quad \forall U,V\in\SU(d).
\end{equation}
{{We remark that, when the dimension $d$ is equal to $d'$, up to a unitary equivalence, the only non-trivial anti-homomorphisms $f:\SU(d)\to\SU(d)$ are the unitary transposition $f_\text{trans}(U)=U^T$ and the unitary inversion $f_\text{inv}(U)=U^{\dagger}$ (remark is also noted by Ref.~\cite{chiribella20timeflip}). This follow from the fact that there exists only three $d-$dimensional representations for the group $\SU(d)$ \cite{harrisBook}, the trivial representation $f_\text{trivial}(U)=\id$, the defining representation, $f_\text{def}(U)=U$, and the conjugate representation $f_\text{conj}(U)=U^*$.}}

	When $f(UV)=f(V)f(U)$, calculations analogous to Eq.~\eqref{eq:omega_invariance} shows that the performance operator respects the commutation relations
\begin{align}
	[\Omega, f(U)^T_P \otimes \id_{\bm{I}} \otimes U_{\bm{O}}^{*^{\otimes k}} \otimes \id_F]&=0 \quad\quad \forall U \in\SU(d)
	\label{eq:f(u)T_commutation}\\
	[\Omega, \id_P \otimes U_{\bm{I}}^{*^{\otimes k}} \otimes \id_{\bm{O}} \otimes f(U)_F^T ] &=0  \quad\quad \forall U \in\SU(d) \ .
	\label{eq:f(u)T_commutation2}
\end{align}
By exploiting these relations, in Appendix~\ref{app:fT_is_rep} we show that this performance operator can be expressed as
\begin{equation}\label{eq:perfomance_anti_homo}
\Omega = \frac{1}{d^2} 
\sum_i \frac{ \left(P^i_{\bm{O}P}\right)^* \otimes P^i_{\bm{I}F}}   {d_i} \ ,
\end{equation}
where $\big\{P^i\big\}_i$ is an orthogonal basis for the linear space spanned by linear operators $P\in\L(\mathbb{C}_d^{\otimes k}\otimes \mathbb{C}_d)$ commuting with $U^{*^{\otimes k}}\otimes f(U)^T$ and $d_i:=\tr(P^i {P^i}^\dagger)$. One way to obtain this basis is by finding the isometries between equivalent irreducible representations of the group given by $U^{*^{\otimes k}}\otimes f(U)^T$. In Appendix~\ref{app:explicit} we present examples and explicit constructions for the case $f(U)=U^T$ and $f(U)=U^{-1}$.

	Similarly to the homomorphic case, superchannels used to implement anti-homomorphic unitary transformations enjoy useful covariant properties.
%%%%%%%%%%%%%%%%%%%%%%%%%%%%%%%%%%%%%%%%%%%%%%%%%%%%%%%%%%%%%%%%%%%%%%%%%%%%%%%%%%%%%
%%%%%%%%%%%%%%%%%%%%%%%%%%%%%%%%%%%%%% THEOREM %%%%%%%%%%%%%%%%%%%%%%%%%%%%%%%%%%%%%%
%%%%%%%%%%%%%%%%%%%%%%%%%%%%%%%%%%%%%%%%%%%%%%%%%%%%%%%%%%%%%%%%%%%%%%%%%%%%%%%%%%%%%
\begin{lemma} \label{lemma:fT_is_rep}
	Let $S\in\L(\H_P\otimes\H_{\bm{I}}\otimes\H_{\bm{O}}\otimes\H_F)$ be a parallel/sequential/general superchannel that transforms $k$ uses of a unitary operator $U$ into $f(U)$ with average fidelity $\mean{F}$. If $f(UV)=f(V)f(U)$, there exists a parallel/sequential/general superchannel $S'\in\L(\H_P\otimes\H_{\bm{I}}\otimes\H_{\bm{O}}\otimes\H_F)$ that transforms $k$ uses of a unitary operator $U$ into $f(U)$ with average fidelity $\mean{F}$ and respects the commutation relation
\begin{equation}
	\Big[S', f(A)_P^T \otimes B_{\bm{I}}^{*^{\otimes k}} \otimes A_{\bm{O}}^{*^{\otimes k}} \otimes f(B)^T_F\Big]=0,
\end{equation}
for any unitary operators $A,B$.

	Additionally, the action of $S'$ on $\dketbra{U}{U}^{\otimes k}$ is described by its action on the identity channel $\dketbra{\id}{\id}^{\otimes k}$ via
\begin{equation} \label{eq:S_is_covariant2}
	S'*\dketbra{U}{U}^{\otimes k} = \Big(\id_P \otimes {f(U^T)}^T_F\Big)
	 \Big( S'*\dketbra{\id}{\id}^{\otimes k}_{\bm{IO}}\Big)
	 \Big(\id_P \otimes {f(U^T)}^T_F\Big)^\dagger .
\end{equation}
\end{lemma}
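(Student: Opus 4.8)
The plan is to mimic the Holevo-type covariant-averaging argument that already appears (as referenced in the excerpt) in the proof of Proposition~\ref{prop:f_is_rep}, but adapting it to the anti-homomorphic twist in the tensor-factor placement. Start from an arbitrary parallel/sequential/general superchannel $S$ achieving average fidelity $\mean{F}=\tr(S\Omega)$. For a fixed $V\in\SU(d)$ define the ``twirled'' operator by conjugating $S$ with the unitary $W_V := f(V)^T_P \otimes V^{*^{\otimes k}}_{\bm{I}} \otimes V^{*^{\otimes k}}_{\bm{O}} \otimes f(V)^T_F$ and then averaging over the Haar measure of $\SU(d)$; the candidate $S'$ is the result of a suitable two-sided version of this average (one copy of the group acting on the $P,\bm O$ legs via $f(A)^T\otimes A^*$, an independent copy acting on $\bm I, F$ via $B^*\otimes f(B)^T$, to match the stated commutation relation with \emph{two} independent unitaries $A,B$). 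First I would check that each class of superchannels (parallel, sequential, general) is closed under conjugation by $W_V$ and under Haar averaging: the trace conditions in Eqs.~\eqref{eq:parcond}, \eqref{eq:seqcond} and the general-superchannel conditions are all linear and are preserved by conjugation with a unitary of the product form $\cdots_P\otimes\cdots_{\bm I}\otimes\cdots_{\bm O}\otimes\cdots_F$ (since such conjugation commutes with the partial traces and the trace-and-replace maps on each labelled subsystem), and positivity plus the affine constraints survive convex/Haar averaging. Hence $S'$ lies in the same class as $S$.

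Next I would verify that the averaging does not change the figure of merit, i.e. $\tr(S'\Omega)=\tr(S\Omega)=\mean{F}$. This is where the hypotheses $f(UV)=f(V)f(U)$ and $f(U^*)=f(U)^*$-type relations enter: exactly the computation behind Eqs.~\eqref{eq:f(u)T_commutation}--\eqref{eq:f(u)T_commutation2} shows that $\Omega$ already commutes with $W_V$ for every $V$ (and with the independent $A$-twirl and $B$-twirl separately), so $\tr(W_V S W_V^\dagger\,\Omega)=\tr(S\,W_V^\dagger\Omega W_V)=\tr(S\Omega)$; integrating over $V$ preserves this. By construction $S'$ commutes with $W_A$ for all $A$, and — running the two independent averages — with $f(A)^T_P\otimes B^{*^{\otimes k}}_{\bm I}\otimes A^{*^{\otimes k}}_{\bm O}\otimes f(B)^T_F$ for all unitary $A,B$, which is the displayed commutation relation.

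Finally, to get the covariance formula~\eqref{eq:S_is_covariant2}, I would feed $\dketbra{U}{U}^{\otimes k}_{\bm{IO}}$ into $S'$, use the link-product identity $\id\otimes U\dket{\id}=U^T\otimes\id\dket{\id}$ to move the $U$'s off the $\bm O$ legs and onto the $\bm I$ legs (so that $\dketbra{U}{U}^{\otimes k}_{\bm{IO}} = (U^{*^{\otimes k}}_{\bm I}\otimes\id_{\bm O})\dketbra{\id}{\id}^{\otimes k}(U^{*^{\otimes k}}_{\bm I}\otimes\id_{\bm O})^\dagger$ up to a transpose bookkeeping), then absorb those $\bm I$-leg unitaries using the commutation relation just proved (taking $A=\id$ and $B$ appropriately, or vice versa), leaving a residual unitary $f(\cdot)^T$ acting only on $F$; tracking the complex-conjugate/transpose conventions through $\dket{U}$ gives precisely the factor ${f(U^T)}^T_F$ appearing in~\eqref{eq:S_is_covariant2}. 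The main obstacle I anticipate is purely bookkeeping rather than conceptual: getting the four placements ($f(A)^T$ vs. $f(A)$, $A^*$ vs. $A$, which legs carry $A$ and which carry $B$) mutually consistent between the anti-homomorphism property, the Choi-vector identity, and the definition of $\Omega$ — a single misplaced transpose or conjugate collapses the argument, so I would pin down all conventions at the outset and carry a small $k=1$ sanity check (e.g. $f(U)=U^{-1}$, where $f(U)^T=U^*$) alongside the general derivation.
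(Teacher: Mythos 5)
Your proposal takes essentially the same route as the paper's proof: the paper defines $S'$ as precisely this double Haar twirl by $f(A)^T_P\otimes B_{\bm{I}}^{*^{\otimes k}}\otimes A_{\bm{O}}^{*^{\otimes k}}\otimes f(B)^T_F$, uses the commutation of $\Omega$ with these unitaries to get $\tr(S'\Omega)=\tr(S\Omega)$, pushes the twirl unitaries through the link product onto the input Choi operator so that choosing $A=\id$, $B=U^T$ yields the covariance relation \eqref{eq:S_is_covariant2}, and argues class preservation of parallel/sequential/general superchannels just as you outline (the paper routes this through the argument of Proposition~\ref{prop:f_is_rep}, while your direct appeal to invariance of the trace-and-replace constraints under product-unitary conjugation and Haar averaging is an equally valid, slightly more explicit variant). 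The only slip is your parenthetical $\dketbra{U}{U}^{\otimes k}_{\bm{IO}}=\big(U^{*^{\otimes k}}_{\bm{I}}\otimes\id_{\bm{O}}\big)\dketbra{\id}{\id}^{\otimes k}\big(\cdot\big)^\dagger$, which should carry $U^T$ rather than $U^*$ on the $\bm{I}$ legs since $\dket{U}=(U^T\otimes\id)\dket{\id}$ -- exactly the transpose bookkeeping you flagged, and it does not affect the argument.
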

	The proof of this Lemma can be found in Appendix~\ref{app:fT_is_rep}.
%%%%%%%%%%%%%%%%%%%%%%%%%%%%%%%%%%%%%%%%%%%%%%%%%%%%%%%%%%%%%%%%%%%%%%%%%%%%%%%%%%%%%
%%%%%%%%%%%%%%%%%%%%%%%%%%%%%%%%%%%%%% THEOREM %%%%%%%%%%%%%%%%%%%%%%%%%%%%%%%%%%%%%%
%%%%%%%%%%%%%%%%%%%%%%%%%%%%%%%%%%%%%%%%%%%%%%%%%%%%%%%%%%%%%%%%%%%%%%%%%%%%%%%%%%%%%

\subsection{Unitary transposition}
	This subsection addresses the case where $f$ stands for the unitary transposition in the computational basis, that is, $f(U)=U^T$. We first point that $(UV)^T=V^TU^T$, hence, unitary transposition function is an anti-homomorphism.
	
	The performance operator for unitary transposition is given by
\begin{align}
		\Omega=&\frac{1}{d^2} \int_\text{Haar} \dketbra{U^T}{U^T}_{PF}  \otimes \dketbra{U^*}{U^*}^{\otimes k}_{\bm{IO}} \;\dif U  \nonumber \\
		=&\frac{1}{d^2} \int_\text{Haar} \dketbra{U}{U}_{PF}  \otimes \dketbra{U^\dagger}{U^\dagger}^{\otimes k}_{\bm{IO}} \;\dif U, 
\end{align}
and respects the commutation relations
\begin{align}
	[\Omega, A_P \otimes B_{\bm{I}}^{*^{\otimes k}} \otimes A_{\bm{O}}^{*^{\otimes k}}  \otimes B_F]&=0 \quad \forall A,B \in\SU(d).
\end{align}
	With the aid of Eq.~\eqref{eq:perfomance_anti_homo}, the performance operator can be evaluated by finding a basis for operators commuting with $U^{*^{\otimes k}}\otimes U$. In Appendix~\ref{app:explicit} we present an explicit form for the case where $k=1$ and $k=2$.

	\subsubsection{Parallel unitary transposition and estimating unitary operations} \label{sec:trans=est}
	Let us consider the problem in which a unitary operation is uniformly sampled from $\SU(d)$ and our goal is to guess which unitary operation was sampled by performing at most $k$ used on it. An estimation strategy for this problem consists in preparing a bipartite state $\ket{\phi}\in\H_{\bm{I}}\otimes\H_{\text{aux}}$, which will be subject it to $k$ parallel%
	\footnote{When estimating unitaries which are uniformly sampled from a group, in our case, $\SU(d)$, adaptive and general strategies cannot provide an advantage over parallel ones~\cite{chiribella08memory_effects,bavaresco21b}. Hence, we can assume that the $k$ uses of the input operation are performed in parallel with no loss of generality.} 
	uses of an arbitrary unitary operator $U:\H_{\bm{I}}\to\H_{\bm{O}}$ to become 
	\begin{equation}
		\ket{\phi_U}:=\big(U^{\otimes k}_{\bm{I}}\otimes \id_\text{aux}\big)\ket{\phi}\in\H_{\bm{O}}\otimes\H_\text{aux}.
	\end{equation} 
	Then, we perform quantum measurement with POVM element $M_i\in\L(\H_{\bm{O}}\otimes\H_\text{aux})$ on $\ket{\phi_U}$ and, accordingly to the outcome $i$ of this measurement, we guess that the input unitary is described by $U_i$.
In terms of average fidelity, the performance of estimating uniform unitaries over $\SU(d)$ is given by: 
	\begin{align}
		\mean{F}_\text{ext}:=\int_{U\in\SU(d)} \sum_i 
		\tr\Big(M_i \ketbra{\phi_U}{\phi_U}\Big)\,
		 F\Big(\dketbra{U_i}{U_i},\dketbra{U}{U}\Big) \text{d}U .
	\end{align}
	
	Unitary estimation strategies are intimately related to measure-and-prepare superchannels defined in Sec.~\ref{sec:measure-and-prep_and_delayed}. More precisely, by defining 
	$S:=\sum_i\ketbra{\phi}{\phi}_{\bm{I}\text{aux}}*(M_i^T)_{\bm{O}\text{aux}}*\dketbra{U_i}{U_i}_{PF}$,
	and using the relation $\tr(A_1)\tr(B_2)=\tr(A_1\otimes B_2)=\tr(A_1*B_2)$
	we obtain 
%	\small
\begin{align}
\mean{F}_\text{ext}:=&\int_{U\in\SU(d)} \sum_i 
		\tr\Big(M_i \ketbra{\phi_U}{\phi_U}\Big)\,
		 F\Big(\dketbra{U_i}{U_i},\dketbra{U}{U}\Big) \text{d}U  \nonumber \\
=& \frac{1}{d^2}\int
\sum_i \tr\Big(\left( M_{i} \right)_{\bm{O}\text{aux}} \, \dketbra{U}{U}_{\bm{IO}}^{\otimes k}* \ketbra{\phi}{\phi}_{\bm{I}\text{aux}}\Big)
		\tr\Big(\dketbra{U_i}{U_i}_{PF}\,\dketbra{U}{U}_{PF}\Big) \text{d}U  \nonumber \\
=& \frac{1}{d^2}\int
		\tr\Big(  \sum_i 
		\left( M_{i}^T \right)_{\bm{O}\text{aux}} * \dketbra{U}{U}_{\bm{IO}}^{\otimes k}* \ketbra{\phi}{\phi}_{\bm{I}\text{aux}}*
		 \dketbra{U_i}{U_i}_{PF}\,	\dketbra{U}{U}_{PF}\Big) \text{d}U \nonumber \\	
=& \frac{1}{d^2}\int 
		\tr\Big( S * \dketbra{U}{U}_{\bm{IO}}^{\otimes k}\; \dketbra{U}{U}_{PF}\Big) \text{d}U \nonumber \\			
=& \int
		 F\Big(S*\dketbra{U}{U}_{\bm{IO}}^{\otimes k},\dketbra{U}{U}_{PF}\Big) \text{d}U.
\end{align} 
\normalsize
	Hence, the problem of estimating unitary operations may be viewed as transforming a unitary operation to itself with a measure-and-prepare strategy. In the language of this paper, this corresponds to implementing  the identity function, $f(U)=U$, with parallel measure-and-prepare superchannels.
	
We now show that when considering parallel strategies, unitary transposition is equivalent to the problem of estimating unitary operations uniformly sampled in $\SU(d)$.

\begin{figure}[h!] \label{fig:par=estimation}
	\begin{center}
		\includegraphics[scale=0.5]{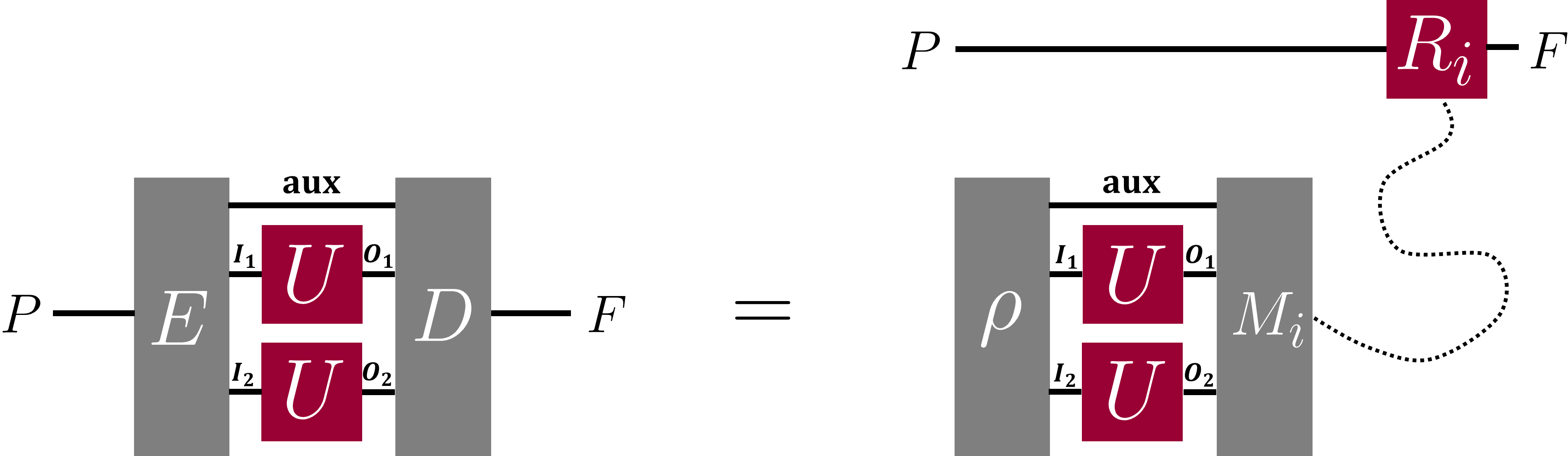} 
	\end{center}
%\begin{minipage}{\textwidth}
\caption{Every parallel strategy for unitary transposition and unitary inversion can be implemented by a prepare-and-measure strategy without changing its average fidelity performance. Additionally, the performance of parallel unitary transposition and parallel unitary inversion are both equivalent to the performance of estimating unitary quantum operations uniformly sampled in $\SU(d)$.}
\end{figure}

%%%%%%%%%%%%%%%%%%%%%%%%%%%%%%%%%%%%%%%%%%%%%%%%%%%%%%%%%%%%%%%%%%%%%%%%%%%%%%%%%%%%%
%%%%%%%%%%%%%%%%%%%%%%%%%%%%%%%%%%%%%% THEOREM %%%%%%%%%%%%%%%%%%%%%%%%%%%%%%%%%%%%%%
%%%%%%%%%%%%%%%%%%%%%%%%%%%%%%%%%%%%%%%%%%%%%%%%%%%%%%%%%%%%%%%%%%%%%%%%%%%%%%%%%%%%%
\begin{theorem} \label{theo:PAR=ESTtrans}
	The optimal average fidelity for parallel unitary transposition can always be attained by measure-and-prepare strategies. Moreover, the value of the optimal average fidelity for parallel unitary transposition is exactly the value of the optimal average fidelity for estimating unitary operations uniformly sampled in $\SU(d)$. 
\end{theorem}

\begin{proof}
	Let $S\in\L(\H_P\otimes\H_{\bm{I}}\otimes\H_{\bm{O}}\otimes\H_F)$ be a parallel superchannel which implements the desired transformation with average fidelity $\mean{F}$. By definition of parallel channels, $S$ must respect
\begin{equation}
		\tr_F(S) = \tr_{\bm{O}F}(S)\otimes \frac{\id_{\bm{O}}}{d_{\bm{O}}}.
\end{equation}
	Since $f(U)=U^T$ is an anti-homomorphic function, Lemma~\ref{lemma:fT_is_rep} ensures that, without loss of generality, $S$ respects the commutation relations
\begin{equation} \label{eq:antiHOMO}
	\Big[S, f(A)_P^T \otimes B_{\bm{I}}^{*^{\otimes k}} \otimes A_{\bm{O}}^{*^{\otimes k}} \otimes f(B)^T_F\Big]=0.
		\quad\quad \forall A,B\in\SU(d),
\end{equation}
	which implies
\begin{equation}
	 [f(A)_P^T \otimes B_{\bm{I}}^{*^{\otimes k}},\tr_{\bm{O}F}(S) ]=0,
	 	\quad\quad \forall A,B\in\SU(d).
\end{equation}
	And since $f(A)^T=A$, Schur's lemma ensures that $ \tr_{\bm{O}F}(S)= \frac{\id_P}{d_P}\otimes\tr_{P\bm{O}F}(S)$.
		
	Let $\H_{\bm{I}'}$ be an auxiliary space which is isomorphic to $\H_{\bm{I}}$. We define the quantum state $\rho:=\tr_{POF}(S)/\tr(S)$, and $\ket{\phi}\in \H_{\bm{I}}\otimes\H_{\bm{I}'}$ as be the purification of $\rho$, that is
\begin{equation}
	\ket{\phi}:= \Big(\sqrt{\rho}_{\bm{I}}\otimes \id_{\bm{I}'}\Big) \dket{\id}_{\bm{II}'},
\end{equation}
	where $\sqrt{\rho}$ is the unique positive semidefinite square root of $\rho$.
	Also, we define a quantum channel $\map{R}:\L(\H_P\otimes \H_{\bm{I}'}\otimes\H_{\bm{O}})\to\H_F$ by its Choi operator,
\begin{equation}
	R:= \Big(\id_P\otimes\sqrt{\rho}^{-1}_{\bm{I}'} \otimes \id_{\bm{O}F}\Big)\, S_{P\bm{I}'\bm{O}F} \, \Big(\id_P\otimes\sqrt{\rho}^{-1}_{\bm{I}'} \otimes \id_{\bm{O}F}\Big)^\dagger,
\end{equation}
where $\sqrt{\rho}^{-1}$ is the Moore-Penrose pseudoinverse of $\sqrt{\rho}$.
	Direct calculation shows that $S=\dketbra{\phi}{\phi}*R$, hence, instead of having an encoder channel and decoder channel construction $S=E*D$, can understand the parallel superchannel $S$ as preparing a quantum state $\ket{\phi}$, sending it over the input operation, then recovering the output state with the channel $\map{R}$. This class of parallel superchannels is known as learning process~\cite{bisio10} or delayed input-state superchannels~\cite{quintino19PRA}.
	
	Since $S$ is a learning process which respects the commutation relations presented in Eq.~\eqref{eq:antiHOMO}, we can follow the same steps of Thm.~1 of Ref.~\cite{bisio10} that $S$ can be implemented by a measure-and-prepare strategy. Additionally, since the quantum state $\rho:=\tr_{POF}(S)/\tr(S)$ respects the commutation relation
$
		[\rho, U^{\otimes k}] = 0,
$
for every  $U\in\SU(d)$, we also have\footnote{
A set of diagonalisable operators commute if and only if they are diagonal in the same basis. Hence, let $\{\ket{i}\}_i$ be an orthonormal basis in which $\rho$ and $U^{\otimes k}$ are diagonal. We can then write $\rho=\sum_i p_i \ketbra{i}{i}$ and verify that $\sqrt{\rho}=\sum_i \sqrt{p_i}\ketbra{i}{i}$ is also diagonal in the basis $\{\ket{i}\}_i$. }
 $[\sqrt{\rho}, U^{\otimes k}] = 0$ for every  $U\in\SU(d)$. Hence, the purified state $\ket{\phi}$ respects the identity
\begin{align}
	U^{\otimes k} \otimes \id \ket{\phi} 
	&=U^{\otimes k} \otimes \id \Big(\sqrt{\rho}\otimes \id\Big) \dket{\id}  \nonumber \\
	&= \Big(\sqrt{\rho}\otimes \id\Big)U^{\otimes k} \otimes \id  \dket{\id}  \nonumber \\
	&= \Big(\sqrt{\rho}\otimes \id\Big) \id \otimes {U^T}^{\otimes k}  \dket{\id}  \nonumber \\
	&=  \id \otimes {U^T}^{\otimes k} \Big(\sqrt{\rho}\otimes \id\Big)  \dket{\id}  \nonumber \\
	&=  \id \otimes {U^T}^{\otimes k} \ket{\phi}
\end{align}
for every $U\in\SU(d)$. Hence, by swapping the spaces $\H_{\bm{O}}$ and $\H_\text{aux}$ in the measurement $\{M_i\}_i$ of the measure-and-prepare strategy for unitary estimation, the average fidelity of the protocol is given by 
	$\mean{F}= \int F\Big(\dketbra{U}{U},S*\dketbra{U}{U}\Big) \text{d}U$, which is identical to the fidelity of unitary estimation protocols. We have then an method to convert parallel unitary transposition into an estimation strategy with the same average fidelity.
	
	Now, we point out that in optimal for unitary estimation, Ref.~\cite{chiribella05estimation} shows that the probe state $\ket{\phi}\in\H_{\bm{I}}\otimes\H_\text{aux}$ can be assumed without loss of generality to respect the identity $U^{\otimes k}\otimes \id \ket{\phi} = \id \otimes {U^T}^{\otimes k} \ket{\phi}$ for every $U\in\SU(d)$. Hence, every unitary estimation protocol can be converted to a parallel unitary transposition protocol with the same average fidelity.	
\end{proof}
%%%%%%%%%%%%%%%%%%%%%%%%%%%%%%%%%%%%%%%%%%%%%%%%%%%%%%%%%%%%%%%%%%%%%%%%%%%%%%%%%%%%%
%%%%%%%%%%%%%%%%%%%%%%%%%%%%%%%%%%%%%% THEOREM %%%%%%%%%%%%%%%%%%%%%%%%%%%%%%%%%%%%%%
%%%%%%%%%%%%%%%%%%%%%%%%%%%%%%%%%%%%%%%%%%%%%%%%%%%%%%%%%%%%%%%%%%%%%%%%%%%%%%%%%%%%%

	A direct implication of Thm.~\ref{theo:PAR=ESTtrans} is that any protocol for estimating $U^{\otimes k}$ can be converted to a parallel protocol for unitary transposition with the same average fidelity, \textit{vice versa}. Reference~\cite{acin00} shows that the optimal average fidelity for single copy unitary estimation is $\mean{F}_{\text{est},k=1}=  \frac{2}{d^2}$. Hence, for any dimension $d$ the optimal average fidelity for unitary transposition in the single copy ($k=1$) case is
\begin{equation}
	\mean{F}_{\text{trans},k=1}=  \frac{2}{d^2} \ .\label{eq:acin}
\end{equation}	
	Curiously, this the exactly same fidelity encountered in Ref.~\cite{ebler16} for optimal single copy unitary inversion. This can be easily explained in the qubit case, since there exists a deterministic exact protocol that transforms $U$ into $U^*$ which follows from the relation $\sigma_Y U \sigma_Y = U^*$, valid for every $U\in\SU(2)$. Hence, by combining transposition with complex conjugation, we obtain unitary inversion. For $d>2$, however, single copy deterministic unitary complex conjugation is not possible~\cite{miyazaki17}, hence any unitary inversion protocol which is a simple concatenation of transposition and conjugation is necessarily suboptimal. However, in Thm.~\ref{theo:PAR=ESTinv} we will show that when considering parallel strategies, the problem of unitary transposition is equivalent to unitary inversion for any dimension $d$ and any number of copies $k$.
	
	For the qubit case ($d=2$), Ref.~\cite{bagan03} proves that shown that the optimal average fidelity for qubit unitary estimation with $k$ copies is given by
\begin{equation} \label{eq:bagan}
	\mean{F}_{\text{est},d=2}=  \cos^2\left(\frac{\pi}{k+3}\right).
\end{equation} 
	Hence, due to Thm.~\ref{theo:PAR=ESTtrans}, this is also the optimal fidelity for qubit unitary transposition.

\subsubsection{Exponential advantage with sequential strategies}
	We now show that, for the task of deterministic unitary transposition, there exists an exponential gap between the performance of parallel and sequential strategies. 
	
%%%%%%%%%%%%%%%%%%%%%%%%%%%%%%%%%%%%%%%%%%%%%%%%%%%%%%%%%%%%%%%%%%%%%%%%%%%%%%%%%%%%%
%%%%%%%%%%%%%%%%%%%%%%%%%%%%%%%%%%%%%% THEOREM %%%%%%%%%%%%%%%%%%%%%%%%%%%%%%%%%%%%%%
%%%%%%%%%%%%%%%%%%%%%%%%%%%%%%%%%%%%%%%%%%%%%%%%%%%%%%%%%%%%%%%%%%%%%%%%%%%%%%%%%%%%%
\begin{theorem} \label{theo:SEQ>PAR}
For any dimension $d$, the optimal fidelity for parallel unitary transposition respects the upper bound 
\begin{equation}
	\mean{F}_\text{par}\leq 1-\frac{1}{(k+3)^2}.
\end{equation} 
Additionally, for any dimension $d$, there exists a sequential strategy attaining 
\begin{equation}
	\mean{F}_\text{seq}\geq  1-\left(1-\frac{1}{d^2}\right)^{\ceil{\frac{k}{d}}}.
\end{equation}
\end{theorem}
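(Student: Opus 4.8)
The plan is to prove the two inequalities by unrelated means: the parallel upper bound via the equivalence with unitary estimation (Thm~\ref{theo:PAR=EST}), and the sequential lower bound via an explicit ``attempt, reset, retry'' circuit assembled from a single-use probabilistic exact transposition together with the deterministic $(d-1)$-use implementation of $U^*$.

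For the parallel bound I would first invoke Thm~\ref{theo:PAR=EST} to replace $\mean{F}_\mathrm{par}$ by the optimal average fidelity $\mean{F}_\mathrm{est}^{(d)}$ for estimating a Haar-random $U\in\SU(d)$ from $k$ parallel uses. The point is then that $\mean{F}_\mathrm{est}^{(d)}\le \mean{F}_\mathrm{est}^{(2)}$ for all $d\ge 2$, and that by Ref.~\cite{bagan03} (Eq.~\eqref{eq:bagan}) one has $\mean{F}_\mathrm{est}^{(2)}=\cos^2\!\big(\tfrac{\pi}{k+3}\big)$: estimating an $\SU(d)$ unitary cannot be easier than estimating an $\SU(2)$ one with the same number of uses, since an $\SU(2)$-valued black box is obtained from an $\SU(d)$-valued one by feeding only block-diagonal inputs $V\oplus\id_{d-2}$, and the optimal estimator may be taken covariant (Holevo averaging~\cite{holevoBook}, as in the proof of Thm~\ref{thm:WCF}) so that its fidelity is independent of the particular input. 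Granting $\mean{F}_\mathrm{est}^{(d)}\le\cos^2(\tfrac{\pi}{k+3})$, the claim follows from Jordan's inequality $\sin x\ge\tfrac{2x}{\pi}$ on $[0,\tfrac{\pi}{2}]$: at $x=\pi/(k+3)\le\pi/4$ it yields $\cos^2 x=1-\sin^2 x\le 1-\tfrac{4}{(k+3)^2}\le 1-\tfrac{1}{(k+3)^2}$.

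For the sequential bound I would construct a sequential superchannel realising a \emph{probabilistic exact} transposition with success probability $p_s=1-\big(1-\tfrac{1}{d^2}\big)^{\ceil{k/d}}$; the stated lower bound on $\mean{F}_\mathrm{seq}$ is then immediate from the probabilistic-to-deterministic conversion of Sec.~\ref{sec:prop} ($\mean{F}\ge p_s$). Two ingredients are needed. First, a one-use probabilistic exact transposition: apply $U$ to the half $A$ of a maximally entangled pair $\tfrac{1}{\sqrt d}\dket{\id}_{AB}$, then Bell-measure the input register $P$ (carrying $\ket{\xi}$) against $A$; the outcome labelled by $\dket{\id}$ occurs with probability $1/d^2$ and leaves $U^T\ket{\xi}$ in $B$, while the outcome labelled by a Weyl operator $V_j$ leaves $U^TV_j^\dagger\ket{\xi}$ in $B$ with $j$ known (this is essentially the construction of Ref.~\cite{quintino19PRA}). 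Second, the deterministic exact channel $\rho\mapsto U^*\rho\,(U^*)^\dagger$ from $d-1$ uses of $U$, obtained by encoding $\mathbb{C}_d$ into the totally antisymmetric subspace of $(\mathbb{C}_d)^{\otimes(d-1)}$, on which $U^{\otimes(d-1)}$ acts as $U^*$ (Ref.~\cite{miyazaki17}). Since $\bar U\,U^T=\id$, applying $U^*$ and then the known $V_j$ to the failure output $U^TV_j^\dagger\ket{\xi}$ restores $\ket{\xi}$ exactly. Composing these in a causally ordered circuit — first attempt ($1$ use); on failure, reset ($d-1$ uses) and retry ($1$ use); repeat — gives, after $k$ uses, exactly $1+\floor{(k-1)/d}=\ceil{k/d}$ attempts, each succeeding with probability $1/d^2$ on the freshly restored input, hence overall success probability $1-(1-\tfrac1{d^2})^{\ceil{k/d}}$; remaining unused slots are filled with swap/discard operations. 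Wrapping encoder/feedforward/decoder channels around the uses of $U$ yields a valid sequential superchannel in the sense of Eq.~\eqref{eq:seqcond}.

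I expect the parallel side to be the more delicate one: making ``higher dimension is not easier'' rigorous requires converting the behaviour of the $\SU(d)$-optimal estimator on a measure-zero sub-ensemble of block-diagonal inputs into a genuine bound on the Haar average, which is where the covariance of the optimal estimator must be used carefully (alternatively one can invoke a known closed form for the $\SU(d)$ estimation fidelity and bound it directly). On the sequential side nothing is obstructed — the circuit is explicit — and the only care needed is bookkeeping: verifying that the reset returns precisely $\ket{\xi}$ so that the $\ceil{k/d}$ attempts are genuinely independent, and that the assembled operator satisfies the linear and positivity constraints~\eqref{eq:seqcond}.
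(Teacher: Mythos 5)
Your proposal is correct and follows essentially the same route as the paper: the parallel bound is obtained by reducing to unitary estimation via Thm.~\ref{theo:PAR=EST}, bounding the $\SU(d)$ estimation fidelity by the qubit value $\cos^2\!\left(\pi/(k+3)\right)$ of Ref.~\cite{bagan03} (the paper likewise simply asserts this monotonicity from $\SU(2)\subset\SU(d)$, so your flagged subtlety is no worse than the original), and then an elementary trigonometric estimate, with your Jordan-inequality step playing the role of the paper's power-series bound; the sequential bound is the same probabilistic-exact-to-deterministic conversion of Sec.~\ref{sec:prop} applied to a sequential protocol with $p_s=1-\left(1-\tfrac{1}{d^2}\right)^{\lceil k/d\rceil}$, which the paper cites from Ref.~\cite{quintino19PRA} and you reconstruct explicitly (correctly, including the count $1+\lfloor (k-1)/d\rfloor=\lceil k/d\rceil$). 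No gaps beyond those already present in the paper's own argument.
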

\begin{proof}
	We start with the observation that the group $\SU(2)$ is contained in $\SU(d)$. Consequently, the optimal average fidelity for qudit unitary estimation cannot be greater than qubit unitary estimation. Hence, due to Thm.~\ref{theo:PAR=ESTtrans}, for any dimension $d$, the optimal average fidelity for unitary transposition is bounded by  $\mean{F}= \cos^2\left( \frac{\pi}{k+3}\right)$. By expanding the cosine function in power series we see that 
\begin{align}
\cos^2\left(\frac{\pi}{k+3}\right)&\leq1 -\left(\frac{\pi}{k+3}\right)^2+\left(\frac{\pi}{k+3}\right)^4 \nonumber \\
&\leq 1 - \left(\frac{1}{k+3}\right)^2 \ .
\end{align}
	Hence, for any dimension, we have the upper bound  $\mean{F}_\text{par}\leq 1-\frac{1}{(k+3)^2}$.

For the sequential case, we use the probabilistic exact protocol presented in Ref.~\cite{quintino19PRA}) which has a success probability of $p_s=  1-\left(1-\frac{1}{d^2}\right)^{\ceil{\frac{k}{d}}}$, where $\ceil{.}$ stands for the ceiling function rounding to the next largest integer. As stated in Sec.~\ref{sec:prop}, every probabilistic exact protocol with $p_s$ leads to a deterministic non-exact protocol with average fidelity $\mean{F}\geq p_s$. Hence, we obtain the lower bound $\mean{F}_\text{seq}\geq  1-\left(1-\frac{1}{d^2}\right)^{\ceil{\frac{k}{d}}}$ and conclude the proof.
\end{proof}
%%%%%%%%%%%%%%%%%%%%%%%%%%%%%%%%%%%%%%%%%%%%%%%%%%%%%%%%%%%%%%%%%%%%%%%%%%%%%%%%%%%%%
%%%%%%%%%%%%%%%%%%%%%%%%%%%%%%%%%%%%%% THEOREM %%%%%%%%%%%%%%%%%%%%%%%%%%%%%%%%%%%%%%
%%%%%%%%%%%%%%%%%%%%%%%%%%%%%%%%%%%%%%%%%%%%%%%%%%%%%%%%%%%%%%%%%%%%%%%%%%%%%%%%%%%%%
	
	We finish this subsection by summarising the main results about optimal unitary transposition:
\begin{itemize}
\item (This work + Ref.~\cite{acin00}) For $k=1$ use, the optimal average fidelity for $U\mapsto U^T$ is $\mean{F} =\frac{2}{d^2}$
\item (This work + Ref.~\cite{bagan03}) The optimal average fidelity parallel strategy is the optimal fidelity for unitary dynamics estimation, and when $d=2$ we have $\mean{F}_\text{par}=\cos^2\left(\frac{\pi}{k+3}\right)$
\item (This work) The average fidelity for transforming $k$ uses of $d$-dimensional unitary operations respect $\mean{F}_\text{par}\leq 1-\frac{1}{(k+3)^2}$
\item (This work + Ref.~\cite{quintino19PRA}) There exists a sequential superchannel with $k$ uses that attains $\mean{F}_\text{seq}\geq 1-\left(1-\frac{1}{d^2}\right)^{\ceil{\frac{k}{d}}}$
\end{itemize}

\subsection{Unitary inversion}
	In this subsection, we discuss specific results for unitary inversion $f(U)=U^{-1}$, which is an anti-homomorphism since for unitary operations $U^{-1}=U^\dagger$ and $(UV)^\dagger=V^\dagger U^\dagger$. 

	The performance operator for unitary inversion is given by
\begin{align}
		\Omega=&\frac{1}{d^2} \int_\text{Haar} \dketbra{U^{-1}}{U^{-1}}_{PF}  \otimes \dketbra{U^*}{U^*}^{\otimes k}_{\bm{IO}} \;\dif U  \nonumber \\
		=&\frac{1}{d^2} \int_\text{Haar} \dketbra{U}{U}_{PF}  \otimes \dketbra{U^T}{U^T}^{\otimes k}_{\bm{IO}} \;\dif U,\label{eq:performance_inv}  
\end{align}
and respects the commutation relations
\begin{align} \label{eq:commutation_inv}
	[\Omega, A_P \otimes B_{\bm{I}}^{\otimes k} \otimes A_{\bm{O}}^{\otimes k}  \otimes B_F]&=0 \quad \forall A,B \in\SU(d)
\end{align}
	Following Eq.~\eqref{eq:perfomance_anti_homo}, the performance operator can be evaluated by finding a basis for operators commuting with $U^{\otimes (k+1)}$. In Appendix~\ref{app:explicit} we present an explicit form for the case $k=1$ and $k=2$.
	
	Similarly to the transposition case, parallel unitary inversion can always be implemented by a measure-and-prepare strategy and, it is also equivalent to the unitary estimation problem.

%%%%%%%%%%%%%%%%%%%%%%%%%%%%%%%%%%%%%%%%%%%%%%%%%%%%%%%%%%%%%%%%%%%%%%%%%%%%%%%%%%%%%
%%%%%%%%%%%%%%%%%%%%%%%%%%%%%%%%%%%%%% THEOREM %%%%%%%%%%%%%%%%%%%%%%%%%%%%%%%%%%%%%%
%%%%%%%%%%%%%%%%%%%%%%%%%%%%%%%%%%%%%%%%%%%%%%%%%%%%%%%%%%%%%%%%%%%%%%%%%%%%%%%%%%%%%
\begin{theorem} \label{theo:PAR=ESTinv}
	The optimal average fidelity for parallel unitary inversion can always be attained by measure-and-prepare strategies. Moreover, the value of the optimal average fidelity for parallel unitary transposition is exactly the value of the optimal average fidelity for estimating unitary operations uniformly sampled in $\SU(d)$. 
\end{theorem}
\begin{proof}
	Since unitary inversion is an anti-homomorphism, the first part of the proof of Thm.~\ref{theo:PAR=ESTtrans} ensures that, with no loss of generality, parallel superchannels for unitary inversion can be implemented by parallel measure-and-prepare strategies. 
	Let $S_\text{trans}$ be a parallel measure-and-prepare superchannel that attains unitary transposition.
	By definition of parallel measure-and-prepare superchannels $S_\text{trans}$  can be written as 
\begin{equation}
	S_\text{trans} = \sum_i (\rho*M_i)_{\bm{IO}} \otimes (R_i)_{PF}
\end{equation}
where $R_i$ are the Choi operator of quantum channels from $\H_P$ to $\H_F$. We can now define $S_\text{inv}:= \sum_i (\rho*M_i)_{\bm{IO}} \otimes (R_i^T)_{PF}$, which is a valid parallel measure-and-prepare superchannel because if $R_i$ is a quantum channel, $R_i^T$ is also a valid quantum channel.
From Lemma~\ref{theo:white_noise_cov}, with no loss of generality, we can write
\begin{align}
	S_\text{trans}*\dketbra{U}{U}^{\otimes k} & = \sum_i \tr\Big((\rho*M_i)\dketbra{U}{U}\Big) R_i 	\\
	&= \eta \dketbra{U^T}{U^T} + (1-\eta) \id \otimes \frac{\id}{d}
\end{align}
which holds true for every $U\in\SU(d)$.
	And since $\dketbra{U}{U}^T=\dketbra{U^*}{U^*}$, we can then see that the performance of unitary inversion is equivalent to unitary transposition
\begin{align}
S_\text{inv}*\dketbra{U}{U}^{\otimes k} & = \sum_i \tr\Big((\rho*M_i)\dketbra{U}{U}\Big) R_i^T 	\\
	&= \eta \dketbra{U^T}{U^T}^T + (1-\eta) \id \otimes \frac{\id}{d} \\
	&= \eta \dketbra{U^{-1}}{U^{-1}} + (1-\eta) \id \otimes \frac{\id}{d}.
\end{align}
	Hence, every parallel unitary transposition strategy can be mapped into a parallel unitary inversion strategy, \textit{vice versa}. Additionally, thanks to Thm.~\ref{theo:PAR=ESTtrans}, the problem of parallel unitary inversion is also equivalent to unitary estimation.
\end{proof}
%%%%%%%%%%%%%%%%%%%%%%%%%%%%%%%%%%%%%%%%%%%%%%%%%%%%%%%%%%%%%%%%%%%%%%%%%%%%%%%%%%%%%
%%%%%%%%%%%%%%%%%%%%%%%%%%%%%%%%%%%%%% THEOREM %%%%%%%%%%%%%%%%%%%%%%%%%%%%%%%%%%%%%%
%%%%%%%%%%%%%%%%%%%%%%%%%%%%%%%%%%%%%%%%%%%%%%%%%%%%%%%%%%%%%%%%%%%%%%%%%%%%%%%%%%%%%

\subsubsection{Exponential advantage with sequential strategies}
	We now prove that, for the task of deterministic unitary inversion, there exists an exponential gap between the performance of parallel and sequential strategies. 

%%%%%%%%%%%%%%%%%%%%%%%%%%%%%%%%%%%%%%%%%%%%%%%%%%%%%%%%%%%%%%%%%%%%%%%%%%%%%%%%%%%%%
%%%%%%%%%%%%%%%%%%%%%%%%%%%%%%%%%%%%%% THEOREM %%%%%%%%%%%%%%%%%%%%%%%%%%%%%%%%%%%%%%
%%%%%%%%%%%%%%%%%%%%%%%%%%%%%%%%%%%%%%%%%%%%%%%%%%%%%%%%%%%%%%%%%%%%%%%%%%%%%%%%%%%%%
\begin{theorem} \label{theo:SEQ>PAR_inv}
	The optimal fidelity for parallel unitary inversion respects the upper bound 
\begin{equation}
	\mean{F}_\text{par}\leq 1-\frac{1}{(k+3)^2}
\end{equation} 
and sequential strategies attain 
\begin{equation}
	\mean{F}_\text{seq}\geq  1-\left(1-\frac{1}{d^2}\right)^{\floor{\frac{k+1}{d}}}.
\end{equation}
\end{theorem}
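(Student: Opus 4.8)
The plan is to follow the proof of Thm~\ref{theo:SEQ>PAR} almost verbatim, with Thm~\ref{theo:PAR=EST2} playing the role of Thm~\ref{theo:PAR=EST}. For the parallel upper bound, the first step is to identify the optimal parallel inversion fidelity with the optimal fidelity of unitary estimation over $\SU(d)$. By Thm~\ref{theo:PAR=EST2} the optimum is attained by a measure-and-prepare superchannel, which estimates $U$, obtains a guess $U_i$, and prepares $\dketbra{U_i^{-1}}{U_i^{-1}}$; the elementary identity $F\big(\dketbra{U_i^{-1}}{U_i^{-1}},\dketbra{U^{-1}}{U^{-1}}\big)=\frac{1}{d^2}\abs{\tr(U_i U^\dagger)}^2=F\big(\dketbra{U_i}{U_i},\dketbra{U}{U}\big)$ shows that the average fidelity of any such protocol equals the average fidelity of the underlying estimation protocol, and conversely every estimation protocol induces a measure-and-prepare inversion protocol with the same fidelity, so the two optima coincide. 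Then, exactly as in Thm~\ref{theo:SEQ>PAR}, since $\SU(2)$ is a subgroup of $\SU(d)$ the optimal $\SU(d)$ estimation fidelity cannot exceed the optimal qubit estimation fidelity $\cos^2\left(\frac{\pi}{k+3}\right)$ of Ref.~\cite{bagan03}, and a power-series expansion of the cosine gives $\cos^2\left(\frac{\pi}{k+3}\right)\le 1-\frac{1}{(k+3)^2}$, yielding $\mean{F}_\text{par}\le 1-\frac{1}{(k+3)^2}$. (Equivalently, one may simply note that the optimal parallel fidelities for inversion and for transposition both equal the optimal estimation fidelity and invoke Thm~\ref{theo:SEQ>PAR} directly.)

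For the sequential lower bound I would invoke the probabilistic exact sequential protocol for unitary inversion of Ref.~\cite{quintino19PRA}, which succeeds with probability $p_s=1-\left(1-\frac{1}{d^2}\right)^{\floor{\frac{k+1}{d}}}$ (the floor here, in contrast to the ceiling appearing for transposition, being a feature of that specific construction). By the reduction in Sec.~\ref{sec:prop}, a probabilistic exact protocol with success probability $p_s$ yields a deterministic superchannel of the same causal class with average fidelity $\mean{F}\ge p_s$, hence $\mean{F}_\text{seq}\ge 1-\left(1-\frac{1}{d^2}\right)^{\floor{\frac{k+1}{d}}}$.

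Since both halves are recombinations of results established earlier or in the cited references, there is no serious obstacle. The single step that deserves care is certifying that the optimal parallel inversion fidelity is \emph{exactly} the estimation fidelity rather than merely bounded by it: this rests on the measure-and-prepare reduction of Thm~\ref{theo:PAR=EST2} together with the invariance of the Choi fidelity under $U\mapsto U^{-1}$ used in the first step. Everything else mirrors the transposition case.
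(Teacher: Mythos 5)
Your lower-bound half is correct and matches the paper: the sequential protocol with success probability $p_s = 1-\left(1-\frac{1}{d^2}\right)^{\floor{\frac{k+1}{d}}}$ is the unitary-inversion protocol of Ref.~\cite{quintino19PRL} (not the PRA companion you cite, which is where the $\ceil{\frac{k}{d}}$ transposition protocol lives), and the conversion of Sec.~\ref{sec:prop} then gives $\mean{F}_\text{seq}\geq p_s$ exactly as in the paper.

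The upper-bound half has a genuine gap. You assert that the optimal parallel inversion fidelity equals the optimal estimation fidelity, deducing this from Thm.~\ref{theo:PAR=EST2} together with the identity $F\big(\dketbra{U_i^{-1}}{U_i^{-1}},\dketbra{U^{-1}}{U^{-1}}\big)=F\big(\dketbra{U_i}{U_i},\dketbra{U}{U}\big)$. But Thm.~\ref{theo:PAR=EST2} only states that the optimum is attained by a measure-and-prepare superchannel, i.e.\ one that measures and then prepares \emph{some} channels $\map{R_i}$; it does not say these preparations are unitary channels of the form $\dketbra{U_i^{-1}}{U_i^{-1}}$. Your identity therefore proves only the easy direction (estimation $\leq$ parallel inversion); the direction you actually need, parallel inversion $\leq$ estimation, would require showing that non-unitary preparations $R_i$ cannot help, which is neither obvious nor claimed anywhere in the paper. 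Indeed the paper states the exact equivalence with estimation only for transposition (Thm.~\ref{theo:PAR=EST}), claims for inversion only measure-and-prepare attainability, and even calls the coincidence of the $k=1$ values for transposition and inversion ``surprising'' --- so the equality cannot be taken for granted. The paper's own proof sidesteps this entirely: since $\sigma_Y U^T \sigma_Y = U^{-1}$ for all $U\in\SU(2)$, qubit inversion is unitarily equivalent to qubit transposition, so the qubit parallel optimum is $\cos^2\left(\frac{\pi}{k+3}\right)$ by Thm.~\ref{theo:SEQ>PAR} and Ref.~\cite{bagan03}; combining this with the restriction argument $\SU(2)\subset\SU(d)$ (the same step you already use for estimation) and the series bound yields $\mean{F}_\text{par}\leq 1-\frac{1}{(k+3)^2}$ for all $d$. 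Replace your equality claim with this $d=2$ equivalence and the proof goes through.
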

\begin{proof}
	The proof follows the same steps of Thm.~\ref{theo:SEQ>PAR}. Since the group $\SU(2)$ is contained in $\SU(d)$, the optimal average fidelity for qudit unitary inversion cannot be greater than its qubit counterpart. Also, since $\sigma_YU^T\sigma_Y=U^{-1}$ for all $U\in\SU(2)$, qubit unitary inversion is equivalent to qubit unitary transposition. Hence, we have the upper bound $\mean{F}_\text{par}\leq 1-\frac{1}{(k+3)^2}$.

	For the sequential case, we use the probabilistic exact protocol presented in Ref.~\cite{quintino19PRL} which has a success probability of $p_s=  1-\left(1-\frac{1}{d^2}\right)^{\floor{\frac{k+1}{d}}}$, where $\floor{.}$ stands for the floor function, rounding down to the closest integer value. As discussed in Sec.~\ref{sec:prop}, every probabilistic exact protocol with $p_s$ leads to a deterministic non-exact protocol with average fidelity $\mean{F}\geq p_s$. Hence, we obtain the lower bound $\mean{F}_\text{seq}\geq 1-\left(1-\frac{1}{d^2}\right)^{\floor{\frac{k+1}{d}}}$ and conclude the proof.
\end{proof}
%%%%%%%%%%%%%%%%%%%%%%%%%%%%%%%%%%%%%%%%%%%%%%%%%%%%%%%%%%%%%%%%%%%%%%%%%%%%%%%%%%%%%
%%%%%%%%%%%%%%%%%%%%%%%%%%%%%%%%%%%%%% THEOREM %%%%%%%%%%%%%%%%%%%%%%%%%%%%%%%%%%%%%%
%%%%%%%%%%%%%%%%%%%%%%%%%%%%%%%%%%%%%%%%%%%%%%%%%%%%%%%%%%%%%%%%%%%%%%%%%%%%%%%%%%%%%

We finish this section by summarising the main results about optimal unitary inversion:
\begin{itemize}
\item (Ref\,\cite{ebler16}) For $k=1$ use, the optimal average fidelity for $U\mapsto U^{-1}$ is $\mean{F} =\frac{2}{d^2}$
\item (This work + Ref.~\cite{bagan03}) The optimal average fidelity parallel strategy is the optimal fidelity for unitary dynamics estimation, and when $d=2$ we have $\mean{F}_\text{par}=\cos^2\left(\frac{\pi}{k+3}\right)$
\item (This work) The average fidelity for transforming $k$ uses of $d$-dimensional unitary operations respect $\mean{F}_\text{par}\leq 1-\frac{1}{(k+3)^2}$
\item (This work + Ref.~\cite{quintino19PRL}) There exists a sequential superchannel with $k$ uses that attains $\mean{F}_\text{seq}\geq 1-\left(1-\frac{1}{d^2}\right)^{\floor{\frac{k+1}{d}}}$
\end{itemize}

%%%%%%%%%%%%%%%%%%%%%%%%%%%%%%%%%%%%%%%%%%%%%%%%%%%%%%%%%%%%%%%%%%%%%%%%%%%%%%%%%%%%%%%%%%%%%%%%%%%%%%%%%%%%%%%%%
%%%%%%%%%%%%%%%%%%%%%%%%%%%%%%%%%%%%%%%%%%%%%%%%%% NEW SECTION %%%%%%%%%%%%%%%%%%%%%%%%%%%%%%%%%%%%%%%%%%%%%%%%%%
%%%%%%%%%%%%%%%%%%%%%%%%%%%%%%%%%%%%%%%%%%%%%%%%%%%%%%%%%%%%%%%%%%%%%%%%%%%%%%%%%%%%%%%%%%%%%%%%%%%%%%%%%%%%%%%%%

\section{Computational results and the advantage of indefinite causality \label{sec:SDP2}}
	As discussed in Sec.~\ref{sec:SDP}, if the performance operator $\Omega:= \frac{1}{d^2}\int \dketbra{f(U)}{f(U)}_{PF} \otimes \dketbra{U^*}{U^*}^{\otimes k}_{\bm{IO}} \text{d} U$ corresponding to transforming $k$ copies of a unitary operator $U$ into $f(U)$ is given, the optimal strategy for such strategies can be phrased as semidefinite program. This is of particular interest as there exist efficient computational algorithms to solve SDPs~\cite{boyd}.
	
	In Appendix~\ref{app:explicit}, we have explicitly evaluated the performance operator for unitary transposition and unitary inversion for $k=1$ and $k=2$ and presented a general method that works for any $d$. With these performance operators and the SDP formulation presented in Eq.~\ref{eq:SDP}, we have used numerical packages to find the optimal performance for parallel, sequential, and adaptive strategies. We observe that all performance operators considered here respect $\Omega=\Omega^*$, hence, we can make more efficient computation by assuming, with no loss of generality, that the superchannels $S$ only have real numbers%
	\footnote{Indeed, let $S$ be a superchannel that attains an average fidelity $\mean{F}=\tr(S\Omega)$. In the computational basis, the matrix corresponding to the superchannel $S':=(S+S^*)/2$ only has real numbers and $\tr(S'\Omega)=\tr(S\Omega)$.}. %
	Our numerical results are presented in Fig.~\ref{Table:SDP_trans}. We have employed the SDP interpreter cvx~\cite{cvx} and performed independent evaluations with the solvers MOSEK~\cite{MOSEK}, and SDPT3~\cite{SDPT3}.

	Due to numerical floating-point precision, solutions arising from standard computational SDP solvers are not mathematical proofs. Nevertheless, it is possible to obtain rigorous upper and lower bounds for optimal average fidelity by means of computer assisted proofs. In this work, we have used the methods presented in the appendix ``III. COMPUTER-ASSISTED PROOFS'' of  Ref.~\cite{bavaresco21} to obtain rigorous upper and lower bounds for the optimal average fidelity. With this method, we could ensure that for $k=2$ copies, the values presented in Table~\ref{Table:SDP_trans} are valid up to the fourth decimal case. In a nutshell, the method consists in extracting the solution of the primal and dual of the SDP programming, truncating and use then as ansatz to obtain upper and lower bounds without making use of floating-point arithmetic.
	
	All our code can be found in the GitHub online repository \cite{MTQ_github_unitary_transformation} and can be freely used under the MIT license \cite{MIT_license}.
	\begin{figure}[h!] 
	\begin{center}
		\includegraphics[scale=0.49]{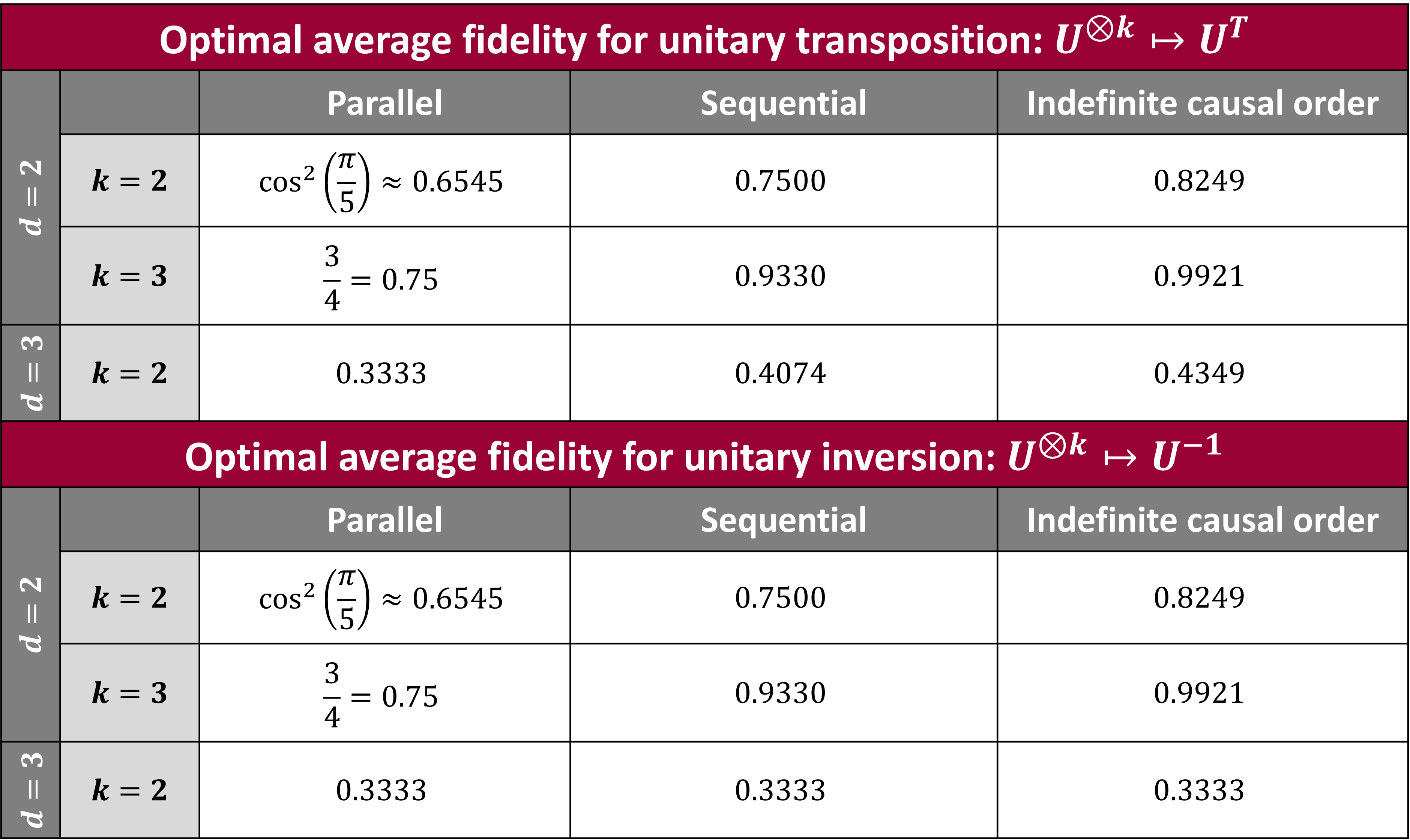} 
	\end{center}
\caption{Optimal average fidelity for deterministic protocols transforming $k$ uses of $U$ into its transpose (upper table) and into its inverse (lower table). The values for the parallel qubit case were proved analytically and the values for $k=2$ were obtained via numerical SDP optimisation and rigorously certified up to the fourth decimal digit with a computer assisted proof.}\label{Table:SDP_trans}
\end{figure}

\subsection{Advantages with indefinite causal order strategies}	\label{sec:advantage_indefinite}
	By analysing the results presented in Fig.~\ref{Table:SDP_trans} we see that for $d=2$, $k=2$ and $k=3$ general strategies outperform sequential ones for unitary transposition and unitary inversion. Also, for $d=3$ and $k=2$, general strategies outperform sequential ones for unitary transposition. In other words, indefinite causality is useful for deterministic unitary transposition and unitary inversion. Our computational approach provides an explicit description of general superchannels which outperform sequential strategies, but we could not find a simple pattern or intuition behind such processes. 
	We remark that, since the input operations we consider are identical unitary channels, superchannels of the quantum switch form \cite{chiribella09_switch} cannot offer an advantage over sequential strategies. Indeed, since the quantum switch transforms a pair of unitary operators $U_A$ and $U_B$ into the unitary coherent superposition 
	$\ketbra{0}{0}_c\otimes U_BU_A+\ketbra{1}{1}_c\otimes U_AU_B$,
where the subscript $c$ stands for control system. If $U_A=U_B=U$, the output of the quantum switch is a standard causally ordered circuit given by the unitary operator $\id_c\otimes UU$. Moreover, for tasks considering identical unitary channels as input operations, every $k$-slot switch-like superchannels\footnote{A two-slot switch-like superchannel transforms a pair of unitary operators $U_A$ and $U_B$ into the unitary coherent superposition
 $\ketbra{0}{0}_c\otimes V^{02}(U_B\otimes\id_\text{aux})V^{01}(U_A\otimes\id_\text{aux})V^{00}+\ketbra{1}{1}_c\otimes V^{12}(U_A\otimes\id_\text{aux})V^{11}(U_B\otimes\id_\text{aux})V^{10}$, where $V^{ij}$ are unitary operators and $\H_\text{aux}$ is an auxiliary space with arbitrary dimension. Note that if $V_{ij}=\id$ and the auxiliary spaces are trivial, we recover the quantum switch as a particular case.} can be simulated by causally ordered circuits, as shown in Thm.~4 of Ref.~\cite{bavaresco21b}.

%%%%%%%%%%%%%%%%%%%%%%%%%%%%%%%%%%%%%%%%%%%%%%%%%%%%%%%%%%%%%%%%%%%%%%%%%%%%%%%%%%%%%%%%%%%%%%%%%%%%%%%%%%%%%%%%%
%%%%%%%%%%%%%%%%%%%%%%%%%%%%%%%%%%%%%%%%%%%%%%%%%% NEW SECTION %%%%%%%%%%%%%%%%%%%%%%%%%%%%%%%%%%%%%%%%%%%%%%%%%%
%%%%%%%%%%%%%%%%%%%%%%%%%%%%%%%%%%%%%%%%%%%%%%%%%%%%%%%%%%%%%%%%%%%%%%%%%%%%%%%%%%%%%%%%%%%%%%%%%%%%%%%%%%%%%%%%%

\section{Port-based teleportation and quantum unitary transposition}
	Port-based teleportation (PBT) \cite{ishizaka08,ishizaka09} is a protocol which allows two distant parties to transfer quantum states, without the need of an active correction step (which is crucial in standard quantum teleportation schemes \cite{bennett93}). In the optimal deterministic PBT protocol, Alice and Bob share a resource state $\ket{\phi_\text{PBT}}\in \mathbb{C}_d^{\otimes 2k}$ in a way that Bob has access to $k$ different subsystems on his side (his share of the entangled state), which are entangled to the other $k$ subsystems on Alice's end. Each subsystem held by Bob is identified as a port to which Alice may transfer a target state $\ket{\psi}\in\mathbb{C}_d$. In order to teleport $\ket{\psi}$ to Bob, Alice performs a joint measurement with $k$ outcomes, each outcome corresponding to a port, on $\ket{\psi}$ and her part of the resource state $\ket{\phi_\text{PBT}}$. Alice then sends the outcome of her joint measurement to Bob, who finds a mixed state $\rho_{\text{out}}\approx \ketbra{\psi}{\psi}$ on the port informed by Alice. The performance of deterministic PBT is quantified in terms of the state fidelity between $\rho_{\text{out}}$, uniformly averaged between all pure target states $\ket{\psi}\in\mathbb{C}_d$. Also, the deterministic PBT may be viewed as a quantum channel $\map{C}$ which maps a target pure state $\ket{\psi}$ into $\rho_{\text{out}}\approx \ketbra{\psi}{\psi}$. Hence, its performance may be equivalently presented in terms of the fidelity between the identity map and $\map{C}$, a quantity usually referred to as entanglement fidelity. These two quantities are related via equation Eq.~\eqref{eq:fidelities}.
    
    The resource state of optimal PBT is known to respect the identity \cite{studzinski16}
\begin{equation}
    \left(U \otimes \id\right)^{\otimes k} \ket{\phi_\text{PBT}}
        =
    \left(\id \otimes U^T\right)^{\otimes k} \ket{\phi_\text{PBT}} \quad \quad \forall U\in\SU(d).
\end{equation}
    Hence, as noticed in Ref.~\cite{quintino19PRA}, PBT can be used to design a parallel strategy for unitary transposition. In particular, deterministic PBT leads to a parallel measure-and-prepare strategy, where the encoder step consists in preparing the state $\ket{\phi_\text{PBT}}\in \bigotimes_{i=1}^k
    \Big( \H_{I_i}\otimes\H_{\text{aux}_i}\Big) \cong \mathbb{C}_d^{\otimes 2k}$. Then, after applying $k$ parallel copies of an input unitary operator $U$, the system is described by $\left(U \otimes \id\right)^{\otimes k} \ket{\phi_\text{PBT}}\in \bigotimes_{i=1}^k
    \Big( \H_{O_i}\otimes\H_{\text{aux}_i}\Big)$. The decoder step consists in performing the PBT optimal joint measurement on the auxiliary space $\bigotimes_{i=1}^k\H_{\text{aux}_i}$ and the arbitrary input state $\ket{\psi}\in\H_P$, followed by discarding $k-1$ subsystems to obtain a state $\rho_\text{out} \approx U^T\ketbra{\psi}{\psi}\left(U^T\right)^\dagger$ in accordance to the measurement outcome. The average fidelity of unitary transposition for this protocol is then the optimal average entanglement fidelity for optimal deterministic PBT.
    
For qubits, the entangled fidelity obtained for optimal resource state deterministic PBT is \cite{ishizaka08}
    \begin{equation}
        \mean{F}_{\text{PBT},d=2}=\cos^2\left(\frac{\pi}{k+2}\right)
    \end{equation}
and in this work have shown that optimal parallel unitary transposition is given by
    \begin{equation}
        \mean{F}_{\text{par},d=2}=\cos^2\left(\frac{\pi}{k+3}\right).
    \end{equation}
We can then see that qubit unitary transposition superchannels adapted from port-based teleportation have a strictly smaller average fidelity  than the optimal qubit parallel unitary transposition protocol. One possible explanation for this difference is the fact that in PBT, the decoder step is after the joint measurement corresponds to PBT correction and consists in simply discarding $k-1$ subsystems, while for optimal parallel unitary transposition we allow general correction operations.
    
    We remark that, when considering the task of probabilistic exact unitary transformations, Ref.~\cite{quintino19PRA} showed that the maximal success probability can always be obtained by probabilistic PBT%
    \footnote{Probabilistic unitary transposition protocols based on PBT attain optimal success probability, however these PBT based protocols require a large dimension for auxiliary space. It is possible to attain the optimal parallel unitary transposition with considerably smaller auxiliary spaces with a different parallel protocol, see Ref.~\cite{sedlak18}.}.%
This points out another difference between the probabilistic exact and deterministic transformations.
	
%%%%%%%%%%%%%%%%%%%%%%%%%%%%%%%%%%%%%%%%%%%%%%%%%%%%%%%%%%%%%%%%%%%%%%%%%%%%%%%%%%%%%%%%%%%%%%%%%%%%%%%%%%%%%%%%%
%%%%%%%%%%%%%%%%%%%%%%%%%%%%%%%%%%%%%%%%%%%%%%%%%% NEW SECTION %%%%%%%%%%%%%%%%%%%%%%%%%%%%%%%%%%%%%%%%%%%%%%%%%%
%%%%%%%%%%%%%%%%%%%%%%%%%%%%%%%%%%%%%%%%%%%%%%%%%%%%%%%%%%%%%%%%%%%%%%%%%%%%%%%%%%%%%%%%%%%%%%%%%%%%%%%%%%%%%%%%%
\section{Discussions}
Understanding quantum operations and transformations thereof is an essential step in the quest of developing novel quantum technology. On the one hand, said devices are currently prone to limitations stemming from noisy hardware and low computational power -- such that implementing a certain transformation in the most efficient way is crucial. On the other hand, fundamental limitations, such as no-go theorems in quantum theory, need to be addressed in order to construct general-purpose quantum devices.

In this work, we considered the task of transforming multiple uses of an unknown unitary transformation into a target operation. Concretely, we studied how the composition of the uses can affect the run-time and quality of the transformation. We showed that, depending on the structural property of the transformation, sequential application of the unitary may dramatically outperform parallel ones. This contrasts previous insights \cite{bisio13}, where parallel uses of channels were identified to be the optimal arrangement.

As part of our methods, we established a one-to-one connection between estimating unitary operations, parallel circuits for the task of transposing an unknown unitary operation, and parallel circuits for the task of inverting an unknown unitary operation. This allowed us to apply results from one task to the other, yielding new insights for transformations which do not belong to the class studied previously in the literature \cite{bisio13}. We further show that arrangements of channels without definite causal order may outperform sequential circuits. These processes are shown not to be of the quantum switch form \cite{chiribella09_switch} -- motivating further investigation of indefinite causality beyond the quantum switch.

We further studied the task of deterministic unitary inversion through multiple uses of an unknown input operation.  In the single use regime, deterministic unitary inversion was studied in Ref.~\cite{ebler16}. When multiple uses are considered, probabilistic heralded inversion of unitary operations is investigated in Ref.~\cite{quintino19PRL}, which considers probabilistic circuits and probabilistic superchannels. Additionally, it was recently shown that, under reasonable conditions, a function $f:\SU(d)\to\SU(d)$ is quantum time reversal if and only if $f$ it is equivalent to unitary transposition or unitary inversion \cite{chiribella20timeflip}, two classes of transformations covered by our methods.    We also remark Ref.~\cite{navascues17}, which provides a probabilistic heralded method to reset and invert unitary operations even in absence of control on the target system in a scheme which was extended and optimised in Ref.~\cite{trillo19}.

An interesting future research direction would be the distillation of operation from others, beyond transformations acting as homomorphisms and anti-homomorphisms. Such direction could, for instance, use a given set of input channels and fixed number of uses thereof as a resource to mimic the action of a different one.  This line of research would be leading into the direction of transfer learning for transformations and algorithms, by using readily implementable routines to approximate more complex ones.
	
%%%%%%%%%%%%%%%%%%%%%%%%%%%%%%%%%%%%%%%%%%%%%%%%%%%%%%%%%%%%%%%%%%%%%%%%%%%%%%%%%%%%%%%%%%%%%%%%%%%%%%%%%%%%%%%%%
%%%%%%%%%%%%%%%%%%%%%%%%%%%%%%%%%%%%%%%%%%%%%% END MAIN SECTIONS %%%%%%%%%%%%%%%%%%%%%%%%%%%%%%%%%%%%%%%%%%%%%%%%
%%%%%%%%%%%%%%%%%%%%%%%%%%%%%%%%%%%%%%%%%%%%%%%%%%%%%%%%%%%%%%%%%%%%%%%%%%%%%%%%%%%%%%%%%%%%%%%%%%%%%%%%%%%%%%%%%

%%%%%%%%%%%%%%%%%%%%%%%%%%%%%%%%%%%%%%%%%%%%%%%%%%%%%%%%%%%%%%%%%%%%%%%%%%%%%%%%%%%%%%%%%%%%%%%%%%%%%%%%%%%%%
%%%%%%%%%%%%%%%%%%%%%%%%%%%%%%%%%%%%%%%%%%%% BEGIN ACKNOWLEDMENTS %%%%%%%%%%%%%%%%%%%%%%%%%%%%%%%%%%%%%%%%%%%
%%%%%%%%%%%%%%%%%%%%%%%%%%%%%%%%%%%%%%%%%%%%%%%%%%%%%%%%%%%%%%%%%%%%%%%%%%%%%%%%%%%%%%%%%%%%%%%%%%%%%%%%%%%%%
\section*{Acknowledgements}
	We thank Qingxiuxiong Dong,  Mio Murao, Atsushi Shimbo, Akihito Soeda, Micha\l\, Studzinski for fruitful discussions and Tomasz Młynik for noting a typo in an earlier version of the manuscript. We acknowledge Q-leap (MEXT Q-leap JPMXS0118069605) and the JSPS Kakenhi 18H04286.
	M.T.Q. acknowledges the Austrian Science Fund (FWF) through the SFB project BeyondC (sub-project F7103), a grant from the Foundational Questions Institute (FQXi) as part of the  Quantum Information Structure of Spacetime (QISS) Project (qiss.fr). The opinions expressed in this publication are those of the authors and do not necessarily reflect the views of the John Templeton Foundation. This project has received funding from the European Union’s Horizon 2020 research and innovation programme under the Marie Skłodowska-Curie grant agreement No 801110.
It reflects only the authors' view, the EU Agency is not responsible for any use that may be made of the information it contains. ESQ has received funding from the Austrian Federal Ministry of Education, Science and Research (BMBWF)

%%%%%%%%%%%%%%%%%%%%%%%%%%%%%%%%%%%%%%%%%%%%%%%%%%%%%%%%%%%%%%%%%%%%%%%%%%%%%%%%%%%%%%%%%%%%%%%%%%%%%%%%%%%%%
%%%%%%%%%%%%%%%%%%%%%%%%%%%%%%%%%%%%%%%%%%%%%%% END MAIN PART %%%%%%%%%%%%%%%%%%%%%%%%%%%%%%%%%%%%%%%%%%%%%%%
%%%%%%%%%%%%%%%%%%%%%%%%%%%%%%%%%%%%%%%%%%%%%%%%%%%%%%%%%%%%%%%%%%%%%%%%%%%%%%%%%%%%%%%%%%%%%%%%%%%%%%%%%%%%%

%%%%%%%%%%%%%%%%%%%%%%%%%%%%%%%%%%%%%%%%%%%%%%%%%%%%%%%%%%%%%%%%%%%%%%%%%%%%%%%%%%%%%%%%%%%%%%%%%%%%%%%%%%%%%
%%%%%%%%%%%%%%%%%%%%%%%%%%%%%%%%%%%%%%%%%%%%%% BEGIN APPENDIX %%%%%%%%%%%%%%%%%%%%%%%%%%%%%%%%%%%%%%%%%%%%%%%
%%%%%%%%%%%%%%%%%%%%%%%%%%%%%%%%%%%%%%%%%%%%%%%%%%%%%%%%%%%%%%%%%%%%%%%%%%%%%%%%%%%%%%%%%%%%%%%%%%%%%%%%%%%%%
\clearpage
\appendix
\section*{Appendix}
\section{Additional details for the homomorphic case: $f(UV)=f(U)f(V)$} \label{app:f_is_rep}
\subsection{Evaluating the performance operator $\Omega$}
	In this subsection, we present one method for evaluating the performance operator 
\begin{equation}
	\Omega:= \frac{1}{d^2}\int_{\text{Haar}} \dketbra{f(U)}{f(U)}_{PF} \otimes \dketbra{U^*}{U^*}^{\otimes k}_{\bm{IO}} \text{d} U
\end{equation}
	when $f:\SU(d)\to\SU(d')$ is a homomorphism. As shown in the main text, the performance operator $\Omega$ respects the commutation relations
\begin{align}
	[\Omega, \id_P \otimes \id_{\bm{I}} \otimes U_{\bm{O}}^{*^{\otimes k}} \otimes f(U)_F]&=0, \quad \forall U\in\SU(d) 
\end{align}
	Note that linear operators $P\in \L(\H_{\bm{O}} \otimes \H_F)$ respecting $[P,U_{\bm{O}}^{*^{\otimes k}} \otimes f(U)_F]=0$  form a linear subspace, hence there exists an orthogonal basis of operators $\{P^i\}_i$, 
	$\tr(P^iP^j)=\delta_{ij} d_i$, such that $P$ can be written as $P=\sum_i \alpha^i P^i$ for some complex coefficients $\alpha^i$. 
	Using the orthogonality relation $\tr(P{^i}^\dagger P^j)=d_i \delta_{ij}$, the performance operator $\Omega$ can be written as 
\begin{equation}
	\Omega = \sum_i \Omega^i_{P\bm{I}} \otimes P^i_{\bm{O}F}
\end{equation}
for some linear operators $\Omega^i_{P\bm{I}}$. 
	We can obtain $\Omega^i_{P\bm{I}} $ explicitly, by observing that, since  $\{P^i\}_i$ is an orthonormal basis, we have 
	\small
\begin{align}
	\Omega^i_{P\bm{I}}\,d_i &=
\tr_{\bm{O}F} \left( \left(\id_{P\bm{I}}\otimes  P_{\bm{O}F}^{{i}^\dagger} \right) \Omega \right) \nonumber \\
&= \frac{1}{d^2}\int
\tr_{\bm{O}F} \Big( \left( \id_{P\bm{I}} \otimes P_{\bm{O}F}^{{i}^\dagger} \right) \dketbra{f(U)}{f(U)}_{PF} \otimes \dketbra{U^*}{U^*}^{\otimes k}_{\bm{IO}} \Big) \text{d} U \nonumber \\
&= \frac{1}{d^2}\int
\tr_{\bm{O}F} \Big(  \id_{P\bm{I}} \otimes \left[ \left(f(U)\otimes {U^*}^{\otimes k}\right)^\dagger
 P^{{i}^\dagger} f(U)\otimes U^{*^{\otimes k}} \right]_{\bm{O}F} \dketbra{\id}{\id}_{PF} \otimes \dketbra{\id}{\id}^{\otimes k}_{\bm{IO}} \Big) \text{d} U \nonumber \\
 &= \frac{1}{d^2}\int
\tr_{\bm{O}F} \Big(  \id_{P\bm{I}} \otimes \left[ 
 P^{{i}^\dagger} \left(f(U)\otimes U^{*^{\otimes k}}\right)^\dagger f(U)\otimes U^{*^{\otimes k}} \right]_{\bm{O}F} \dketbra{\id}{\id}_{PF} \otimes \dketbra{\id}{\id}^{\otimes k}_{\bm{IO}} \Big) \text{d} U \nonumber \\
  &= \frac{1}{d^2}\int
\tr_{\bm{O}F} \Big(  \id_{P\bm{I}} \otimes \left[ 
 P^{{i}^\dagger} \right]_{\bm{O}F} \dketbra{\id}{\id}_{PF} \otimes \dketbra{\id}{\id}^{\otimes k}_{\bm{IO}} \Big) \text{d} U \nonumber \\
 &= \frac{1}{d^2}
\tr_{\bm{O}F} \Big(  \id_{P\bm{I}} \otimes \left[ 
 P^{{i}^\dagger} \right]_{\bm{O}F} \dketbra{\id}{\id}_{PF} \otimes \dketbra{\id}{\id}^{\otimes k}_{\bm{IO}} \Big) \nonumber \\
  &= \frac{1}{d^2}
\tr_{\bm{O}F} \Big( \left[  P^{{i}^*}_{\bm{I}P}\otimes \id_{\bm{O}F} 
 \right] \dketbra{\id}{\id}_{PF}  \dketbra{\id}{\id}^{\otimes k}_{\bm{IO}} \Big) \nonumber \\
   &= \frac{1}{d^2}
P^{{i}^*}_{\bm{I}P}.
\end{align}
\normalsize
	Hence, we can write
\begin{equation}
	\Omega = \frac{1}{d^2}\sum_i \frac{\left(P^i_{\bm{I}P}\right)^* \otimes P^i_{\bm{O}F}}{d_i}.
\end{equation}
	In Appendix~\ref{app:explicit} we present an explicit basis $\{P^i\}_i$ for the case $f(U)=U^*$.

\subsection{Optimality of parallel strategies for the homomorphic case} \label{app:par_is_optimal}

	This subsection is dedicated to prove Proposition~\ref{prop:f_is_rep} stated in the main text. We start with a lemma which first appeared in Ref.~\cite{bisio13} which here we present it and prove it here in a slightly different manner.

%%%%%%%%%%%%%%%%%%%%%%%%%%%%%%%%%%%%%%%%%%%%%%%%%%%%%%%%%%%%%%%%%%%%%%%%%%%%%%%%%%%%%
%%%%%%%%%%%%%%%%%%%%%%%%%%%%%%%%%%%%%% THEOREM %%%%%%%%%%%%%%%%%%%%%%%%%%%%%%%%%%%%%%
%%%%%%%%%%%%%%%%%%%%%%%%%%%%%%%%%%%%%%%%%%%%%%%%%%%%%%%%%%%%%%%%%%%%%%%%%%%%%%%%%%%%%
\begin{lemma}[Proposition 4 of Ref.~\cite{bisio13}] \label{lemma:bisio}
	Let
$S\in\L(\H_P\otimes\H_{\bm{I}}\otimes\H_{\bm{O}}\otimes\H_F)$ be a general $k$-slot superchannel. If the operator $H:=\tr_F(W)\in\L(\H_P\otimes\H_{\bm{I}}\otimes\H_{\bm{O}})$ respects the commutation relation 
\begin{equation}
	H \, \left( \id_{P}\otimes\id_{\bm{I}} \otimes {U_{\bm{O}}^{\otimes k}}\right)  = 
	\left(\id_{P}\otimes\id_{\bm{I}} \otimes {U_{\bm{O}}^{\otimes k}} \right)\, H ,
\end{equation}
for every unitary operators from a set $\{U\}_U$, $U\in\L(\mathbb{C}_d)$. There exists a parallel $k$-slot superchannel $S'$ such that 
\begin{equation}
S'*\dketbra{U^{\otimes k}}{U^{\otimes k}}_{\bm{IO}} =
	S*\dketbra{U^{\otimes k}}{U^{\otimes k}}_{\bm{IO}} 	
\end{equation}
for every unitary operators from a set $\{U\}_U$.

Moreover, if we set $\H_{P'}$ is an auxiliary space which is isomorphic to $\H_P$, $\H_{\bm{I}'}$ is an auxiliary space which is isomorphic to $\H_{\bm{I}}$,. The parallel superchannel $S'$ can be written as $S'=E*D$ where $E$ is the Choi operator of a quantum channel from $\L(\H_P)$ to $\L(\H_{P'}\otimes\H_{\bm{I}}\otimes\H_{\bm{I}'})$ defined by
 \begin{align}
	E_{PP'\bm{I}\bm{I}'} := \left(\sqrt{H}_{P'\bm{I}'\bm{I}}\otimes\id_P\right)
	 \dketbra{\id}{\id}_{\bm{I}'\bm{I}}\otimes\dketbra{\id}{\id}_{P'P}
	 \left(\sqrt{H}_{P'\bm{I}'\bm{I}}\otimes\id_P\right)
 \end{align}
and $D$ is a quantum channel from $\L(\H_{P'}\otimes\H_{\bm{I}'}\otimes\H_{O})$ to $\L(\H_F))$ defined by
\begin{equation}
	D_{P'\bm{I}'\bm{O}F} := \left(\sqrt{H}_{P'\bm{I}'\bm{O}}^{-1}\otimes\id_F \right)^T
	\, S_{P'\bm{I}'\bm{O}F}\,\left(\sqrt{H}_{P'\bm{I}'\bm{O}}^{-1}\otimes\id_F \right)^T
\end{equation}
\end{lemma}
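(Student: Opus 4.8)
The plan is to verify that the operators $E$ and $D$ written in the statement are Choi operators of legitimate quantum channels — so that $S':=E*D$ is, by the very definition of a parallel superchannel (with auxiliary space $\H_\text{aux}=\H_{P'}\otimes\H_{\bm I'}$), a parallel $k$-slot superchannel — and then to establish the equality $S'*\dketbra{U^{\otimes k}}{U^{\otimes k}}_{\bm{IO}}=S*\dketbra{U^{\otimes k}}{U^{\otimes k}}_{\bm{IO}}$ by a direct link-product computation in which the hypothesis $[H,\id_P\otimes\id_{\bm I}\otimes U_{\bm O}^{\otimes k}]=0$ is the one essential ingredient.

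First I would record the elementary facts about $H:=\tr_F(S)$. Since $S\geq0$ we have $H\geq0$, so $\sqrt H$ and its Moore–Penrose pseudoinverse $\sqrt H^{-1}$ are well defined, with $\sqrt H\,\sqrt H^{-1}=\sqrt H^{-1}\sqrt H=\Pi$ the orthogonal projector onto $\mathrm{supp}(H)$; and $\tr_F(S)=H$ together with $S\geq0$ forces $\mathrm{supp}(S)\subseteq\mathrm{supp}(H)\otimes\H_F$, so the pseudoinverse in the formula for $D$ causes no difficulty. Crucially, since $\sqrt H$ and $\Pi$ are functions of $H$, the commutation hypothesis passes to them: $[\sqrt H,\id_P\otimes\id_{\bm I}\otimes U_{\bm O}^{\otimes k}]=0$ for every $U$ in the given set. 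I would also extract from the general-superchannel trace constraints of Sec.~\ref{sec:k-slots} the marginal identity for $H$ — in essence that tracing $H$ over the slot spaces $\bm I$ and $\bm O$ returns a multiple of $\id_P$ — as this is what fixes normalisations below.

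Next I would check the channel conditions. For $E$, complete positivity is immediate since $E$ is a conjugation of the positive operator $\dketbra{\id}{\id}_{\bm I'\bm I}\otimes\dketbra{\id}{\id}_{P'P}$ by $\sqrt H_{P'\bm I'\bm I}\otimes\id_P$; trace preservation, $\tr_{P'\bm I\bm I'}(E)=\id_P$, follows by contracting the two maximally entangled operators — each partial trace over a primed system fusing a pair of $\sqrt H$-factors into one $H$ transported onto the appropriate systems — and then invoking the marginal identity for $H$ (the overall constant being pinned down by $\tr(S)$). For $D$, positivity follows from $S\geq0$ and the support inclusion above, and a short computation shows that $\tr_F(D)$ is a projector, so $D$ is the Choi operator of a completely positive, trace-non-increasing map; since $D$ lives only inside $\mathrm{supp}(H)\otimes\H_F$ (after relabelling) and the range of $E$ lies in the same subspace, one may complete $D$ to a genuine channel by adjoining an arbitrary channel on the orthogonal complement without changing $E*D$. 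This legitimises calling $S'=E*D$ a parallel $k$-slot superchannel.

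The heart of the argument is the identity $S'*\dketbra{U^{\otimes k}}{U^{\otimes k}}_{\bm{IO}}=S*\dketbra{U^{\otimes k}}{U^{\otimes k}}_{\bm{IO}}$. Writing $\dket{U^{\otimes k}}_{\bm{IO}}=(\id_{\bm I}\otimes U_{\bm O}^{\otimes k})\dket{\id}_{\bm{IO}}$ and using commutativity and associativity of the link product, $S'*\dketbra{U^{\otimes k}}{U^{\otimes k}}_{\bm{IO}}=D*\big(\dketbra{U^{\otimes k}}{U^{\otimes k}}_{\bm{IO}}*E\big)$; I would then contract the two $\dketbra{\id}{\id}$-factors of $E$ by the teleportation identity $(\id\otimes A)\dket{\id}=(A^T\otimes\id)\dket{\id}$ — the factor on $\bm I'\bm I$ identifying the input wires with the primed copy carried forward to $D$, the factor on $P'P$ doing the same for the past wire. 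After these contractions the copies of $U$ survive only as $U_{\bm O}^{\otimes k}$ conjugated between the remaining $\sqrt H$, $\sqrt H^{-1}$ factors and $S$; using $[\sqrt H,\id_P\otimes\id_{\bm I}\otimes U_{\bm O}^{\otimes k}]=0$ one slides the unitaries past the $\sqrt H$'s, the $\sqrt H$ and $\sqrt H^{-1}$ then fuse into $\Pi$, which acts as the identity on $\mathrm{supp}(S)$, and what is left is precisely $S*\dketbra{U^{\otimes k}}{U^{\otimes k}}_{\bm{IO}}$. The main obstacle I anticipate is bookkeeping rather than concept: keeping the many primed/unprimed relabellings and partial transposes straight through the contraction, so that $\sqrt H$ always sits on exactly the systems where the commutation relation is available. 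Once the indices are under control, the positivity checks, the teleportation identities, and the commutation hypothesis supply everything needed.
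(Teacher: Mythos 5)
Your proposal follows essentially the same route as the paper's own proof: verify that $E$ and $D$ are Choi operators of legitimate channels, then evaluate $E*\dketbra{U^{\otimes k}}{U^{\otimes k}}_{\bm{IO}}*D$ directly, sliding $U^{\otimes k}_{\bm O}$ through $\sqrt{H}$ via the commutation hypothesis so that the $\sqrt{H}$ and $\sqrt{H}^{-1}$ factors cancel and leave $S*\dketbra{U^{\otimes k}}{U^{\otimes k}}_{\bm{IO}}$. Your additional care with the pseudo-inverse, the support projector $\Pi$, and the completion of $D$ on the orthogonal complement is a refinement of, not a departure from, the paper's argument, which simply asserts the normalisation conditions $\tr_{P'\bm{I}'\bm{I}}(E)=\id_P$ and $\tr_F(D)=\id$ without further detail.
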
    

\begin{proof}
	We start our proof by verifying that $E$ is a valid quantum channel from $\L(\H_P)$ to $\L(\H_{P'\bm{I}'\bm{I}})$. The operator $E$ is positive semidefinite because it is a composition of positive semidefinite operators and the normalisation condition follows from $\tr_{P'\bm{I}\bm{I}'}(E)=\id_P $. We now ensure that The operator $D$ corresponds to a valid quantum channel from $\L(\H_{P'\bm{I}'\bm{O}})$ to $\L(\H_{F})$. The operator $D$ is positive semidefinite because it is a composition of positive semidefinite operators, and it is also direct to check that $\tr_F(D)=\id_{P'\bm{I}'\bm{I}}$.

Our next step starts by pointing out that $H$ commutes with $\id\otimes\id\otimes U^{\otimes k}$. Hence,
$\sqrt{H}$ also commutes with $\id\otimes\id\otimes U^{\otimes k}$. Consequently, we can evaluate the link product%
\footnote{We recall that if $C_{\bm{IO}}$ is the Choi operator of a map $\map{C}:\L(\H_{\bm{I}})\to\L(\H_{\bm{O}})$, for every linear operator and $\rho\in\L(\H_A\otimes\H_{\bm{I}})$ we have that $C*\rho=\left[\map{\id}\otimes\map{C}(\rho)\right]_{AO}$. }
\begin{align}
	E_{P'\bm{I}'\bm{I}P}*\dketbra{U^{\otimes k}}{U^{\otimes k}}_{\bm{IO}} =&
	U^{\otimes k}_{\bm{O}} \left[ E_{P'\bm{I}'\bm{O}P}\right] U^{\otimes k}_{\bm{O}} \nonumber \\
=& U^{\otimes k}_{\bm{O}} 
\left[\left(\sqrt{H}_{P'\bm{I}'\bm{O}}\otimes\id_P\right)
	 \dketbra{\id}{\id}_{\bm{I}'\bm{O}}\otimes\dketbra{\id}{\id}_{P'P}
	 \left(\sqrt{H}_{P'\bm{I}'\bm{O}}\otimes\id_P\right)\right]
	  U^{\otimes k}_{\bm{O}} \nonumber \\
=&\left(\sqrt{H}_{P'\bm{I}'\bm{O}}\otimes\id_P\right)
\dketbra{U^{\otimes k}}{U^{\otimes k}}_{\bm{I}'\bm{O}}\otimes\dketbra{\id}{\id}_{P'P}
\left(\sqrt{H}_{P'\bm{I}'\bm{O}}\otimes\id_P\right).
\end{align}
We now finish the proof by checking that
\small
\begin{align}
S'*\dketbra{U^{\otimes k}}{U^{\otimes k}}
=&E_{P'\bm{I}'\bm{I}P}*D_{P'\bm{I}'\bm{O}F}*\dketbra{U^{\otimes k}}{U^{\otimes k}}_{\bm{IO}} \nonumber \\
=&E_{P'\bm{I}'\bm{I}P}*\dketbra{U^{\otimes k}}{U^{\otimes k}}_{\bm{IO}}*D_{P'\bm{I}'\bm{O}F} \nonumber \\
=&\left(\sqrt{H}_{P'\bm{I}'\bm{O}}\otimes\id_P\right)
\dketbra{U^{\otimes k}}{U^{\otimes k}}_{\bm{I}'\bm{O}}\otimes\dketbra{\id}{\id}_{P'P}
\left(\sqrt{H}_{P'\bm{I}'\bm{O}}\otimes\id_P\right)*D_{P'\bm{I}'\bm{O}F} \nonumber \\
=&\tr_{P'\bm{I}'\bm{O}}\left(\sqrt{H}_{P'\bm{I}'\bm{O}} 
\dketbra{U^{\otimes k}}{U^{\otimes k}}_{\bm{I}'\bm{O}}\otimes\dketbra{\id}{\id}_{P'P}
\sqrt{H}_{P'\bm{I}'\bm{O}} D_{P'\bm{I}'\bm{O}F}^{T_{P'\bm{I}'\bm{O}}} \right)\nonumber \\
=&\tr_{P'\bm{I}'\bm{O}}\left(\sqrt{H}_{P'\bm{I}'\bm{O}} 
\dketbra{U^{\otimes k}}{U^{\otimes k}}_{\bm{I}'\bm{O}}\otimes\dketbra{\id}{\id}_{P'P}
\sqrt{H}_{P'\bm{I}'\bm{O}} 
\,\sqrt{H}_{P'\bm{I}'\bm{O}}^{-1}
	\, S_{P'\bm{I}'\bm{O}F}^{T_{P'\bm{I}'\bm{O}}}\,\sqrt{H}_{P'\bm{I}'\bm{O}}^{-1}\right)\nonumber \\
=&\tr_{P'\bm{I}'\bm{O}}\left(\dketbra{U^{\otimes k}}{U^{\otimes k}}_{\bm{I}'\bm{O}}\otimes\dketbra{\id}{\id}_{P'P}
S_{P'\bm{I}'\bm{O}F}^{T_{P'\bm{I}'\bm{O}}} \right) \nonumber \\
=&\tr_{P'\bm{I}'\bm{O}}\left(\dketbra{U^{\otimes k}}{U^{\otimes k}}_{\bm{I}'\bm{O}}\otimes\dketbra{\id}{\id}_{P'P}
S_{P\bm{I}'\bm{O}F}^{T_{\bm{I}'\bm{O}}} \right) \nonumber \\
=&\tr_{\bm{I}'\bm{O}}\left(\dketbra{U^{\otimes k}}{U^{\otimes k}}_{\bm{I}'\bm{O}}
S_{P\bm{I}'\bm{O}F}^{T_{\bm{I}'\bm{O}}} \right) \nonumber \\
=&\dketbra{U^{\otimes k}}{U^{\otimes k}}_{\bm{I}'\bm{O}}
*S_{P\bm{I}'\bm{O}F} \nonumber \\
=&S_{P\bm{IO}F}*\dketbra{U^{\otimes k}}{U^{\otimes k}}_{\bm{IO}}.
\end{align}
\normalsize
\end{proof}
%%%%%%%%%%%%%%%%%%%%%%%%%%%%%%%%%%%%%%%%%%%%%%%%%%%%%%%%%%%%%%%%%%%%%%%%%%%%%%%%%%%%%
%%%%%%%%%%%%%%%%%%%%%%%%%%%%%%%%%%%%%% THEOREM %%%%%%%%%%%%%%%%%%%%%%%%%%%%%%%%%%%%%%
%%%%%%%%%%%%%%%%%%%%%%%%%%%%%%%%%%%%%%%%%%%%%%%%%%%%%%%%%%%%%%%%%%%%%%%%%%%%%%%%%%%%%

	We are now in conditions to prove Proposition~\ref{prop:f_is_rep}.
	
%%%%%%%%%%%%%%%%%%%%%%%%%%%%%%%%%%%%%%%%%%%%%%%%%%%%%%%%%%%%%%%%%%%%%%%%%%%%%%%%%%%%%
%%%%%%%%%%%%%%%%%%%%%%%%%%%%%%%%%%%%%% THEOREM %%%%%%%%%%%%%%%%%%%%%%%%%%%%%%%%%%%%%%
%%%%%%%%%%%%%%%%%%%%%%%%%%%%%%%%%%%%%%%%%%%%%%%%%%%%%%%%%%%%%%%%%%%%%%%%%%%%%%%%%%%%%
\begin{proof}
	Since the performance operator respects the commutation relations of Eq.~\eqref{eq:f(u)_commutation} and Eq.~\eqref{eq:f(u)_commutation2}, for any superchannel $S$, we can define the ``Haar-twirled'' operator 
\begin{equation}
	S':=\int\hspace*{-2mm}\int \Big(f(A)_P \otimes A_{\bm{I}}^{*^{\otimes k}} \otimes B_{\bm{O}}^{*^{\otimes k}} \otimes f(B)_F\Big)
		S
		\Big( f(A)_P \otimes A_{\bm{I}}^{*^{\otimes k}} \otimes B_{\bm{O}}^{*^{\otimes k}} \otimes f(B)_F \Big)^\dagger  \dif A\dif B
\end{equation}	
	 which respects
\begin{equation} \label{eq:commutation_S_rep}
	[S', f(A)_P \otimes A_{\bm{I}}^{*^{\otimes k}} \otimes B_{\bm{O}}^{*^{\otimes k}} \otimes f(B)_F]=0,
\end{equation}
for any unitary operators $A,B$ and the average fidelity
\begin{equation}
	\tr(S\Omega) = \tr(S'\Omega).
\end{equation}
The Eq.~\eqref{eq:commutation_S_rep} implies that $S'$ is respects the covariant relation,
\footnotesize
	\begin{align}
		S'*\dketbra{U}{U}^{\otimes k} &=\Bigg(\Big(f(A)_P \otimes A_{\bm{O}}^{*^{\otimes k}} \otimes B_{\bm{O}}^{*^{\otimes k}} \otimes f(B)_F\Big)
		S'
		\Big( f(A)_P \otimes A_{\bm{I}}^{*^{\otimes k}} \otimes B_{\bm{O}}^{*^{\otimes k}} \otimes f(B)_F \Big)^\dagger \Bigg)
		*\dketbra{U}{U}^{\otimes k}\\
	&=\Big( f(A)_P \otimes f(B)_F\Big) 
\Bigg(S' * \bigg(
A_{\bm{I}}^{\dagger^{\otimes k}} \otimes B_{\bm{O}}^{\dagger^{\otimes k}} \dketbra{U}{U}^{\otimes k} A_{\bm{I}}^{{\otimes k}} \otimes B_{\bm{O}}^{{\otimes k}}  \bigg)  \Bigg)
		\Big(f(A)_P \otimes f(B)_F\Big)^\dagger \nonumber \\
&=\Big( f(A)_P \otimes f(B)_F\Big) 
\Bigg(S' * \bigg( \dketbra{B^\dagger UA^*}{B^\dagger UA^*}^{\otimes k}\bigg)  \Bigg)
		\Big(f(A)_P \otimes f(B)_F\Big)^\dagger 
%&=\Big(f(A)_P \otimes f(B)_F\Big) 
%\tr_{\bm{IO}}\Bigg(S' \bigg( A_{\bm{I}}^{*^{\otimes k}} \otimes B_{\bm{O}}^{*^{\otimes k}}\bigg)^\dagger \dketbra{U^*}{U^*}^{\otimes k} \bigg( A_{\bm{I}}^{*^{\otimes k}} \otimes B_{\bm{O}}^{*^{\otimes k}}\bigg) \Bigg)
%		\Big(f(A)_P \otimes f(B)_F\Big)^\dagger \\
%		&=\Big(f(A)_P \otimes f(B)_F\Big) 
%\tr_{\bm{IO}}\Bigg(S' \dketbra{B^TU^*A}{B^TU^*A}^{\otimes k}  \Bigg)
%		\Big(f(A)_P \otimes f(B)_F\Big)^\dagger \nonumber \\
%&=\Big( f(A)_P \otimes f(B)_F\Big) 
%\Bigg(S' * \bigg( \dketbra{B^\dagger UA^*}{B^\dagger UA^*}^{\otimes k}\bigg)  \Bigg)
%		\Big(f(A)_P \otimes f(B)_F\Big)^\dagger .
	\end{align}
\normalsize
Hence, if we set $A=\id$ and $B=U$ we prove Eq.~\eqref{eq:S_is_covariant}.

	We now show that $S'$ is a valid general superchannel. For that, it is enough to show that for any non-signalling channel $C_{\bm{IO}}$ we have that $S'*C_{\bm{IO}}$ is a valid channel. Note that if $C_{\bm{IO}}$ is a non-signalling channel, 
\begin{equation}
C'_{\bm{IO}}:=\int \Bigg(\Big( A^{*{\otimes k}}_{\bm{I}} \otimes B^{*{\otimes k}}_{\bm{O}}\Big)^\dagger C_{\bm{IO}}^T \Big(A^{*{\otimes k}}_{\bm{I}} \otimes B^{*{\otimes k}}_{\bm{O}}\Big)\Bigg)\dif B
\end{equation} is also a non-signalling channel. We can then show that $S'*C_{\bm{IO}}$ is a valid channel by direct calculation:
\footnotesize
\begin{align}
\tr_F\Big(S'*C_{\bm{IO}}\Big) &= \tr_{\bm{IO}F} \Big( S' \; \id_P\otimes C_{\bm{IO}}^T\otimes \id_F\Big) \nonumber \\
&=\tr_{\bm{IO}F}\Bigg\{\int\hspace*{-2mm}\int \Big(f(A)_P \otimes A_{\bm{O}}^{*^{\otimes k}} \otimes B_{\bm{O}}^{*^{\otimes k}} \otimes f(B)_F\Big)
		S \cdot \nonumber \\
		& \qquad \qquad \qquad \qquad \qquad \qquad \qquad \cdot
		\Big( f(A)_P \otimes A_{\bm{I}}^{*^{\otimes k}} \otimes B_{\bm{O}}^{*^{\otimes k}} \otimes f(B)_F \Big)^\dagger	
		\; \id_P \otimes C_{\bm{IO}}^T \otimes \id_F	 \Bigg\}
	\dif A\dif B \nonumber \\
		&=\tr_{\bm{IO}F}\Bigg\{\int\hspace*{-2mm}\int \Big(f(A)_P \otimes A_{\bm{I}}^{*^{\otimes k}} \otimes B_{\bm{O}}^{*^{\otimes k}} \otimes \id_F\Big)
		S \cdot \nonumber \\
		& \qquad \qquad \qquad \qquad \qquad \qquad \qquad \cdot
		\Big( f(A)_P \otimes A_{\bm{I}}^{*^{\otimes k}} \otimes B_{\bm{O}}^{*^{\otimes k}} \otimes \id_F \Big)^\dagger	
		\; \id_P \otimes C_{\bm{IO}}^T \otimes \id_F	 \Bigg\}
	\dif A\dif B 	\nonumber \\
		&=\tr_{\bm{IO}F}\Bigg\{\int\hspace*{-2mm}\int \Big(f(A)_P \otimes \id_{\bm{I}}  \otimes \id_{\bm{O}} \otimes \id_F\Big)
		S \cdot \nonumber \\
		& \qquad \qquad \qquad \qquad \qquad \qquad \qquad \cdot
		\Big( f(A)_P \otimes \id_{\bm{I}} \otimes \id_{\bm{O}} \otimes \id_F \Big)^\dagger	
		\; \id_P \otimes C'_{\bm{IO}} \otimes \id_F	 \Bigg\}
	\dif A\dif B 	\nonumber \\
	&= \int\Big( f(A)_P \otimes \id_F\Big) \tr_{\bm{IO}F}\Big(S\; \id_P\otimes {C'}_{\bm{IO}}\otimes \id_F \Big) \Big(f(A)_P \otimes \id_F\Big)^\dagger \dif A \nonumber  \\
	&=\id_P.
\end{align}
\normalsize
	
	We finish the proof by showing that $S'$ is a parallel superchannel. To attain this goal, we first verity that the partial trace of $S'$ respects the commutation relation
	\begin{equation}
		\Big[\tr_F(S'),\, \id_P \otimes \id_{\bm{I}}^{\otimes k} \otimes B_{\bm{O}}^{*^{\otimes k}}\Big]=0
	\end{equation}
	for any unitary operator $B$. Hence, the superchannel $S'$ satisfies the hypothesis of Lemma\,\ref{lemma:bisio}.
\end{proof}
%%%%%%%%%%%%%%%%%%%%%%%%%%%%%%%%%%%%%%%%%%%%%%%%%%%%%%%%%%%%%%%%%%%%%%%%%%%%%%%%%%%%%
%%%%%%%%%%%%%%%%%%%%%%%%%%%%%%%%%%%%%% THEOREM %%%%%%%%%%%%%%%%%%%%%%%%%%%%%%%%%%%%%%
%%%%%%%%%%%%%%%%%%%%%%%%%%%%%%%%%%%%%%%%%%%%%%%%%%%%%%%%%%%%%%%%%%%%%%%%%%%%%%%%%%%%%
 
%%%%%%%%%%%%%%%%%%%%%%%%%%%%%%%%%%%%%%%%%%%%%%%%%%%%%%%%%%%%%%%%%%%%%%%%%%%%%%%%%%%%%%%%%%%%%%%%%%%%%%%%%%%%%%%%%
%%%%%%%%%%%%%%%%%%%%%%%%%%%%%%%%%%%%%%%%%%%%%%%%%% NEW SECTION %%%%%%%%%%%%%%%%%%%%%%%%%%%%%%%%%%%%%%%%%%%%%%%%%%
%%%%%%%%%%%%%%%%%%%%%%%%%%%%%%%%%%%%%%%%%%%%%%%%%%%%%%%%%%%%%%%%%%%%%%%%%%%%%%%%%%%%%%%%%%%%%%%%%%%%%%%%%%%%%%%%%
\section{Additional details for the anti-homomorphic case: $f(UV)=f(V)f(U)$} \label{app:fT_is_rep}
	When $f:\SU(d)\to\SU(d')$ is an anti-homomorphism, the performance operator
\begin{equation}
	\Omega:= \frac{1}{d^2}\int_{\text{Haar}} \dketbra{f(U)}{f(U)}_{PF} \otimes \dketbra{U^*}{U^*}^{\otimes k}_{\bm{IO}} \text{d} U
\end{equation} can be explicitly evaluated following steps analogous  to the case where $f$ is a homomorphism (see Appendix~\ref{app:f_is_rep}). In this case, we obtain 
\begin{equation}
	\Omega = \frac{1}{d^2}\sum_i \frac{\left(P_{\bm{O}P}^i\right)^* \otimes P^i_{\bm{I}F}}{d_i},
\end{equation}
	where $\{P_i\}_i$ is an orthonormal basis for the set of operators $P$ respecting $[P,U^{*^{\otimes k}}\otimes f(U)^T]=0$ for all $U\in\SU(d)$.
	In Appendix~\ref{app:explicit} we present an explicit basis $\{P^i\}_i$ for the case $f(U)=U^T$ and $f(U)=U^{-1}$.

	We now re-state and prove Lemma~\ref{lemma:fT_is_rep} from the main text. 
%%%%%%%%%%%%%%%%%%%%%%%%%%%%%%%%%%%%%%%%%%%%%%%%%%%%%%%%%%%%%%%%%%%%%%%%%%%%%%%%%%%%%
%%%%%%%%%%%%%%%%%%%%%%%%%%%%%%%%%%%%%% THEOREM %%%%%%%%%%%%%%%%%%%%%%%%%%%%%%%%%%%%%%
%%%%%%%%%%%%%%%%%%%%%%%%%%%%%%%%%%%%%%%%%%%%%%%%%%%%%%%%%%%%%%%%%%%%%%%%%%%%%%%%%%%%%
\setcounter{lemma}{0}
\begin{lemma} 
	Let $S\in\L(\H_P\otimes\H_{\bm{I}}\otimes\H_{\bm{O}}\otimes\H_F)$ be a parallel/sequential/general superchannel that transforms $k$ uses of a unitary operator $U$ into $f(U)$ with average fidelity $\mean{F}$. If $f(UV)=f(V)f(U)$, there exists a parallel/sequential/general superchannel $S'\in\L(\H_P\otimes\H_{\bm{I}}\otimes\H_{\bm{O}}\otimes\H_F)$ that transforms $k$ uses of a unitary operator $U$ into $f(U)$ with average fidelity $\mean{F}$ and respects the commutation relation
\begin{equation}
	\Big[S', f(A)_P^T \otimes B_{\bm{I}}^{*^{\otimes k}} \otimes A_{\bm{O}}^{*^{\otimes k}} \otimes f(B)^T_F\Big]=0,
\end{equation}
for any unitary operators $A,B$.

	In particular, the action of $S'$ on $\dketbra{U}{U}^{\otimes k}$ is described by its action on the identity channel $\dketbra{\id}{\id}^{\otimes k}$ via
\begin{equation} 
	S'*\dketbra{U}{U}^{\otimes k} = \Big(\id_P \otimes {f(U^T)}^T_F\Big)
	 \Big( S'*\dketbra{\id}{\id}^{\otimes k}_{\bm{IO}}\Big)
	 \Big(\id_P \otimes {f(U^T)}^T_F\Big)^\dagger .
\end{equation}
\end{lemma}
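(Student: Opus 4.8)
The plan is to reproduce, step for step, the argument that proves Proposition~\ref{prop:f_is_rep}, adapting each piece to the order‑reversing structure $f(UV)=f(V)f(U)$. The key observation is that, precisely because $f$ is an anti‑homomorphism, the assignment $(A,B)\mapsto W(A,B):=f(A)^T_P\otimes B^{*^{\otimes k}}_{\bm{I}}\otimes A^{*^{\otimes k}}_{\bm{O}}\otimes f(B)^T_F$ is a genuine unitary representation of $\SU(d)\times\SU(d)$: one has $f(A_1)^Tf(A_2)^T=(f(A_2)f(A_1))^T=f(A_1A_2)^T$ and $B_1^*B_2^*=(B_1B_2)^*$, so $W(A_1,B_1)W(A_2,B_2)=W(A_1A_2,B_1B_2)$ (and $f(\id)=\id$ because $f(\id)$ is unitary and idempotent). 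Given a parallel/sequential/general superchannel $S$ that realises the transformation with average fidelity $\mean{F}$, I would define the twirl
\begin{equation*}
	S':=\iint W(A,B)\,S\,W(A,B)^\dagger\,\dif A\,\dif B ,
\end{equation*}
which by invariance of the Haar measure commutes with every $W(A,B)$ — this is the stated commutation relation. To see that the fidelity is unchanged, I would combine Eqs.~\eqref{eq:f(u)T_commutation}--\eqref{eq:f(u)T_commutation2}: the factors $f(A)^T_P\otimes A^{*^{\otimes k}}_{\bm{O}}$ and $B^{*^{\otimes k}}_{\bm{I}}\otimes f(B)^T_F$ act on disjoint tensor factors and each commutes with $\Omega$, so $\Omega$ commutes with $W(A,B)$; cyclicity of the trace and Haar‑invariance then give $\tr(S'\Omega)=\tr(S\Omega)=\mean{F}$.

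For the covariance identity Eq.~\eqref{eq:S_is_covariant2}, I would run the same link‑product computation as in the homomorphic proof, but now with $P$ tied to $\bm{O}$ and $F$ tied to $\bm{I}$. Starting from $S'=W(A,B)S'W(A,B)^\dagger$ and linking with $\dketbra{U}{U}^{\otimes k}_{\bm{IO}}$, the $P$‑ and $F$‑conjugations $f(A)^T_P\otimes f(B)^T_F$ come out as an overall conjugation while the $\bm{I}$‑ and $\bm{O}$‑conjugations are absorbed into the Choi operator; using $(X_{\bm{I}}\otimes Y_{\bm{O}})\dket{U}=\dket{Y\,U\,X^T}$ together with $(B^{*})^T=B^\dagger$ one arrives at
\begin{equation*}
	S'*\dketbra{U}{U}^{\otimes k}_{\bm{IO}}=\big(f(A)^T_P\otimes f(B)^T_F\big)\big[S'*\dketbra{A^\dagger U B^*}{A^\dagger U B^*}^{\otimes k}\big]\big(f(A)^T_P\otimes f(B)^T_F\big)^\dagger .
\end{equation*}
Choosing $A=\id$ and $B=U^T$ gives $A^\dagger U B^*=U\,(U^T)^*=UU^\dagger=\id$, so the argument collapses to $\dketbra{\id}{\id}^{\otimes k}$ and leaves exactly $\id_P\otimes f(U^T)^T_F$ in front, which is Eq.~\eqref{eq:S_is_covariant2}.

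It then remains to verify that $S'$ stays in the same class. Since $W(A,B)$ is a product of unitaries acting \emph{locally} on the partition $P\,|\,I_1\,|\,O_1\,|\,\cdots\,|\,I_k\,|\,O_k\,|\,F$, conjugation by $W(A,B)$ commutes with every partial trace and every trace‑and‑replace ${}_i(\cdot)$ that occurs in the linear characterisations of parallel and sequential superchannels (Eqs.~\eqref{eq:parcond} and \eqref{eq:seqcond}): a local unitary on a traced‑out factor drops out of the trace, while on a retained factor it persists identically on both sides, and ${}_i(\cdot)$ is left invariant because the local unitary fixes $\id_i/d_i$; positivity and the trace normalisation are obviously preserved. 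Hence $W(A,B)SW(A,B)^\dagger$ satisfies the same constraint set as $S$, and the Haar average, being a convex combination inside that convex set, also does. For the general class I would instead argue as at the end of the proof of Proposition~\ref{prop:f_is_rep}: for any input channels $\{C_i\}$ the $\bm{I}$‑ and $\bm{O}$‑twirls can be pushed onto an averaged — still valid — input channel, the $P$‑conjugation factors out, and $\tr_F\!\big(S'*(C_1\otimes\cdots\otimes C_k)\big)$ reduces to $\int f(A)^T_P\,\id_P\,(f(A)^T_P)^\dagger\,\dif A=\id_P$.

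The step I expect to be the main obstacle is the covariance computation: one must track the transposes and complex conjugates threaded through the link product carefully so that the awkward‑looking $f(U^T)^T$ emerges correctly, and one must use the pairing $P\!-\!\bm{O}$, $F\!-\!\bm{I}$, which is exactly the opposite of the homomorphic case — a sign of the order reversal built into $f(UV)=f(V)f(U)$. The only other point that is not a one‑line check is preservation of the genuinely causally‑indefinite (general) class, which needs the extra averaging over the input channels rather than the direct local‑unitary argument used for the parallel and sequential classes.
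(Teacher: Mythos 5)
Your proposal is correct and follows essentially the same route as the paper: the Haar twirl with $W(A,B)=f(A)^T_P\otimes B^{*^{\otimes k}}_{\bm{I}}\otimes A^{*^{\otimes k}}_{\bm{O}}\otimes f(B)^T_F$, fidelity preservation from the commutation of $\Omega$ with $W(A,B)$, the identical link-product covariance computation with the substitution $A=\id$, $B=U^T$, and class preservation argued as in Proposition~\ref{prop:f_is_rep}. Your explicit local-unitary/convexity argument for why the parallel and sequential characterisations of Eqs.~\eqref{eq:parcond} and \eqref{eq:seqcond} are preserved is merely a spelled-out version of the step the paper defers to ``the same steps'' of that proposition, so no substantive difference remains.
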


\begin{proof}
	Since the performance operator respects the commutation relations of Eq.~\eqref{eq:f(u)T_commutation} and Eq.~\eqref{eq:f(u)T_commutation2}, for any superchannel $S$, we can define the ``Haar-twirled'' operator 
\begin{equation}
	S':=\int\hspace*{-2mm}\int \Big(f(A)_P^T \otimes B_{\bm{I}}^{*^{\otimes k}} \otimes A_{\bm{O}}^{*^{\otimes k}} \otimes f(B)^T_F\Big)
		S
		\Big( f(A)_P^T \otimes B_{\bm{I}}^{*^{\otimes k}} \otimes A_{\bm{O}}^{*^{\otimes k}} \otimes f(B)^T_F \Big)^\dagger  \dif A\dif B
\end{equation}	
	 which respects
\begin{equation} \label{eq:commutation_S_trans_rep}
	\Big[S', f(A)_P^T \otimes B_{\bm{I}}^{*^{\otimes k}} \otimes A_{\bm{O}}^{*^{\otimes k}} \otimes f(B)^T_F\Big]=0,
\end{equation}
for any unitary operators $A,B$ and
\begin{equation}
	\tr(S\Omega) = \tr(S'\Omega).
\end{equation}
The Eq.~\eqref{eq:commutation_S_trans_rep} implies that $S'$ is respects the covariant relation,
\footnotesize
	\begin{align}
S'*\dketbra{U}{U}&=\Bigg(\Big(f(A)_P^T \otimes B_{\bm{I}}^{*^{\otimes k}} \otimes A_{\bm{O}}^{*^{\otimes k}} \otimes f(B)^T_F\Big)
		S'
\Big( f(A)_P^T \otimes B_{\bm{I}}^{*^{\otimes k}} \otimes A_{\bm{O}}^{*^{\otimes k}} \otimes f(B)^T_F\Big )^\dagger \Bigg)
		*\dketbra{U}{U} \nonumber \\
&=\Big( f(A)^T_P \otimes f(B)^T_F\Big) 
\Bigg(S' * \bigg(
B_{\bm{I}}^{\dagger^{\otimes k}} \otimes A_{\bm{O}}^{\dagger^{\otimes k}} \dketbra{U}{U}^{\otimes k} B_{\bm{I}}^{{\otimes k}} \otimes A_{\bm{O}}^{{\otimes k}}  \bigg)  \Bigg)
\Big(f(A)^T_P \otimes f(B)^T_F\Big)^\dagger \nonumber \\
&=\Big( f(A)^T_P \otimes f(B)^T_F\Big) 
\Bigg(S' * \bigg( \dketbra{A^\dagger UB^*}{A^\dagger UB^*}^{\otimes k}\bigg)  \Bigg)
	\Big( f(A)^T_P \otimes f(B)^T_F\Big) ^\dagger 			
	\end{align}
\normalsize
Hence, if we set $A=\id$ and $B=U^T$ we obtain Eq.~\eqref{eq:S_is_covariant2}.

	Following the same steps used in the proof of Proposition\,\ref{prop:f_is_rep}, we can show that if $S$ is a general superchannel, $S'$ is a general superchannel. Also, if $S$ is a parallel/sequential superchannel, $S'$ is a parallel/sequential superchannel.
\end{proof}
%%%%%%%%%%%%%%%%%%%%%%%%%%%%%%%%%%%%%%%%%%%%%%%%%%%%%%%%%%%%%%%%%%%%%%%%%%%%%%%%%%%%%
%%%%%%%%%%%%%%%%%%%%%%%%%%%%%%%%%%%%%% THEOREM %%%%%%%%%%%%%%%%%%%%%%%%%%%%%%%%%%%%%%
%%%%%%%%%%%%%%%%%%%%%%%%%%%%%%%%%%%%%%%%%%%%%%%%%%%%%%%%%%%%%%%%%%%%%%%%%%%%%%%%%%%%%

%%%%%%%%%%%%%%%%%%%%%%%%%%%%%%%%%%%%%%%%%%%%%%%%%%%%%%%%%%%%%%%%%%%%%%%%%%%%%%%%%%%%%%%%%%%%%%%%%%%%%%%%%%%%%%%%%
%%%%%%%%%%%%%%%%%%%%%%%%%%%%%%%%%%%%%%%%%%%%%%%%%% NEW SECTION %%%%%%%%%%%%%%%%%%%%%%%%%%%%%%%%%%%%%%%%%%%%%%%%%%
%%%%%%%%%%%%%%%%%%%%%%%%%%%%%%%%%%%%%%%%%%%%%%%%%%%%%%%%%%%%%%%%%%%%%%%%%%%%%%%%%%%%%%%%%%%%%%%%%%%%%%%%%%%%%%%%%

\section{Proof of Thm.~\ref{thm:WCF}}
	We now re-state and prove Thm.~\ref{thm:WCF} from the main text.
%%%%%%%%%%%%%%%%%%%%%%%%%%%%%%%%%%%%%%%%%%%%%%%%%%%%%%%%%%%%%%%%%%%%%%%%%%%%%%%%%%%%%
%%%%%%%%%%%%%%%%%%%%%%%%%%%%%%%%%%%%%% THEOREM %%%%%%%%%%%%%%%%%%%%%%%%%%%%%%%%%%%%%%
%%%%%%%%%%%%%%%%%%%%%%%%%%%%%%%%%%%%%%%%%%%%%%%%%%%%%%%%%%%%%%%%%%%%%%%%%%%%%%%%%%%%%
\setcounter{theorem}{0}
\begin{theorem}
Let $S\in\L(\H_P\otimes\H_{\bm{I}}\otimes\H_{\bm{O}}\otimes\H_F)$ be a parallel/sequential/general superchannel that transforms $k$ uses of a unitary operator $U\in\SU(d)$ into $f(U)\in\SU(d')$ with average fidelity $\mean{F}$. If $f$ is a homomorphism, \ie, $f(UV)=f(U)f(V)$, or an anti-homomorphism, \ie, $f(UV)=f(V)f(U)$, there exists a parallel/sequential/general superchannel $S'\in\L(\H_P\otimes\H_{\bm{I}}\otimes\H_{\bm{O}}\otimes\H_F)$ that transforms $k$ uses of a unitary operator $U$ into $f(U)$ with worst-case fidelity $F_{\rm wc}=\mean{F}$.
\end{theorem}

\begin{proof}
	Proposition\,\ref{prop:f_is_rep} shows that when $f$ is a representation of a group and $S$ is a parallel/sequential/general superchannel that transforms $k$ uses of $U$ into $f(U)$ with average fidelity $\mean{F}$, there exists a parallel/sequential/general superchannel $S'$ that transforms $k$ uses of $U$ into $f(U)$ with average fidelity $\mean{F}$ and respects
\begin{equation}
	S'*\dketbra{U}{U}= \id_P\otimes f(U)_F \Big( S'*\dketbra{\id}{\id}_{\bm{IO}}^{\otimes k} \Big) \id_P \otimes f(U)_F^\dagger
\end{equation}
for every unitary operator $U\in\SU(d)$.
Hence, for the covariant superchannel $S'$, the fidelity between $S'*\dketbra{U}{U}^{\otimes k}$ and $\dketbra{f(U)}{f(U)}$ is independent of $U$ and the average coincides with the worst-case fidelity, that is
\begin{align}
F\Big(S'*\dketbra{U}{U}^{\otimes k},\dketbra{f(U)}{f(U)}\Big)&=
\dbra{f(U)}\bigg(\id_P\otimes f(U)_F \Big( S'*\dketbra{\id}{\id}_{\bm{IO}}^{\otimes k} \Big) \id_P\otimes f(U)^\dagger_F\bigg) \dket{f(U)} \nonumber \\
&=
\dbra{\id} \Big( S'*\dketbra{\id}{\id}_{\bm{IO}}^{\otimes k} \Big) \dket{\id} \nonumber  \\
&=\mean{F}.
\end{align}

The proof for the case where $f(U)^T$ forms a unitary representation follows analogous steps. Proposition\,\ref{lemma:fT_is_rep} shows that when $f(U)^T$ forms a representation of a group and $S$ is a parallel/sequential/general superchannel that transforms $k$ uses of $U$ into $f(U)$ with average fidelity $\mean{F}$, there exists a parallel/sequential/general superchannel $S'$ that transforms $k$ uses of $U$ into $f(U)$ with average fidelity $\mean{F}$ and respects
\begin{equation} 
	S'*\dketbra{U}{U}^{\otimes k} = \Big(f(U)^T_P \otimes \id_F\Big)
	 \Big( S'*\dketbra{\id}{\id}^{\otimes k}_{\bm{IO}}\Big)
	 \Big(f(U)^T_P \otimes \id_F\Big)^\dagger .
\end{equation}
Hence, for the covariant superchannel $S'$, the fidelity between $S'*\dketbra{U}{U}^{\otimes k}$ and $\dketbra{f(U)}{f(U)}$ is independent of $U$ and the average coincides with the worst-case fidelity, that is
\begin{align}
F\Big(S'*\dketbra{U}{U}^{\otimes k},\dketbra{f(U)}{f(U)}\Big)&=
\dbra{f(U)}\bigg(f(U)^T_P\otimes \id_F \Big( S'*\dketbra{\id}{\id}_{\bm{IO}}^{\otimes k} \Big) f(U)^*_P\otimes \id_F\bigg) \dket{f(U)} \nonumber \\
&=
\dbra{\id} \Big( S'*\dketbra{\id}{\id}_{\bm{IO}}^{\otimes k} \Big) \dket{\id} \nonumber \\
&=\mean{F}.
\end{align}
\end{proof}
%%%%%%%%%%%%%%%%%%%%%%%%%%%%%%%%%%%%%%%%%%%%%%%%%%%%%%%%%%%%%%%%%%%%%%%%%%%%%%%%%%%%%
%%%%%%%%%%%%%%%%%%%%%%%%%%%%%%%%%%%%%% THEOREM %%%%%%%%%%%%%%%%%%%%%%%%%%%%%%%%%%%%%%
%%%%%%%%%%%%%%%%%%%%%%%%%%%%%%%%%%%%%%%%%%%%%%%%%%%%%%%%%%%%%%%%%%%%%%%%%%%%%%%%%%%%%

%%%%%%%%%%%%%%%%%%%%%%%%%%%%%%%%%%%%%%%%%%%%%%%%%%%%%%%%%%%%%%%%%%%%%%%%%%%%%%%%%%%%%%%%%%%%%%%%%%%%%%%%%%%%%%%%%
%%%%%%%%%%%%%%%%%%%%%%%%%%%%%%%%%%%%%%%%%%%%%%%%%% NEW SECTION %%%%%%%%%%%%%%%%%%%%%%%%%%%%%%%%%%%%%%%%%%%%%%%%%%
%%%%%%%%%%%%%%%%%%%%%%%%%%%%%%%%%%%%%%%%%%%%%%%%%%%%%%%%%%%%%%%%%%%%%%%%%%%%%%%%%%%%%%%%%%%%%%%%%%%%%%%%%%%%%%%%%

\section{Proof of Thm.~\ref{theo:white_noise_cov} \label{app:wnc}}
	We now re-state and prove Thm.~\ref{theo:white_noise_cov} from the main text
%%%%%%%%%%%%%%%%%%%%%%%%%%%%%%%%%%%%%%%%%%%%%%%%%%%%%%%%%%%%%%%%%%%%%%%%%%%%%%%%%%%%%
%%%%%%%%%%%%%%%%%%%%%%%%%%%%%%%%%%%%%% THEOREM %%%%%%%%%%%%%%%%%%%%%%%%%%%%%%%%%%%%%%
%%%%%%%%%%%%%%%%%%%%%%%%%%%%%%%%%%%%%%%%%%%%%%%%%%%%%%%%%%%%%%%%%%%%%%%%%%%%%%%%%%%%%
\setcounter{theorem}{1}
\begin{theorem} 
	 Let $f:\mathcal{SU}(d)\to\mathcal{SU}(d)$ be a function respecting
\begin{itemize}
\item $f(UV)=f(U)f(V)$ for all $U,V\in \mathcal{SU}(d)$, or $f(UV)=f(V)f(U)$  $U,V\in \mathcal{SU}(d)$
\item $f(U^*)=f(U)^*$
\item For the Haar measure, the differential $\dif U$ is invariant under the substitution $\dif U\to \dif f(U)$
\end{itemize}	 
If $S$ be a parallel/sequential/general superchannel transforming $k$ uses of $U$ into $f(U)$,
there exists a parallel/sequential/general superchannel $S'$ such that
\begin{equation}
	S'* \dketbra{U}{U}^{\otimes k}_{\bm{IO}} = \eta \dketbra{f(U)}{f(U)}_{PF} + (1-\eta) \id_P\otimes \frac{\id_F}{d},
\end{equation}	 
where $\mean{F}=\eta + \frac{1-\eta}{d^2}$.
\end{theorem}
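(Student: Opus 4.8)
The plan is to reduce to the covariant superchannels already produced in the earlier sections and then show that, under the three extra hypotheses, covariance alone forces the output to have white-noise form. First I would apply Proposition~\ref{prop:f_is_rep} in the homomorphic case, or Lemma~\ref{lemma:fT_is_rep} in the anti-homomorphic case, to replace $S$ by a superchannel $S'$ of the same type (parallel/sequential/general) and the same average fidelity $\mean{F}=\tr(S'\Omega)$ which obeys the associated commutation relation and acts covariantly on the input. By Eq.~\eqref{eq:S_is_covariant} (resp.\ Eq.~\eqref{eq:S_is_covariant2}) this means $S'*\dketbra{U}{U}^{\otimes k}=\big(\id_P\otimes f(U)_F\big)\,C_0\,\big(\id_P\otimes f(U)_F\big)^\dagger$ with $C_0:=S'*\dketbra{\id}{\id}^{\otimes k}_{\bm{IO}}$; in the anti-homomorphic case Eq.~\eqref{eq:S_is_covariant2} produces the factor $\id_P\otimes f(U^T)^T_F$, but the identity $U^T=(U^*)^{-1}$, the anti-homomorphism relation $f(A^{-1})=f(A)^{-1}$, and the hypothesis $f(U^*)=f(U)^*$ together give $f(U^T)^T=f(U)$, so both cases take the same form. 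Since $\dketbra{\id}{\id}$ is the Choi operator of the identity channel, $C_0$ is the Choi operator of a genuine quantum channel $\H_P\to\H_F$: in particular $C_0\geq0$ and $\tr_F(C_0)=\id_P$.

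The next step is to push the symmetry of $S'$ down to $C_0$. Specialising the commutation relation of Proposition~\ref{prop:f_is_rep} (resp.\ Lemma~\ref{lemma:fT_is_rep}) to $B=A^*$ and using $f(A^*)=f(A)^*$, the unitary acting on $\H_{\bm{I}}\otimes\H_{\bm{O}}$ becomes $A^{*\otimes k}_{\bm{I}}\otimes A^{\otimes k}_{\bm{O}}$ (resp.\ $A^{\otimes k}_{\bm{I}}\otimes A^{*\otimes k}_{\bm{O}}$), which fixes $\dketbra{\id}{\id}^{\otimes k}_{\bm{IO}}$ slot by slot because $(A^*_I\otimes A_O)\dket{\id}=\dket{\id}$ (resp.\ $(A_I\otimes A^*_O)\dket{\id}=\dket{\id}$). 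Writing $C_0=\tr_{\bm{IO}}\!\big(S'\,[\id_{PF}\otimes\dketbra{\id}{\id}^{\otimes k}_{\bm{IO}}]\big)$, replacing $S'$ by its conjugate under this fixed unitary (allowed since $S'$ commutes with it), and using cyclicity of the partial trace over $\H_{\bm{I}}\otimes\H_{\bm{O}}$, the $\bm{IO}$-factors annihilate against the invariant $\dketbra{\id}{\id}^{\otimes k}$ and only the residual $P$- and $F$-side factors survive; this yields $[C_0,\,V_P\otimes V^*_F]=0$ for every $V\in\{f(A):A\in\SU(d)\}$.

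The crux is the third hypothesis. The invariance $\dif U\to\dif f(U)$ says the push-forward of the Haar measure by $f$ is again Haar, so $f(\SU(d))$ is a closed subgroup supporting the Haar measure, hence $f$ is surjective (in fact an automorphism or anti-automorphism of $\SU(d)$). Therefore $\{f(A):A\in\SU(d)\}=\SU(d)$ and $[C_0,\,V_P\otimes V^*_F]=0$ for \emph{all} $V\in\SU(d)$. The representation $V\mapsto V\otimes V^*$ of $\SU(d)$ decomposes as the trivial representation plus the adjoint, each with multiplicity one, so Schur's lemma identifies its commutant with the two-dimensional span of $\dketbra{\id}{\id}$ (which lies in it since $(V_P\otimes V^*_F)\dket{\id}=\dket{\id}$) and $\id_{PF}$. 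Hence $C_0=\eta\,\dketbra{\id}{\id}+\mu\,\id_{PF}$, and $\tr_F(C_0)=\id_P$ forces $\mu=(1-\eta)/d$, i.e.\ $C_0=\eta\,\dketbra{\id}{\id}+(1-\eta)\,\id_P\otimes\frac{\id_F}{d}$. Substituting this into the covariance relation and using $(\id_P\otimes f(U)_F)\dket{\id}=\dket{f(U)}$ and the unitary invariance of $\id_P\otimes\id_F/d$ gives the claimed $S'*\dketbra{U}{U}^{\otimes k}=\eta\,\dketbra{f(U)}{f(U)}_{PF}+(1-\eta)\,\id_P\otimes\frac{\id_F}{d}$; finally $\mean{F}=\tr(S'\Omega)$ is $U$-independent by covariance and equals $\frac{1}{d^2}\dbra{\id}C_0\dket{\id}=\eta+\frac{1-\eta}{d^2}$, in agreement with Eq.~\eqref{eq:white_is_fidelity}.

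The main obstacle is the symmetry-transfer step: it requires careful bookkeeping of transposes and complex conjugates in the Choi/link-product formalism, so that $\dketbra{\id}{\id}^{\otimes k}_{\bm{IO}}$ is recognised as invariant under precisely the restricted unitary family and the leftover $\H_P$- and $\H_F$-side unitaries are moved correctly through the partial trace. The measure-invariance hypothesis is genuinely needed rather than cosmetic: it is exactly what makes $\{f(A):A\in\SU(d)\}$ all of $\SU(d)$, so that Schur's lemma pins the commutant to two dimensions; without it $C_0$ could carry further components and the white-noise form would fail.
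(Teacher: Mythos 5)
Your proof is correct and follows the same overall strategy as the paper: pass to the covariant superchannel $S'$ of Proposition~\ref{prop:f_is_rep} (resp.\ Lemma~\ref{lemma:fT_is_rep}), analyse $C_0=S'*\dketbra{\id}{\id}^{\otimes k}_{\bm{IO}}$ under the residual symmetry $f(A)_P\otimes f(A)^*_F$, conclude it is a mixture of $\dketbra{\id}{\id}$ and the identity, and then propagate to general $U$ by covariance. The one genuine difference lies in how the third hypothesis is deployed: the paper Haar-averages the invariance relation at $U=\id$, $B=A^*$ and uses the substitution $\dif A\to\dif f(A)$ together with $f(A^*)=f(A)^*$ to recognise the integral as the isotropic twirling map, whose output is automatically of white-noise form; you instead extract the commutation relation $[C_0,\,V_P\otimes V^*_F]=0$ for $V\in f(\SU(d))$, argue from measure invariance that $f(\SU(d))$ is all of (or at least dense in) $\SU(d)$, and invoke Schur's lemma for the trivial-plus-adjoint decomposition of $V\otimes V^*$, fixing the remaining coefficient by $\tr_F(C_0)=\id_P$. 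The two routes are mathematically equivalent, but yours makes explicit \emph{why} the hypothesis is needed (it forces the effective twirl group to be the full $\SU(d)$), whereas the paper's substitution argument is shorter and avoids any discussion of surjectivity or continuity of $f$. A further small merit of your write-up is that you explicitly handle the anti-homomorphic case, showing $f(U^T)^T=f(U)$ from $f(A^{-1})=f(A)^{-1}$ and $f(U^*)=f(U)^*$ so that the covariance factor in Eq.~\eqref{eq:S_is_covariant2} matches the homomorphic one; the paper's appendix proof spells out only the homomorphic case and leaves this bookkeeping implicit (note also that in this case your commutant condition holds with $V=f(A)^T$ rather than $f(A)$, which is harmless since transposition is a bijection of $\SU(d)$).
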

\begin{proof}
	In proposition\,\ref{prop:f_is_rep} we show that if $S$ is a parallel/sequential/general superchannel transforming $k$ uses of $U$ into $f(U)$ with average fidelity $\mean{F}$ and $f(AB)=f(A)f(B)$ for every unitary operator $A\in \SU(d)$, there exists a parallel/sequential/general superchannel $S'$ transforming $k$ uses of $U$ into $f(U)$ with average fidelity $\mean{F}$ which respects
\begin{align} \label{eq:cov_homo}
		S'*\dketbra{U}{U}^{\otimes k} 
=\Big( f(A)_P \otimes f(B)_F\Big) 
\Bigg(S' * \bigg( \dketbra{B^\dagger UA^*}{B^\dagger UA^*}^{\otimes k}\bigg)  \Bigg)
		\Big(f(A)_P \otimes f(B)_F\Big)^\dagger,
	\end{align}
for every unitary operator $A,B\in\SU(d)$.
Let us first consider the $U=\id$ case. If we set $B=A^*$, Eq.\eqref{eq:cov_homo} reads as
\begin{align} \label{eq:cov_homo2}
		S'*\dketbra{\id}{\id}^{\otimes k} = \Big( f(A)_P \otimes f(A^*)_F\Big) 
\Bigg(S' * \dketbra{\id}{\id}^{\otimes k} \Bigg)
		\Big( f(A)_P \otimes f(A^*)_F\Big) ^\dagger.
	\end{align}
Since Eq.~\ref{eq:cov_homo2} holds for every unitary operator $A$, we can then take the Haar average to obtain
\begin{align}
		S'*\dketbra{\id}{\id}^{\otimes k} =& \int \Big( f(A)_P \otimes f(A^*)_F\Big) 
\Bigg(S' * \dketbra{\id}{\id}^{\otimes k} \Bigg)
		\Big( f(A)_P \otimes f(A^*)_F\Big) ^\dagger \dif A \nonumber \\
=& \int \Big( f(A)_P \otimes f(A^*)_F\Big) 
\Bigg(S' * \dketbra{\id}{\id}^{\otimes k} \Bigg)
		\Big( f(A)_P \otimes f(A^*)_F\Big) ^\dagger \dif f(A) \nonumber \\
		=& \int \Big( f(A)_P \otimes f(A)^*_F\Big) 
\Bigg(S' * \dketbra{\id}{\id}^{\otimes k} \Bigg)
		\Big( f(A)_P \otimes f(A)^*_F\Big) ^\dagger \dif f(A) \nonumber \\
		=& \int \Big( A_P \otimes A^*_F\Big) 
\Bigg(S' * \dketbra{\id}{\id}^{\otimes k} \Bigg)
		\Big( A_P \otimes fA^*_F\Big) ^\dagger \dif A \label{eq:isotropic_twirling} \\
	&= \eta \dketbra{\id}{\id}_{PF} + (1-\eta) \frac{\id_P}{d}\otimes \id_F,
\end{align}
where the integral in equation Eq.~\eqref{eq:isotropic_twirling} is the Isotropic twirling map \cite{horodecki97reduction}, which transform any operator into a linear combination between the identity operator $\id$  and $\dketbra{\id}{\id}$.
Now, since 
\begin{equation} 
	S'*\dketbra{U}{U}^{\otimes k} = \Big(\id_P \otimes f(U)_F\Big)
	 \Big( S'*\dketbra{\id}{\id}^{\otimes k}_{\bm{IO}}\Big)
	 \Big(\id_P \otimes f(U)_F\Big)^\dagger,
\end{equation}
we have that
\begin{align} 
S'*\dketbra{U}{U}^{\otimes k} &=
\Big(\id_P \otimes f(U)_F\Big)
\Bigg( \eta \dketbra{\id}{\id}_{PF} + (1-\eta) \id_P\otimes \frac{\id_F}{d}\Bigg)
\Big(\id_P \otimes f(U)_F\Big)^\dagger \nonumber \\
&= \eta \dketbra{f(U)}{f(U)}_{PF} + (1-\eta) \id_P\otimes \frac{\id_F}{d}.
\end{align}
Now, from direct calculation (as in Eq.~\eqref{eq:white_is_fidelity}) we see that $\mean{F}=\eta + \frac{1-\eta}{d^2}$.
\end{proof}
%%%%%%%%%%%%%%%%%%%%%%%%%%%%%%%%%%%%%%%%%%%%%%%%%%%%%%%%%%%%%%%%%%%%%%%%%%%%%%%%%%%%%
%%%%%%%%%%%%%%%%%%%%%%%%%%%%%%%%%%%%%% THEOREM %%%%%%%%%%%%%%%%%%%%%%%%%%%%%%%%%%%%%%
%%%%%%%%%%%%%%%%%%%%%%%%%%%%%%%%%%%%%%%%%%%%%%%%%%%%%%%%%%%%%%%%%%%%%%%%%%%%%%%%%%%%%

%%%%%%%%%%%%%%%%%%%%%%%%%%%%%%%%%%%%%%%%%%%%%%%%%%%%%%%%%%%%%%%%%%%%%%%%%%%%%%%%%%%%%%%%%%%%%%%%%%%%%%%%%%%%%%%%%
%%%%%%%%%%%%%%%%%%%%%%%%%%%%%%%%%%%%%%%%%%%%%%%%%% NEW SECTION %%%%%%%%%%%%%%%%%%%%%%%%%%%%%%%%%%%%%%%%%%%%%%%%%%
%%%%%%%%%%%%%%%%%%%%%%%%%%%%%%%%%%%%%%%%%%%%%%%%%%%%%%%%%%%%%%%%%%%%%%%%%%%%%%%%%%%%%%%%%%%%%%%%%%%%%%%%%%%%%%%%%
\section{Explicit evaluation of performance operators} \label{app:explicit}
\subsection{The commutant of $U^{\otimes k}\otimes U$} \label{app:UU}
The set of operators $P$ respecting the relation
\begin{equation}
	[P,U^{\otimes k}\otimes U= 0], \quad \quad \forall U\in \SU(d)
\end{equation}
is called the commutant of $U^{\otimes k}\otimes U$ and form a linear subspace of $\L(\mathbb{C}_d^{\otimes (k+1)})$. One possible way to obtain a basis for this linear space is to characterise the projectors onto inequivalent irreducible representations of $\SU(d)$ in $U^{\otimes k}\otimes U$ and all isometries between equivalent representations. Methods to obtain such basis have been widely studied in the literature ~\cite{weylbook,harrisBook,eggeling01,alcock16}. We recommend Ref.~\cite{alcock16} for a simple and direct way to find an orthogonal basis for the commutant of $U^{\otimes k}\otimes U$ for any dimension $d$ and any $k\in\mathbb{N}$.

	A set of operators which spans the commutant of $U^{\otimes k}\otimes U$ is the set of all $(k+1)!$ permutation operators with  $(k+1)$ elements. Let $\pi:\{1,\ldots,k+1\}\to\{1,\ldots,k+1\}$ be a permutation function, a permutation operator $V_\pi \in \mathbb{C}_d^{\otimes (k+1)}$ is defined as
\begin{equation}
	V_\pi \Big(\ket{i_1}\otimes\ket{i_2}\otimes \ldots  \otimes \ket{i_{k+1}}\Big) =
	\ket{i_{\pi^{-1}(1)}}\otimes\ket{i_{\pi^{-1}(2)}}\otimes \ldots \otimes \ket{i_{\pi^{-1}(k+1)}}.
\end{equation}
	Note that permutation operators are not orthogonal, and in general, they form an ``overcomplete basis''. That is, despite spanning the space of the commutant of $U^{\otimes k}\otimes U$, the set $\{V_\pi\}_\pi$ may have linearly dependent operators. We remark however that we can always obtain an orthonormal basis from a set of operators with span a desired linear subspace, in particular, this can be done via the Gram-Schmidt orthogonalisation algorithm.

For $k=1$, a convenient orthogonal basis for the commutant of $U\otimes U$ is given by the projector onto the symmetric space and a projection onto antisymmetric space,
\begin{align}
	P^\text{sym}:= \frac{\id+F}{2}, \quad P^\text{asym}:=\frac{\id-F}{2}
\end{align}
where $F:=\sum_{ij}\ketbra{ij}{ji}$ is the flip operator.

	For $k=2$, a convenient basis for the commutant of $U\otimes U\otimes U$ for any dimension $d$ is given by~\cite{eggeling01}:
\begin{align}
	R^+:=&\frac{1}{6}\Big(  \id + V_{(12)} + V_{(23)} + V_{(31)} + V_{(123)} + V_{(321)} 	\Big), \nonumber \\
	R^-:=&\frac{1}{6}\Big(  \id - V_{(12)} - V_{(23)} - V_{(31)} + V_{(123)} + V_{(321)} 	\Big), \nonumber \\
	R^0:=&\frac{1}{3}\Big( 2\id -  V_{(123)} - V_{(321)} 	\Big), \nonumber \\
	R^1:=&\frac{1}{3}\Big(  2V_{(23)} - V_{(31)} - V_{(12)}	\Big), \nonumber \\
	R^2:=&\frac{1}{\sqrt{3}}\Big( V_{(12)} - V_{(31)} 	\Big), \nonumber \\
	R^3:=&\frac{i}{\sqrt{3}}\Big(  V_{(123)} - V_{(321)} 	\Big),
\end{align}
where the permutations $\pi$ are given in the cycle notation. The operators $R^+$, $R^-$, and $R^0$ are projectors onto the irreducible representations of $U\otimes U \otimes U$ and the operators $R^1, R^2, R^3$ are self adjoint and respect the Pauli operators relation $R^1R^2=iR^3$, $R^2R^3=iR^1$, and $R^3R^1=iR^2$ .

\subsection{The commutant of $U^{*^{\otimes k}}\otimes U$} \label{app:UstarU}
	The commutant $U^{*^{\otimes k}}\otimes U$ can also be characterised by group theoretical results. One possible way to obtain a basis for this linear space is to characterise all projectors onto inequivalent irreducible representations of $\SU(d)$ in $U^{*^{\otimes k}}\otimes U$ and all isometries between equivalent representations. Methods to obtain such basis have also been widely studied in the literature ~\cite{harrisBook,eggeling01,mozrzymas18b,alcock18}. In particular, we recommend Ref.~\cite{alcock18} for a simple and direct way to find an orthogonal basis for the commutant of  $U^{*^{\otimes k}}\otimes U$ for any dimension $d$ and any $k\in\mathbb{N}$.

	A set of operators which spans the commutant of  $U^*\otimes {U}^{\otimes k}$  is the set of all permutations $\pi$ of  $(k+1)$ elements up to a partial transposition on the first system. That is, all $(k+1)!$ operators of the form $V_\pi^{T_1}$. However, these operators are not orthonormal and are not guaranteed to be all linearly independent.
 	
	For $k=1$, a convenient orthogonal basis for the commutant of $U^*\otimes U$ is given by the projector onto the maximally entangled state and a projection onto its orthogonal complement:
\begin{align}
P^+:= \frac{\dketbra{\id}{\id}}{d} \quad P^\perp:=\id - \frac{\dketbra{\id}{\id}}{d} .
\end{align}

	For $k=2$, a convenient basis for the commutant of $U^*\otimes U^*\otimes U$  for any dimension $d$ is given by~\cite{eggeling01}:
\begin{align}
	S^+:=& \frac{\id+V}{2}\left(  \id -\frac{2X}{d+1} \right) 	\frac{\id+V}{2}, \nonumber \\
	S^-:=& \frac{\id-V}{2}\left(  \id -\frac{2X}{d-1} \right) 	\frac{\id-V}{2}, \nonumber \\
	S^0:=& \frac{1}{d^2-1}\Big(  d(X+VXV)-(XV+VX) \Big) 	, \nonumber \\
	S^1:=& \frac{1}{d^2-1}\Big(  d(XV+VX)-(X+VXV) \Big) 	, \nonumber \\
	S^2:=& \frac{1}{\sqrt{d^2-1}}\Big(  X-VXV \Big) 	, \nonumber \\
	S^3:=& \frac{i}{\sqrt{d^2-1}}\Big(  XV-VX \Big) 	, 
\end{align}
where $X:=\id_1\otimes\dketbra{\id}{\id}_{23}$ and $V:=F_{12} \otimes \id_3$ and $F:=\sum_{ij}\ketbra{ij}{ji}$ is the flip operator. The operators $S^+$, $S^-$, and $S^0$ are projectors onto the irreducible representations of $U^*\otimes U^*\otimes U$ and the operators $S^1, S^2, S^3$ are self adjoint and respect the Pauli operators relation $S^1S^2=iS^3$, $S^2S^3=iS^1$, and $S^3S^1=iS^2$. See also the Appendix B.3.6 of Ref.~\cite{bisio16} for an explicit orthonormal basis which is directly based on isometries between equivalent representations.
\subsection{Unitary complex conjugation}
	Since unitary complex conjugation is a homomorphism, the results of Appendix~\ref{app:f_is_rep} ensures that its corresponding performance operator is
\begin{equation}
	\Omega = \frac{1}{d^2}\sum_i \frac{\left(P_{\bm{I}P}^i\right)^* \otimes P^i_{\bm{O}F}}{d_i},
\end{equation}
where $\{P^i\}$ is an orthonormal basis for operators $P$ respecting $[P,U^{*^{\otimes (k+1)}}]=0$ for every $U\in\SU(d)$, or equivalently,  $[P,U^{\otimes (k+1)}]=0$ for every $U\in\SU(d)$ . We can then use the basis presented in Appendix~\ref{app:UU} to obtain explicit operators for $k=1$ and $k=2$.

For $k=1$ we have
	 \begin{equation}
	 	\Omega_{\text{conj},k=1} =
\frac{1}{d^2} \left[  \frac{ P^\text{sym}_{{I}P} \otimes P^\text{sym}_{OF}}   {(d^2+d)/2} +  
 \frac{ P^\text{asym}_{{I}P} \otimes P^\text{sym}_{{O}F}}   {(d^2-d)/2} \right].
	 \end{equation}
For $k=2$ we have	
\small
	 \begin{equation}
	 	\Omega_{\text{conj},k=2} = \frac{1}{d^2} 
\left[\frac{R^+_{\bm{I}P}\otimes R^+_{\bm{O}F}} {\tr(R^+)}
+ \frac{R^-_{\bm{I}P}\otimes R^-_{\bm{O}F}} {\tr(R^-)}
+ \frac{R^0_{\bm{I}P}\otimes R^0_{\bm{O}F}} {\tr(R^0)}
+ \frac{R^1_{\bm{I}P}\otimes R^1_{\bm{O}F}} {\tr(R^1R^1)}
+ \frac{R^2_{\bm{I}P}\otimes R^2_{\bm{O}F}} {\tr(R^2R^2)}
- \frac{R^3_{\bm{I}P}\otimes R^3_{\bm{O}F}} {\tr(R^3R^3)}
\right] .
	 \end{equation}
\normalsize

\subsection{Unitary transposition}
	Since unitary transposition is an anti-homomorphism, the results of Appendix~\ref{app:fT_is_rep} ensures that its corresponding performance operator is
\begin{equation}
	\Omega = \frac{1}{d^2}\sum_i \frac{\left(P_{\bm{O}P}^i\right)^* \otimes P^i_{\bm{I}F}}{d_i},
\end{equation}
where $\{P^i\}$ is an orthonormal basis for operators $P$ respecting $[P,U^{*^{\otimes k}}\otimes U]=0$ for every $U\in\SU(d)$. We can then use the basis presented in Appendix~\ref{app:UstarU} to obtain explicit operators for $k=1$ and $k=2$.

For $k=1$ we have
	 \begin{equation}
	 	\Omega_{\text{trans},k=1} =
\frac{1}{d^2} \left[  \frac{ \frac{\dketbra{\id}{\id}}{d}_{OP} \otimes \frac{\dketbra{\id}{\id}}{d}_{IF}}   {1} +  
\frac{ \left(\id-\frac{\dketbra{\id}{\id}}{d}\right)_{OP} \otimes \left(\id-\frac{\dketbra{\id}{\id}}{d}\right)_{{I}F} } {d-1} \right].
	 \end{equation}
For $k=2$ we have	
\small
	 \begin{equation}
	 	\Omega_{\text{trans},k=2} = \frac{1}{d^2} 
\left[\frac{S^+_{\bm{O}P}\otimes S^+_{\bm{I}F}} {\tr(S^+)}
+ \frac{S^-_{\bm{O}P}\otimes S^-_{\bm{I}F}} {\tr(S^-)}
+ \frac{S^0_{\bm{O}P}\otimes S^0_{\bm{I}F}} {\tr(S^0)}
+ \frac{S^1_{\bm{O}P}\otimes S^1_{\bm{I}F}} {\tr(S^1S^1)}
+ \frac{S^2_{\bm{O}P}\otimes S^2_{\bm{I}F}} {\tr(S^2S^2)}
- \frac{S^3_{\bm{O}P}\otimes S^3_{\bm{I}F}} {\tr(S^3S^3)}
\right] .
	 \end{equation}

\normalsize

\subsection{Unitary inversion}
	Since unitary inversion is an anti-homomorphism, the results of Appendix~\ref{app:fT_is_rep} ensures that its corresponding performance operator is
\begin{equation}
	\Omega = \frac{1}{d^2}\sum_i \frac{\left(P_{\bm{O}P}^i\right)^* \otimes P^i_{\bm{I}F}}{d_i},
\end{equation}
where $\{P^i\}$ is an orthonormal basis for operators $P$ respecting $[P,U^{\otimes (k+1)}]=0$ for every $U\in\SU(d)$. We can then use the basis presented in Appendix~\ref{app:UU} to obtain explicit operators for $k=1$ and $k=2$.

For $k=1$ we have
%	 \begin{equation}
%	 	\Omega_{\text{trans},k=1} =
%\frac{1}{d^2} \left[  \frac{ (\id+F)_{\bm{O}P} \otimes (\id+F)_{\bm{I}F}}   {d^2+d} +  
% \frac{ (\id-F)_{\bm{O}P} \otimes (\id-F)_{\bm{I}F}}   {d^2-d} \right].
%	 \end{equation}
	 \begin{equation}
	 	\Omega_{\text{trans},k=1} =
\frac{1}{d^2} \left[  \frac{ P^\text{sym}_{OP} \otimes P^\text{sym}_{IF}}   {(d^2+d)/2} +  
 \frac{ P^\text{asym}_{OP} \otimes P^\text{sym}_{IF}}   {(d^2-d)/2} \right].
	 \end{equation}
For $k=2$ we have	
\small
	 \begin{equation}
	 	\Omega_{\text{trans},k=2} = \frac{1}{d^2} 
\left[\frac{R^+_{\bm{O}P}\otimes R^+_{\bm{I}F}} {\tr(R^+)}
+ \frac{R^-_{\bm{O}P}\otimes R^-_{\bm{I}F}} {\tr(R^-)}
+ \frac{R^0_{\bm{O}P}\otimes R^0_{\bm{I}F}} {\tr(R^0)}
+ \frac{R^1_{\bm{O}P}\otimes R^1_{\bm{I}F}} {\tr(R^1R^1)}
+ \frac{R^2_{\bm{O}P}\otimes R^2_{\bm{I}F}} {\tr(R^2R^2)}
- \frac{R^3_{\bm{O}P}\otimes R^3_{\bm{I}F}} {\tr(R^3R^3)}
\right] .
	 \end{equation}
\normalsize
%%%%%%%%%%%%%%%%%%%%%%%%%%%%%%%%%%%%%%%%%%%
%%%%%%%%%%% START BIBLIOGRAPHY %%%%%%%%%%%%
%%%%%%%%%%%%%%%%%%%%%%%%%%%%%%%%%%%%%%%%%%%

\nocite{apsrev42Control} 
\bibliographystyle{0_MTQ_apsrev4-2_corrected}
\bibliography{0_MTQ_bib.bib}
%\begin{thebibliography}{30}
%\end{thebibliography}
\end{document}